\documentclass[a4paper,twocolumn,11pt]{quantumarticle}
\pdfoutput=1
\usepackage[utf8]{inputenc}
\usepackage[english]{babel}
\usepackage[T1]{fontenc}

\usepackage[dvips]{graphicx}
\usepackage{bm, amsmath,amssymb,amsthm,mathrsfs,amsfonts,dsfont}
\usepackage{ascmac}
\usepackage{epsfig}
\usepackage{enumerate}
\usepackage{color}
\usepackage{graphicx}
\usepackage{algorithm}
\usepackage{algorithmic}
\usepackage{comment}
\usepackage{appendix}
\usepackage{here}
\usepackage{tabularx}
\usepackage{dcolumn}
\usepackage[caption=false]{subfig}
\usepackage{adjustbox}
\usepackage[hidelinks]{hyperref}
\usepackage{bookmark}
\usepackage{multirow}
\usepackage{dsfont}
\usepackage{hyperref}
\usepackage[numbers]{natbib}

\newcommand\scalemath[2]{\scalebox{#1}{\mbox{\ensuremath{\displaystyle #2}}}}

\newtheorem{claim}{Claim}

\iftrue
  \usepackage{marginnote}
\fi

\begin{document}


\title{Quantum Error Suppression with Subgroup Stabilisation}


\author{Bo Yang}
\email{Bo.Yang@lip6.fr}
\affiliation{Laboratorie d'Informatique de Paris 6, Centre National de la Recherche Scientifique, Sorbonne Université, 4 place Jussieu, 75005 Paris, France}
\orcid{0000-0001-8266-7690}

\author{Elham Kashefi}
\email{Elham.Kashefi@lip6.fr}
\affiliation{Laboratorie d'Informatique de Paris 6, Centre National de la Recherche Scientifique, Sorbonne Université, 4 place Jussieu, 75005 Paris, France}
\affiliation{School of Informatics, University of Edinburgh, 10 Crichton Street, Edinburgh EH8 9AB, United Kingdom}
\orcid{0000-0001-6280-9604}

\author{Dominik Leichtle}
\email{dominik.leichtle@ed.ac.uk}
\affiliation{School of Informatics, University of Edinburgh, 10 Crichton Street, Edinburgh EH8 9AB, United Kingdom}
\orcid{0000-0002-4882-889X}

\author{Harold Ollivier}
\email{harold.ollivier@inria.fr}
\affiliation{Institut National de Recherche en Informatique et en Automatique, 2 Rue Simone Iff, 75012 Paris, France}


\begin{abstract} \label{sec:abstract}
Quantum state purification is the functionality that, given multiple copies of an unknown state, outputs a state with increased purity.
This will be an essential building block for near- and middle-term quantum ecosystems before the availability of full fault tolerance, where one may want to suppress errors not only in expectation values but also in quantum states.
We propose an effective state purification gadget with a moderate quantum overhead by projecting $M$ noisy quantum inputs to their symmetric subspace defined by a set of projectors forming a symmetric subgroup with order $M$.
Our method, applied in every short evolution over $M$ redundant copies of noisy states, can suppress both coherent and stochastic errors by a factor of $1/M$, respectively.
This reduces the circuit implementation cost $M$ times smaller than the state projection to the full symmetric subspace proposed by Barenco et al. more than two decades ago.
We also show that our gadget purifies the depolarised inputs with probability $p$ to asymptotically $O\left(p^{2}\right)$ with an optimal choice of $M$ when $p$ is small.
The sampling cost scales $O\left(p^{-1}\right)$ for small $p$, which is also shown to be asymptotically optimal.
Our method provides flexible choices of state purification depending on the hardware restrictions before fully fault-tolerant computation is available.
\end{abstract}

\maketitle


\section{Introduction \label{sec:introduction}}

\par
The practical execution of quantum algorithms with notable advantages over classical computers~\cite{shor1994algorithms, shor1999polynomial, grover1996fast, harrow2009quantum} may require fault-tolerant quantum computation (FTQC), which can be achieved with quantum error correction (QEC) codes by encoding quantum information into redundant code space to detect and correct errors during the computation~\cite{nielsen2010quantum, lidar2013quantum, roffe2019quantum, horsman2012surface, fowler2012surface, fowler2018low}.
The FTQC procedure, including logical gate operations and error correction, is performed with $O\left(\mathrm{poly}\log\left(N\right)\right)$ times of overhead to the polynomial time computation with problem size $N$ and under the bounded failure probability in each physical gate operation.
Although recent advances in quantum hardware have reached a juncture where the break-even fidelity of logical qubits encoded by certain QEC codes can be witnessed~\cite{ni2023beating, gupta2024encoding}, the realisation of practical FTQC remains elusive since the size of current physical devices is still small and physical operations on them are still too noisy.

\par
Instead, quantum error mitigation (QEM)~\cite{temme2017error, cai2023quantum, koczor2021exponential, huggins2021virtual, yang2022efficient, yang2023dual} attached to hybrid quantum-classical approaches ~\cite{cerezo2021variational, endo2021hybrid, peng2020simulating, yuan2021quantum, harada2023noise, eddins2022doubling} has become mainstream for harnessing the abilities of contemporary and near-term quantum devices.
However, since the QEM methods restore noiseless measurement results (i.e. expectation values) through classical pre- and postprocessing, they are targeted for BQP computation. 
Suppressing errors during sampling problems will require quantum pre- and postprocessing.
Besides, a wide class of QEM methods come with an exponential sampling cost to the desired estimation bias or to the circuit depth~\cite{takagi2022fundamental, takagi2023universal, tsubouchi2023universal, quek2022exponentially}, revealing their limitations in scalability.

\par
The aforementioned issues highlight the demand for suppressing errors in quantum states, particularly to address sampling problems.
In this line, the state purification methods by post-selection find a unique position as being capable of improving the quality of the quantum state before it is sampled.
This scheme suppresses errors for quantum outputs consuming quantum resources in accord with QEM, which is advantageous for quantum computers with small scales before the arrival of full FTQC.
The early works can be found over two decades ago~\cite{bennett1996purification, cirac1999optimal, keyl2001rate, barenco1997stabilization}, sharing the idea of taking noisy quantum states and returning a purified quantum state by state projection and post-selection.

\par
Barenco et al.~\cite{barenco1997stabilization} proposed the stabilisation of multiple inputs to their symmetric subspace by the projectors forming a full symmetric group.
The simplest example is the SWAP test, which maps two inputs into their SWAP invariant subspace.
Barenco et al. generalised this to $M$ inputs and analysed the error suppression rate when repetitively applying this gadget per a short time interval over $M$ redundant states, each under the same computational task.
Projecting $M$ states to their permutation invariant subspace, their gadget reduces errors by a factor of $1/M$ in terms of the probability of the purified output being unwanted states.
Particularly, they demonstrated it for noisy inputs under local coherent errors and first-order perturbed stochastic errors, respectively.

\par
The perfect execution of the gadget by Barenco et al. can effectively stabilise errors growing exponentially to evolution time, such as decoherence.
Supposing we execute a polynomial time computational task which gives the correct result at each step with probability $1-\epsilon$ with a constant $\epsilon$, the success probability is at least $\left(1-\epsilon\right)^{N} \sim e^{-\epsilon N}$ after $N$ steps.
The gadget of Barenco et al. aims to amplify this into $e^{-\epsilon N / M}$ by stabilising $M$ copies at each step.
As a result, the success probability can be kept at the level $1-\delta$ by taking $M = - \epsilon N / \log \left(1-\delta\right)$, which is polynomial in $N$.

\par
However, the quantum circuit for this gadget requires $O\left(M\log\left(M\right)\right)$ measurements (i.e. ancillary qubits) and $O\left(\left(M\log\left(M\right)\right)^{2}\right)$ controlled-SWAP operations, which may not yet be feasible for applying it many times during the computation on the near- and middle-term devices.
In this work, we propose a lighter implementation for suppressing errors by projecting $M$ redundant states to their rotation invariant subspace using a linear combination of projectors forming the cyclic group $C_{M}$, which we name ``Cyclic Group Gadget (CGG)".
The implementation cost of our gadget is reduced to $O\left(\log\left(M\right)\right)$ measurements and $O\left(M\log\left(M\right)\right)$ controlled-SWAP operations.
Yet our gadget still keeps the same purification rate as Barenco et al. when stabilising the errors in a short evolution time.
We also provide another implementation projecting $M$ inputs to a symmetric subspace associated with the group $\left(\mathbb{Z}/2\mathbb{Z}\right)^{\log\left(M\right)}$ inspired by the quantum circuit in Chabaud et al.~\cite{chabaud2018optimal}, which we name ``Generalised SWAP Gadget (GSG)".
The group $\left(\mathbb{Z}/2\mathbb{Z}\right)^{\log\left(M\right)}$ has the same order as the cyclic group $C_{M}$, and the circuit implementation cost and purification rate of GSG is also the same as CGG.

\par
Our gadget can also be used for state purification in a single round, given inputs with a bounded error rate.
The optimality of state purification has also been continuously explored for over two decades~\cite{cirac1999optimal, keyl2001rate, childs2023streaming}.
Recently, Childs et al.~\cite{childs2023streaming} have proved the optimality in sampling cost of the recursive application of SWAP gadgets for purifying the depolarised states with general dimensions.
We also analyse the error suppression rate of single-round application of our method over $M$ depolarised inputs with the depolarising rate $p$.
We observe that there is an optimal number of copies $M$ depending on $p$ for both CGG and GSG that maximises the error suppression rate.
Within the scope of small $p$, the maximal purification rate can be approximated as $O\left(p^{2}\right)$ with $O\left(p^{-1}\right)$ sampling cost, showing the same order as Childs et al.~\cite{childs2023streaming} in purification rate and sampling cost.

\par
We also discuss which factor decides the purification performance by comparing different purification gadgets among existing methods and ours.
We further discuss which hardware implementation our gadget is suitable for.


\section{Preliminaries \label{sec:preliminaries}}


\subsection{SWAP Gadget and Its Recursive Application}

\par
It is widely known that the SWAP test improves the purity of input quantum states by projecting them onto their swapping symmetric subspace.
We refer to this purification procedure as ``SWAP gadget".
The quantum circuit with controlled-SWAP and post-selection on $|0\rangle\langle0|$ state in the ancillary qubit applies a projector
\begin{equation}
    P_{\mathrm{SG}} = \frac {1} {2} (P_{12} + P_{21}) = \frac {1} {2} (I^{\otimes 2} + \mathrm{SWAP})
\end{equation}
to the two input states $\rho\otimes\sigma$, where $P_{i_{1}\cdots i_{M}}$ is a permutation operation of quantum registers that maps the position $k$ to $i_{k}$.
See Appendix~\ref{sec:appendix_swap_test} for the detailed discussion.

\par
When the two inputs $\sigma$ and $\rho$ are the same, such that
\begin{equation}
\begin{split}
\sigma = \rho = \sum_{i}\lambda_{i}|\lambda_{i}\rangle\langle\lambda_{i}|,~\left(\lambda_{0}>\lambda_{1}>\lambda_{2}>\cdots\right),
\end{split}
\end{equation}
the density matrix $\tilde{\rho}$ reduced to the first register in the output (i.e. $\tilde{\rho}^{(1)}$ in Fig.~\ref{fig:qcs_rsg}(a)) and the post-selection probability $\operatorname{P}_{\vec{0}}$ can be simplified to
\begin{equation}\label{eq:outputs_SWAP}
\begin{split}
    \tilde{\rho} 
    &= \frac {1} {2\operatorname{P}_{\vec{0}}} \left(\rho + \rho^{2}\right) 
    = \sum_{i} \frac {\lambda_{i} + \lambda_{i}^{2}} {1 + \operatorname{Tr}\left[\rho^{2}\right]}|\lambda_{i}\rangle\langle\lambda_{i}| , \\
    \operatorname{P}_{\vec{0}} 
    &= \frac {1} {2} \left(1 + \operatorname{Tr}\left[\rho^{2}\right]\right).
\end{split}
\end{equation}
Since $\{\lambda_{i}\}_{i}$ is labelled in descending order, the contribution of larger eigenvalues in $\rho$ is amplified in $\tilde{\rho}$.

\par
We can assume the noise due to the system-environment interaction is stochastic noise with a small probability on quantum hardware.
In this scenario, the target state of interest in the noisy density matrix can be seen as the dominant eigenvector $|\lambda_{0}\rangle\langle\lambda_{0}|$ with a dominant probability (i.e. eigenvalue).
Hence, according to Eq.~\eqref{eq:outputs_SWAP}, the SWAP gadget can "distil" the desired quantum state from two noisy inputs by selecting out unwanted states through post-selection.

\begin{figure*}[htbp]
    \centering
    \subfloat[\label{fig:qc_rsg_recursive}]{
        \raisebox{10pt}{
            \includegraphics[width=0.45\textwidth]{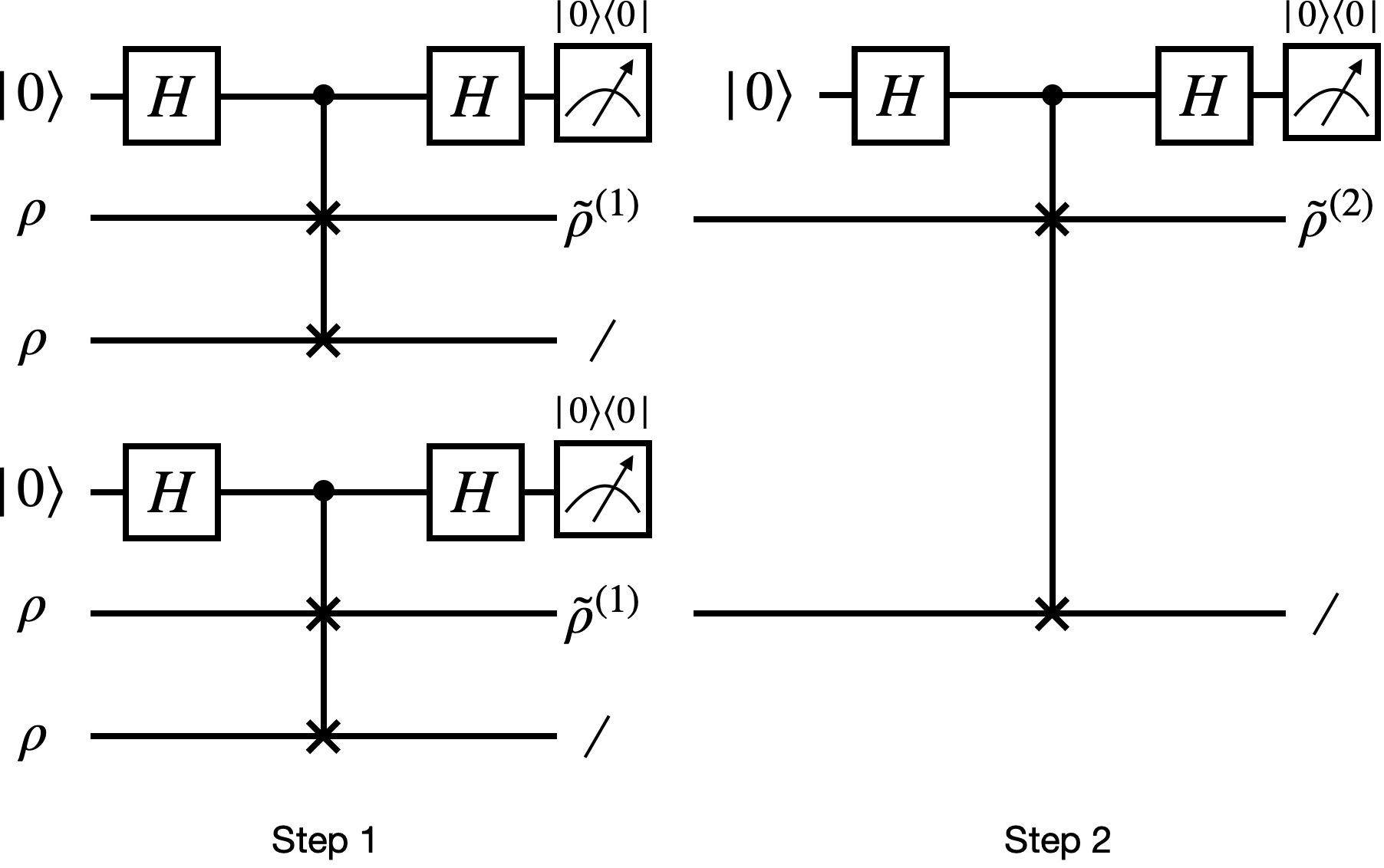}
        }
    }
    \subfloat[\label{fig:qc_rsg_one_m4}]{
        \includegraphics[width=0.45\textwidth]{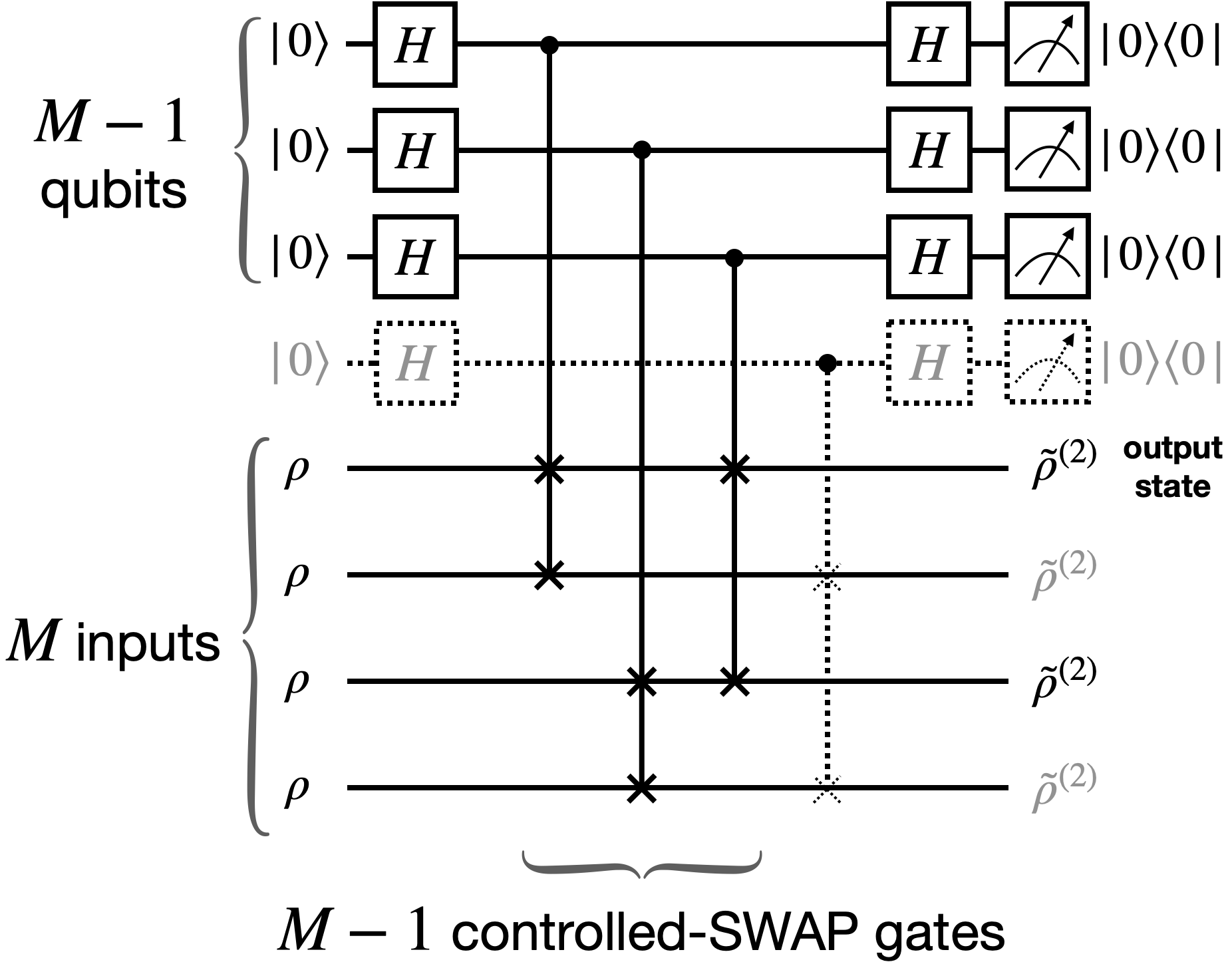}
    }
    \caption{
        (a) The quantum circuits in RSG with depth two, which outputs one purified output $\tilde{\rho}^{(2)}$ from four noisy inputs $\rho$.
        (b) The quantum circuit equivalent to (a), denoted by solid black gates and wires.
            Here, the number of copies $M$ is set to $4$.
            Note that this circuit outputs the purified state only in the quantum register of the first (and the third) copies.
            If one prefers to purify all of the copies to the same extent to $\tilde{\rho}^{(2)}$, an additional ancillary qubit is required to perform the controlled-SWAP operation between the second and the fourth copies.
            The additional quantum circuit of this overhead is shown in dotted gates and wires, along with grey labels of quantum gates and states.
    }
    \label{fig:qcs_rsg}
\end{figure*}

\par
The sequential application of the SWAP gadget will effectively purify multiple copies of noisy inputs.
Childs et al.~\cite{childs2023streaming} show that the cascaded recursive application of the SWAP gadget can keep the quantum advantage of Simon's problem with the depolarised oracle at a constant depolarising rate.
Below, we review their approach and purification performance.

\par
Let us introduce the depolarised state $\rho$ with depolarising rate $p$ as
\begin{equation}
    \rho = \left(1-p\right)\rho_{0} + p\frac{I}{d},
\end{equation}
where $\rho_{0}=|\psi\rangle\langle\psi|$, $I$ is identity matrix, and $d$ is the dimension of $\rho$, $\rho_{0}$ and $I$.
Given two inputs of $\rho$, the SWAP gadget yields $\tilde{\rho}$ as
\begin{equation} \label{eq:outputs_SWAP_depolarising_pure}
\scalemath{0.85}{
\begin{aligned}
    \tilde{\rho} 
    &= \frac {1} {2\operatorname{P}_{\vec{0}}} \left( (1-p) \left(2 - \left(1-\frac{2}{d}\right)p \right)\rho_{0} + p \left(1 + \frac{p}{d}\right) \frac{I}{d}\right), \\
    \operatorname{P}_{\vec{0}} 
    &= 1 - \left(1 - \frac{1}{d}\right) p +\frac{1}{2} \left(1 - \frac{1}{d}\right) p^{2}.
\end{aligned}
}
\end{equation}

\par
According to~\cite{childs2023streaming}, the recursive application of SWAP gadgets with depth $n$ using $M=2^{n}$ inputs yields a purified state 
\begin{equation} \label{eq:outputs_recursive_SWAP_depolarising_pure}
\begin{split}
    \tilde{\rho}^{(n)} &= \left(1-\tilde{p}^{(n)}\right)\rho_{0} + \tilde{p}^{(n)} \frac{I}{d},
\end{split}
\end{equation}
where
\begin{equation} \label{eq:p_tilde_recursive_SWAP_depolarising_pure}
\begin{split}
    \tilde{p}^{(n)} &< \frac{p}{2^{n}(1-2p)+2p} < \frac{p}{2^{n}+1}, \quad p<\frac{1}{2}.
\end{split}
\end{equation}
This means the purification rate scales linearly to the number of copies.

\par
For convenience, we refer to this recursive application of the SWAP gadget as ``Recursive SWAP Gadget (RSG)".
Childs et al. also showed that RSG is optimal in the sampling cost with respect to the expected number of copies required to suppress the depolarising error in the inputs.
To obtain one purified state with the arbitrary depolarising rate $\epsilon$, RSG consumes $O\left(p/\epsilon\right)$ noisy inputs with the depolarising rate $p$.

\par
In terms of implementation cost, distilling a single purified output through RSG from $M$ noisy copies requires $M-1$ times of measurement and post-selection to $|0\rangle\langle0|$ on ancillary qubits and $M-1$ controlled-SWAP gates in total.
Figure~\ref{fig:qcs_rsg}(a) shows specific quantum circuits of RSG with depth two, which can be seen as a single quantum circuit in Fig.~\ref{fig:qcs_rsg}(b) using the delayed measurement.
Note that the original RSG process gives only two purified states out of $M$ inputs.
With the same strategy to obtain $M$ purified states to the same purification rate as Eq.~\eqref{eq:outputs_recursive_SWAP_depolarising_pure}, one should consume $O\left(M\log\left(M\right)\right)$ amount of measurements for post-selection and $O\left(M\log\left(M\right)\right)$ amount of controlled-SWAP operations.
For $M=4$, the additional cost to obtain $4$ purified states is shown as the dotted grey ancillary qubit and controlled-SWAP operation.

\subsection{Full symmetric group Gadget \label{sec:sgg}}

\par
We can also generalise the SWAP gadget to the state projection with the full symmetric group over $M$ copies, which dates back to the work by Barenco et al.~\cite{barenco1997stabilization} during the early days of quantum error correction.
For convenience, we call their method ``Symmetric Group Gadget (SGG)".
The projector of SGG over $M$ copies includes all possible permutations in the symmetric group $S_{M}$, described as
\begin{equation} \label{eq:projector_sgg}
    P_{\mathrm{SGG}} = \frac{1}{M!}\sum_{i=0}^{M!-1}P_{\sigma_{i}},
\end{equation}
where $P_{\sigma_{i}}$ is a permutation operator corresponding to the permutation $\sigma_{i}$.

\par
Following the procedure in~\cite{barenco1997stabilization}, the five steps below complete the projection to the symmetric subspace by Eq.~\eqref{eq:projector_sgg}.
The rough sketch of the quantum circuit for SGG is shown in Fig.~\ref{fig:qc_sgg}.
\begin{enumerate}
    \item Introduce an ancillary Hilbert space with dimension $\log\left(M!\right)$ and initialise it to $|\vec{0}\rangle$.
    \item Prepare the uniform superposition state $\displaystyle \frac{1}{\sqrt{M!}}\sum_{i=0}^{M!-1}|i\rangle$.
          For $M=2^{n}$, the superposition is achieved by $H^{\otimes n}$, and for general $M$, this is achieved by QFT on the ancillary space.
    \item Apply the $i$-th permutation operation $P_{\sigma_{i}}$ in the symmetric group based on each state $|i\rangle$ in the ancillary space.
    \item Apply the inverse operation of step 2 on the ancillary space.
    \item Measure the ancillary space and post-select $|\vec{0}\rangle$ from the measurement result. 
          For the post-selected copies, save the first quantum register by taking the partial trace on all other copies, i.e., throw them away.
          The purified quantum state $\tilde{\rho}$ is then held by the first register.
\end{enumerate}

\par
The circuit construction of SGG consumes $O\left(M\log\left(M\right)\right)$ ancillary qubits in step 1.
In step 2, creating superposition in the ancillary qubits requires at worse $O\left(\left(M\log\left(M\right)\right)^{2}\right)$ gates due to the use of QFT.
Step 4 follows the same circuit overhead as step 2.
Finally, in step 3, projecting $M$ inputs to their full symmetric subspace requires $O\left(M^{2}\log\left(M\right)\right)$ controlled-SWAP gates.
In total, SGG requires $O\left(\left(M\log\left(M\right)\right)^{2} + M^{2}\log\left(M\right)\right)$ steps of non-local gate operations.

\begin{figure}[htbp]
    \includegraphics[width=0.47\textwidth]{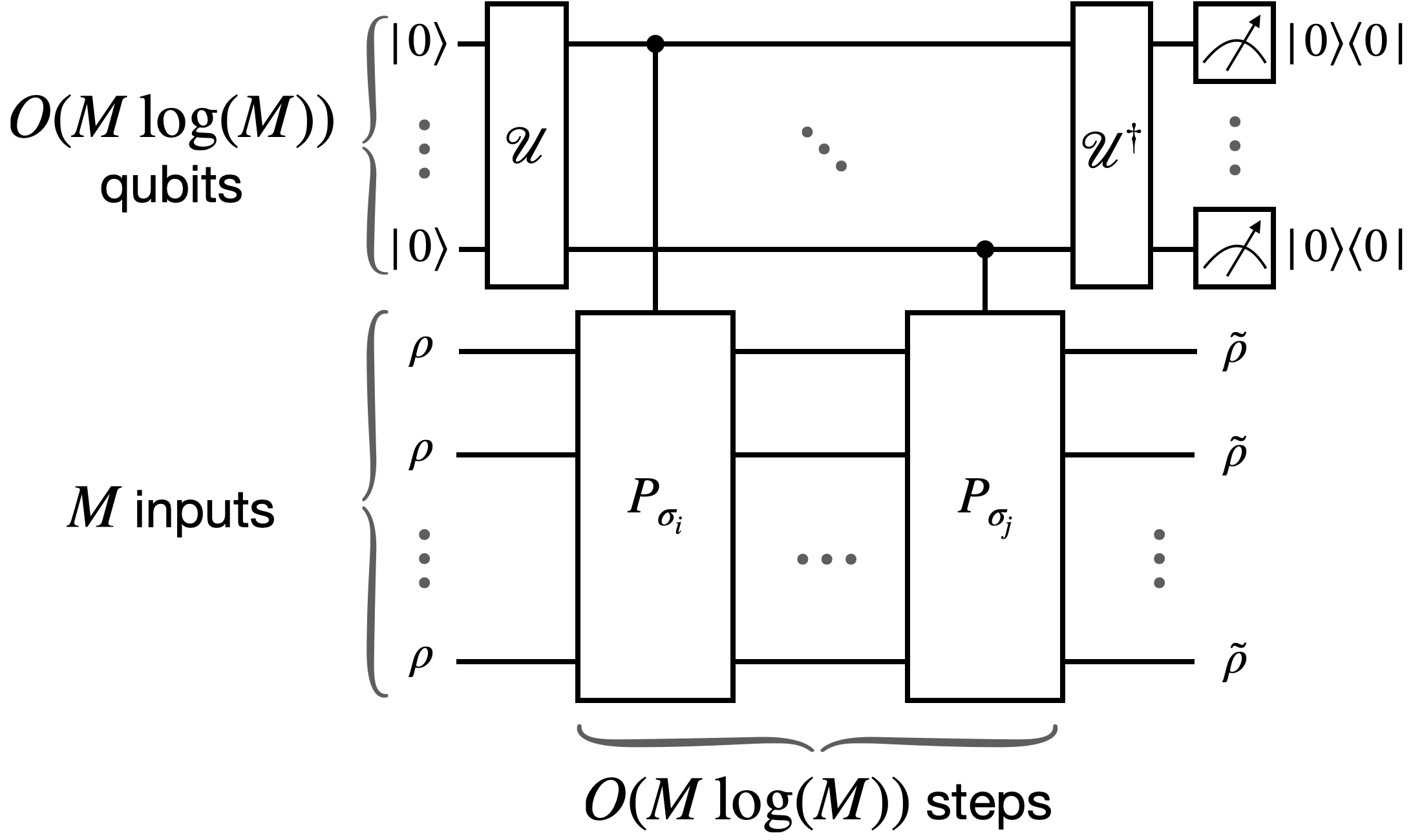}
    \caption{
        The quantum circuit implementing SGG with $M$ inputs.
        The unitary operation $\mathcal{U}$ prepares a uniform superposition state over $\log\left(M!\right)=O\left(M\log \left(M\right)\right)$ qubits.
    }
    \label{fig:qc_sgg}
\end{figure}

\par 
SGG is expected to be applied repetitively in every short-time evolution to project back the noisy evolution of $M$ redundant copies to their noise-free subspace, which is a so-called quantum watched pot or a quantum Zeno effect.
Under this usage, Barenco et al. showed that SGG suppresses coherent errors and stochastic errors into $1/M$ using $M$ copies.
The detailed discussion can be found in Appendix~\ref{sec:appendix_proofs_short_time_evolution}.

\section{Stabilisation by Symmetric Subgroup Projectors \label{sec:cgg_gsg}}

\par
As we have seen in the previous section, projecting $M$ noisy states to their symmetric subspace can linearly amplify the purity of quantum states, while the circuit implementation cost scales $O\left(\left(M\log\left(M\right)\right)^{2}\right)$.
We propose state purification gadgets using the projectors forming a subgroup of the symmetric group with order $M$ instead of the full symmetric group $S_{M}$ with order $M!$.

\subsection{Cyclic Group Gadget \label{sec:cgg}}

\par
First, we project $M$ noisy inputs to their rotation symmetric subspace.
We consider the post-selection projector as a linear combination of permutation operators in the cyclic group to amplify the contribution of higher power degrees of the density matrix.
The projector is designed by a summation of all the projectors generated by every operator in the cyclic group.
Introducing the cyclic group $C_{M}$ of order $M$ for $M$ copies of inputs, the projected unnormalised quantum state $\operatorname{P}_{\vec{0}}\tilde{\rho}$ and the post-selection probability $\operatorname{P}_{\vec{0}}$ are formalised as
\begin{equation} \label{eq:outputs_cgg}
\scalemath{0.85}{
\begin{aligned}
    \operatorname{P}_{\vec{0}} \tilde{\rho} 
    &= \operatorname{Tr}_{2\ldots M}\left[\left(\frac{1}{M}\sum_{P_{i}\in{C_{M}}}P_{i}\right) \left(\rho^{\otimes M}\right) \left(\frac{1}{M}\sum_{P_{i}\in{C_{M}}}P_{i}\right)\right] \\
    &= \frac{1}{M} \sum_{m|M} \varphi(m)\operatorname{Tr}\left[\rho^{m}\right]^{\frac{M}{m}-1}\rho^{m}, \\
    \operatorname{P}_{\vec{0}}
    &= \frac{1}{M} \sum_{m|M} \varphi(m)\operatorname{Tr}\left[\rho^{m}\right]^{\frac{M}{m}}, \\
\end{aligned}
}
\end{equation}
where $m|M$ means $m$ divides $M$, and $\varphi$ is Euler's totient function.
Euler's totient function shows up in Eq.~\eqref{eq:outputs_cgg} because, for the cyclic group with order $m$, the number of its generators corresponds to $\varphi(m)$.

\par
When $M$ is a prime number, \eqref{eq:outputs_cgg} can be simplified as
\begin{equation} \label{eq:outputs_cgg_prime}
\begin{split}
    \operatorname{P}_{\vec{0}} \tilde{\rho}
    &= \frac{1}{M}\rho + \frac{M - 1}{M} \rho^{M}, \\
    \operatorname{P}_{\vec{0}}
    &= \frac{1}{M} + \frac{M - 1}{M} \operatorname{Tr}\left[\rho^{M}\right], \\
\end{split}
\end{equation}
since $\varphi(1) = 1$, $\varphi(M) = M - 1$, and $\varphi(m) = 0$ for all natural number $1<m<M$ when $M$ is prime.

\par 
As Eq.~\eqref{eq:outputs_cgg_prime} is a weighted sum of $\rho$ and $\rho^{M}$, adding $M$ will linearly amplify the contribution of $\rho^{M}$ in which the ratio of the dominant eigenvector is amplified exponentially.
This implies when the desired state $\rho_{0}$ is the dominant eigenspace of $\rho$, we can amplify the contribution of $\rho_{0}$ with CGG.
However, we have to note that $\operatorname{Tr}\left[\rho^{M}\right]$ shrinks to $0$ also in exponential speed with $M$ when $\rho$ is a mixed state.
This implies that there is an optimal number of copies $M^{*}$ that maximises the purification rate of CGG and using more copies than $M^{*}$ will project the copies closer to the initial noisy state.
We analyse the optimal $M$ and its purification rate for the depolarised inputs in Section~\ref{sec:inputs_under_depolarisation}.

\begin{figure}[htbp]
    \includegraphics[width=\columnwidth]{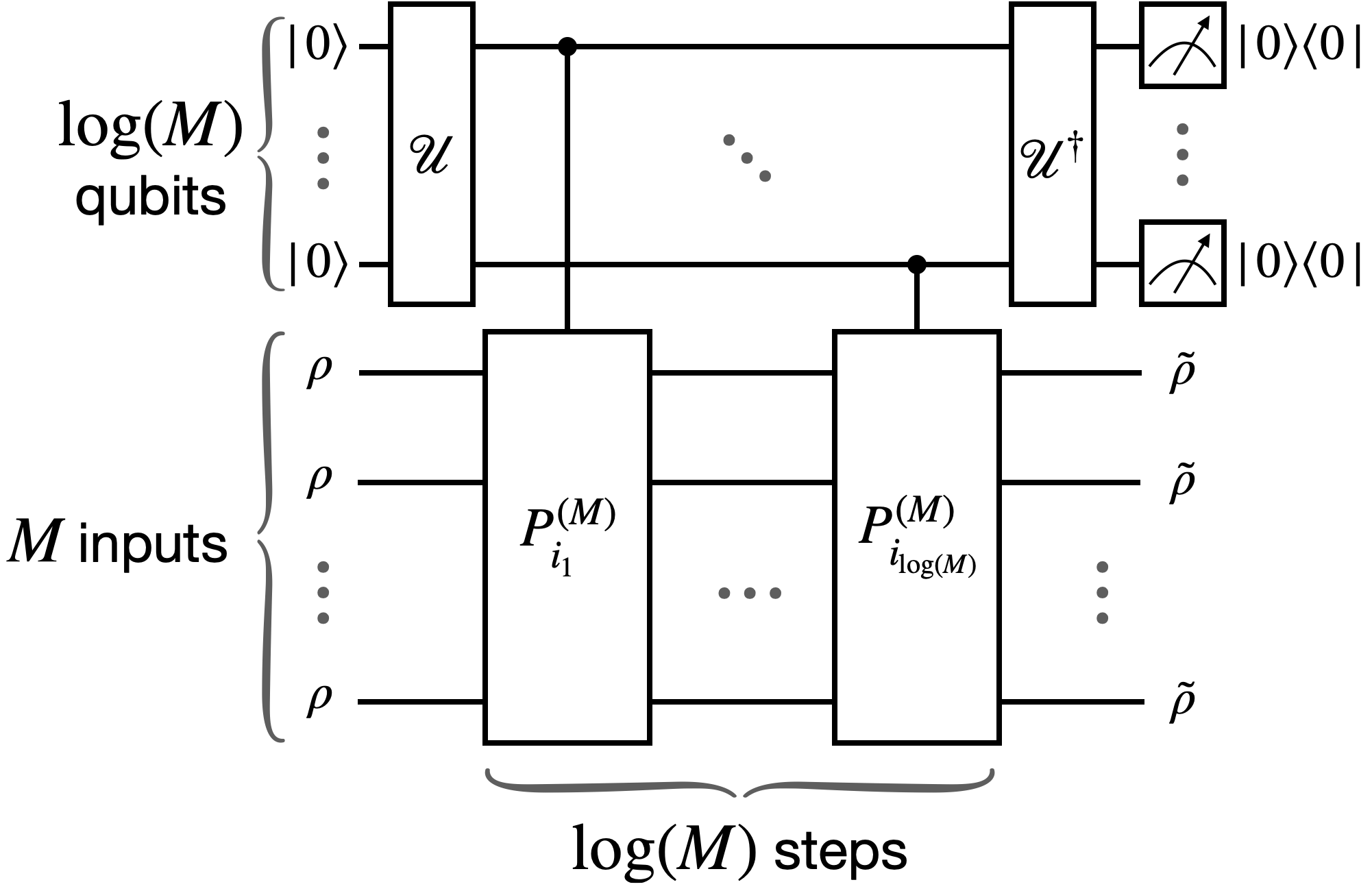}
    \hfill
    \caption{
        The quantum circuit implementing CGG taking $M$ inputs in general.
    }
    \label{fig:qc_cgg}
\end{figure}

\par
To perform CGG in the circuit picture, one can adopt a circuit in Fig.~\ref{fig:qc_cgg}, where $\mathcal{U}$ creates the uniform superposition of $M$ different states over $\log\left(M\right)$ ancillary qubits.
Note that we can reduce the number of ancillary qubits by reusing them.
In general, implementing a controlled-$P_{i}^{(M)}$ gate requires $M-1$ controlled-SWAP gates.
Given the uniform superposition over $O\left(\log\left(M\right)\right)$ ancillary qubits, it is also enough to use $O\left(\log\left(M\right)\right)$ steps of controlled-$P_{i}^{(M)}$ gates.
Therefore, the total controlled-SWAP gates required in the projection is $O\left(M\log\left(M\right)\right)$.

\par
The overhead of preparing uniform superposition on the ancillary qubits is exponentially smaller than the state projection part, which becomes more efficient than Barenco et al.~\cite{barenco1997stabilization}.
For a general $M$, the uniform superposition can also be prepared with QFT consuming $O\left(\left(\log\left(M\right)\right)^{2}\right)$ gate operations.
Therefore, the circuit implementation of CGG has at most $M^{2}$ less quantum overhead than that of~\cite{barenco1997stabilization}, requiring only $O\left(\log\left(M\right)\right)$ ancillary qubits and $O\left(M\log\left(M\right)\right)$ controlled-SWAP operations.
Note the uniform superposition on the ancillary qubits can be prepared with only a single step of local Hadamard gates when $M=2^{n}$, while in this case CGG still consumes $M$ times less quantum operations than that of~\cite{barenco1997stabilization}.

\subsection{Generalised SWAP Gadget \label{sec:gsg}}

\par 
Next, we project $M$ copies with projectors forming the subgroup $\left(\mathbb{Z}/2\mathbb{Z}\right)^{\log_{2}(M)}$, inspired by the generalised SWAP test proposed by Chabaud et al.~\cite{chabaud2018optimal}, which aims to approximate any projective measurement optimally to a given one-sided error using $M$ state inputs.
For the convenience of notation, we refer to the state purification gadget by generalised SWAP test as ``generalised SWAP gadget (GSG)".

\par 
When $M=2^{n}$, the quantum circuit of the generalised parallel SWAP operations shapes as Fig.~\ref{fig:qcs_gsg}(a), where $S_{k}$ for each $k=1,2,\ldots, n$ is the parallel swapping operation defined by
\begin{equation}
\begin{split}
    S_{k} 
    &= \bigotimes_{i \in \left[1,2^{k-1}\right]} \bigotimes_{j \in \left[1,2^{n-k}\right]}\operatorname{SWAP}\left[ (2j-2)2^{k-1} + i, \right. \\
    &\qquad\qquad\qquad\qquad\qquad\qquad~ \left. (2j-1)2^{k-1} + i \right],
\end{split}
\end{equation}
defined by the SWAP operation $\mathrm{SWAP}\left[i, j\right]$ between the $i$-th and $j$-th registers.
Note that the set $\{S_{k}\}_{k}$ is an Abelian group forming a group isomorphic to $\left(\mathbb{Z}/2\mathbb{Z}\right)^{n}$ as mentioned in~\cite{chabaud2018optimal}, and the swapping operations can also be seen as the butterfly diagram appearing in the fast Fourier transformation (FFT).

\begin{figure}[htbp]
    \centering
    \subfloat[\label{fig:qc_gsg}]{
        \includegraphics[width=0.47\textwidth]{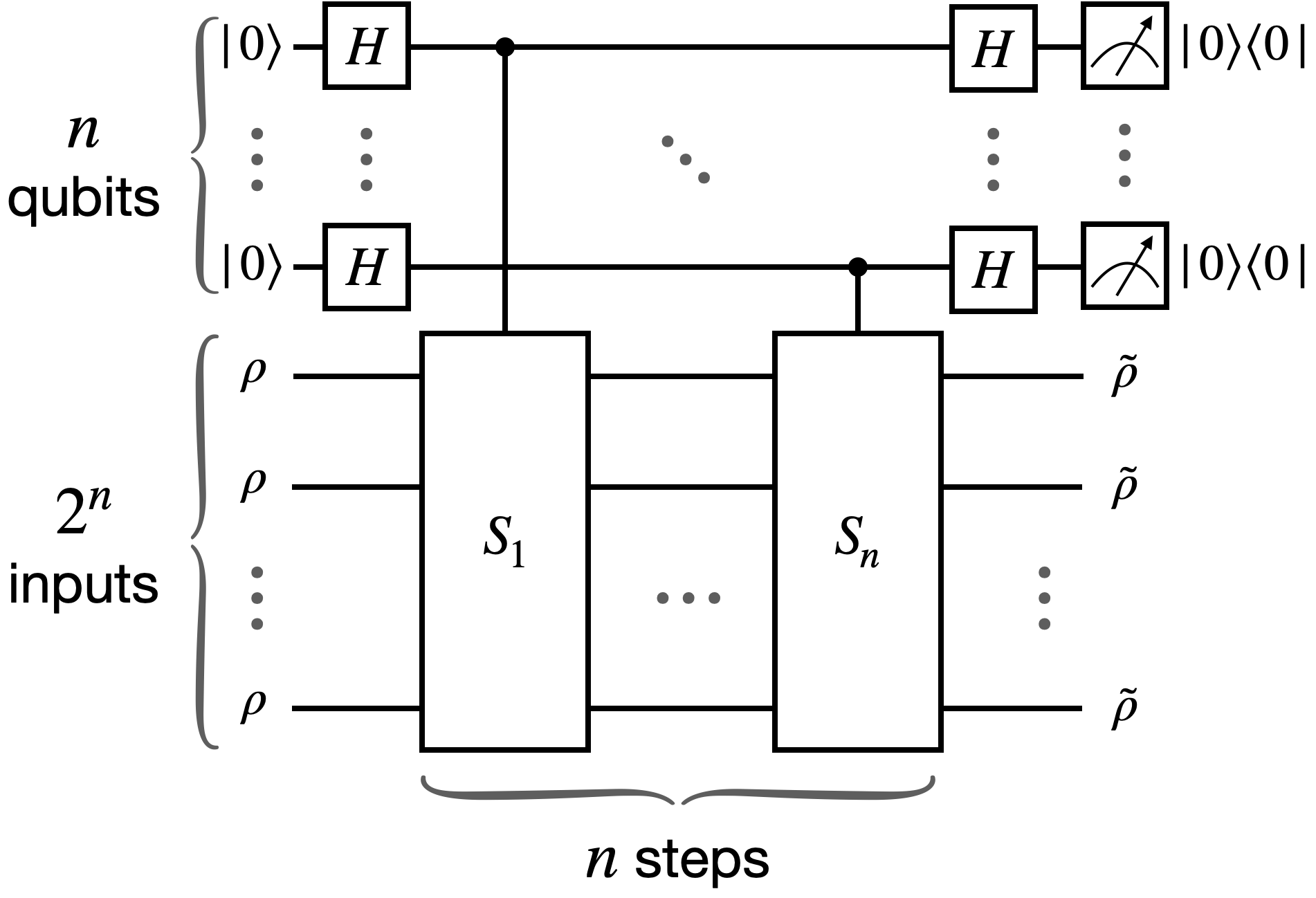}
    }
    \hfill
    \subfloat[\label{fig:qc_gsg_m4}]{
        \includegraphics[width=0.4\textwidth]{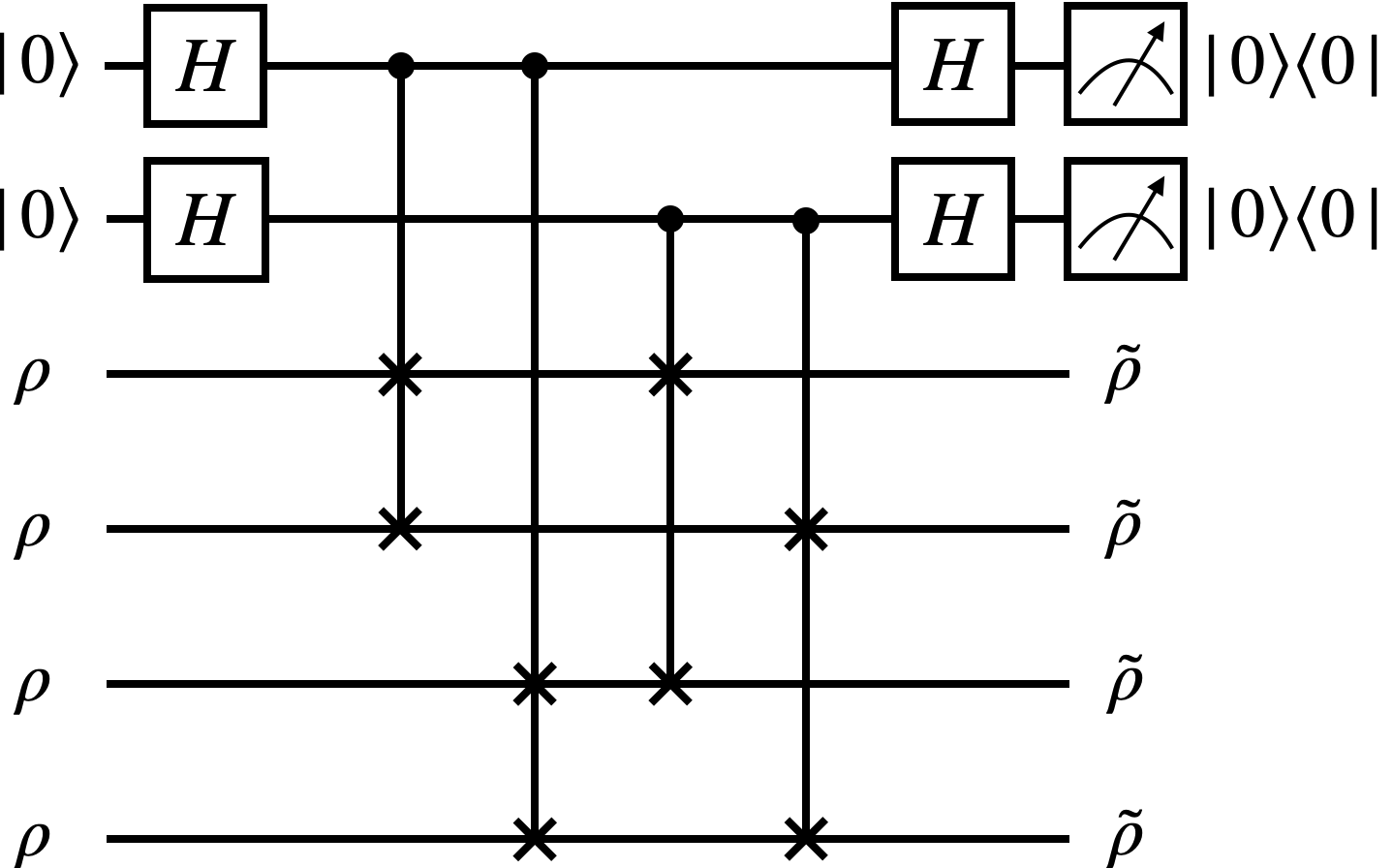}
    }
    \caption{
        (a) The quantum circuit of generalised SWAP gadget (GSG) for $M=2^{n}$ copies.
            The difference between GSG and generalised SWAP gadgets is whether to keep the first quantum register or not.
        (b) The quantum circuit of GSG for $M = 2^{n} = 4, (n=2)$. 
            $S_{1}$, i.e. $k = 1$, has $i \in \left[1,1\right]$, $j \in \left[1,2\right]$, containing SWAP operations between $(1,2)$ and $(3,4)$.
            $S_{2}$, i.e. $k = 2$, has $i \in \left[1,2\right]$, $j \in \left[1,1\right]$, containing SWAP operations between $(1,3)$ and $(2,4)$.
    }
    \label{fig:qcs_gsg}
\end{figure}

\par 
The purified output $\tilde{\rho}$ in each quantum register and the post-selection probability $\operatorname{P}_{\vec{0}}$ becomes
\begin{equation} \label{eq:outputs_gSWAP_m}
\begin{split}
    \operatorname{P}_{\vec{0}}\tilde{\rho} &= \frac {1} {M} \rho + \frac {M - 1} {M} \operatorname{Tr}\left[\rho^{2}\right]^{\frac{M}{2}-1} \rho^{2}, \\
    \operatorname{P}_{\vec{0}} &= \frac {1} {M} + \frac {M - 1} {M} \operatorname{Tr}\left[\rho^{2}\right]^{\frac{M}{2}}.
\end{split}
\end{equation}

\par
Following the original motivation of Chabaud et al.~\cite{chabaud2018optimal}, GSG optimally amplifies the state overlap in terms of the post-selection probability $\operatorname{P}_{\vec{0}}$, which becomes
\begin{equation}
    \operatorname{P}_{\vec{0}} = \frac{1}{M}+\frac{M-1}{M}|\langle\phi|\psi\rangle|^{2},
\end{equation}
given $|\phi\rangle$ as an unknown state and $M-1$ copies of $|\psi\rangle$ as the reference state.
This offers a binary test that outputs $0$ with probability $\displaystyle \frac{1}{M}+\frac{M-1}{M}|\langle\phi|\psi\rangle|^{2}$ and $1$ with probability $\displaystyle \left(\frac{M-1}{M}\right)\left(1-|\langle\phi|\psi\rangle|^{2}\right)$.
If the outcome $0$ (resp. $1$) is obtained, the test concludes that the states $|\phi\rangle$ and $|\psi\rangle$ were identical (resp. different).

\par
The quantum circuit of GSG is realised by implementing all the SWAP operations in $\{S_{k}\}_{k}$ sequentially controlled by each superposition in the ancillary qubits shown in Fig.~\ref{fig:qcs_gsg}(a).
The specific example for $M=4$ is shown in Fig.~\ref{fig:qcs_gsg}(b), which implements the swapping operations forming Klein's four-group.
The circuit implementation of~\cite{chabaud2018optimal} offers the same gate and qubit overhead as CGG, requiring $O\left(\log\left(M\right)\right)$ ancillary qubits and $O\left(M\log\left(M\right)\right)$ controlled-SWAP operations.

\section{Analysis on Purification Rate of CGG and GSG \label{sec:analysis_on_purification_rate_of_CGG_and_GSG}}

\par
In this section, we analyse how well CGG and GSG suppress the noise in input states under two scenarios: the repetitive application per each short time evolution and the single round application given noisy inputs.

\subsection{Purifying Errors in Every Short Evolution Time \label{sec:inputs_with_small_evolution_time}}

\par
We first consider applying our gadgets over $M$ redundant copies per every small time evolution $\delta t$, as depicted in Fig.~\ref{fig:qc_cgg_repetitive}.
Particularly, we analyse the noisy inputs either under the coherent noise drifted by bounded local Hamiltonian or stochastic noise generated by the system-environment interaction, yielding the following claim.

\begin{claim}
    The sequential application of cyclic group gadget (CGG) or generalised SWAP gadget (GSG) over $M$ copies in each small evolution time $\delta t$ under coherent drift suppresses the failure probability of obtaining unwanted output by a factor of $1/M$, under the first-order approximation of $\delta t$.
\end{claim}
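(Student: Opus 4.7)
The plan is to expand the $M$-copy joint state to first order in $\delta t$, identify the component orthogonal to the ideal trajectory $|\psi_0\rangle^{\otimes M}$, and show that the subgroup projector shrinks its norm-squared by exactly a factor $1/M$. First I would model coherent drift on copy $i$ by a local Hamiltonian $H_i$ and expand
\begin{equation*}
e^{-i\delta t H_i}|\psi_0\rangle = c_i\,|\psi_0\rangle - i\delta t\,|\xi_i\rangle + O(\delta t^2),
\end{equation*}
with $c_i = 1 - i\delta t\langle\psi_0|H_i|\psi_0\rangle$ and $|\xi_i\rangle = (I-|\psi_0\rangle\langle\psi_0|)H_i|\psi_0\rangle \perp |\psi_0\rangle$. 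Tensoring over $i$ and writing $c=\prod_i c_i$ gives
\begin{equation*}
|\Psi\rangle = c\,|\psi_0\rangle^{\otimes M} - i\delta t\,|\Psi_\perp\rangle + O(\delta t^2),\quad |\Psi_\perp\rangle = \sum_{i=1}^{M}|\psi_0\rangle^{\otimes(i-1)}|\xi_i\rangle|\psi_0\rangle^{\otimes(M-i)}.
\end{equation*}
The summands of $|\Psi_\perp\rangle$ are pairwise orthogonal (they differ on where a vector orthogonal to $|\psi_0\rangle$ sits), so the failure probability before any purification is $\delta t^{2}\sum_i\|\xi_i\|^2$ to leading order, and unitarity fixes $|c|^2 = 1 - \delta t^{2}\sum_i\|\xi_i\|^2 + O(\delta t^3)$.

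Next I would apply $P = \tfrac{1}{M}\sum_{g\in G}P_g$, with $G = C_M$ for CGG and $G=(\mathbb{Z}/2\mathbb{Z})^{\log_2 M}$ for GSG. The key structural observation that makes the argument uniform across both gadgets is that in each case the action of $G$ on the $M$ tensor slots is transitive and free: for $C_M$ by cyclic shift, and for $(\mathbb{Z}/2\mathbb{Z})^{\log_2 M}$ by identifying slots with bitstrings in $\{0,1\}^{\log_2 M}$ and letting the generators $S_k$ act as the translations $b\mapsto b\oplus e_k$. Transitivity with trivial stabiliser (immediate from $|G|=M$ and orbit--stabiliser) means the orbit of each single-error term has exactly $M$ mutually orthogonal elements, one per position. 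Interchanging the order of summation therefore yields
\begin{equation*}
P|\Psi_\perp\rangle = \frac{1}{M}\sum_{j=1}^{M}|\psi_0\rangle^{\otimes(j-1)}\Bigl(\sum_{i=1}^{M}|\xi_i\rangle\Bigr)|\psi_0\rangle^{\otimes(M-j)},
\end{equation*}
whose squared norm is $\tfrac{1}{M}\bigl\|\sum_i|\xi_i\rangle\bigr\|^2$ by the same position-orthogonality.

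To close the argument, I would invoke the \emph{local} character of the coherent drift: the perturbation directions $|\xi_i\rangle$ on distinct copies are uncorrelated, either because the $H_i$ are supported on disjoint subsystems or after an ensemble average over the environment as in Barenco et al. Then $\langle\xi_i|\xi_j\rangle=0$ for $i\neq j$ and $\bigl\|\sum_i|\xi_i\rangle\bigr\|^2=\sum_i\|\xi_i\|^2$, so the projected failure amplitude-squared is exactly $1/M$ times the unprojected one. Because $\langle\psi_0|^{\otimes M}P|\Psi_\perp\rangle=0$, the cross terms in $\|P|\Psi\rangle\|^2$ vanish and the post-selection probability is $|c|^2+\delta t^{2}\|P|\Psi_\perp\rangle\|^2 = 1+O(\delta t^2)$, so renormalisation contributes no first-order correction and the $1/M$ suppression passes intact to the conditional output.

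The main obstacle I anticipate is precisely the independence hypothesis on the $|\xi_i\rangle$: without it the claim would be false, since if every copy experiences the same drift then $|\Psi_\perp\rangle$ is already $G$-invariant and $P$ leaves it untouched. The proof must therefore state this assumption explicitly; once it is in hand, the rest is a short group-theoretic computation that works verbatim for both CGG and GSG because the only property of $G$ used is a free, transitive action of order $M$ on the $M$ tensor factors.
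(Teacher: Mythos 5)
Your proposal follows the paper's proof (Appendix~\ref{sec:appendix_Suppressing_Coherent_Errors_by_CGG_and_GSG}) almost step for step: the same first-order expansion of the drifted product state, the same observation that any order-$M$ subgroup acting freely and transitively on the slots sends each single-error term to the uniform superposition over positions (so CGG, GSG and even SGG act identically at first order in $\delta t$), and the same norm computation giving $\|P|\Psi_\perp\rangle\|^2 = \frac{1}{M}\bigl\|\sum_i|\xi_i\rangle\bigr\|^2$. The only place you genuinely diverge is the final step. You close by assuming $\langle\xi_i|\xi_j\rangle = 0$ for $i\neq j$; your first justification for this --- that the $H_i$ are supported on disjoint subsystems --- is spurious, since all the $|\xi_i\rangle$ live in the \emph{same} single-copy Hilbert space and locality of $H_i$ to copy $i$ says nothing about overlaps between the resulting perturbation vectors. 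Your second justification (an ensemble average over independent, zero-mean random drifts) is the one that actually carries the argument and is the assumption inherited from Barenco et al. The paper instead avoids any exact-orthogonality claim: it assumes the drift rate is bounded, $\left|\text{eigenvalues of }H_k\right|\leq\epsilon$, and bounds both the unprojected deviation $\langle\lambda_{\perp,1}|\lambda_{\perp,1}\rangle$ and the averaged deviation $\langle\tilde{\lambda}_{\perp}|\tilde{\lambda}_{\perp}\rangle$ by the same quantity $d\epsilon^{2}$ via the Frobenius norm, so the explicit $1/M$ prefactor becomes a suppression of the worst-case bound rather than an exact pointwise factor. Your version buys a cleaner ``exactly $1/M$'' statement at the price of a stronger statistical hypothesis; the paper's version needs only bounded, not uncorrelated, drifts. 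Your counterexample (identical drift on every copy leaves $|\Psi_\perp\rangle$ invariant under $P$) is correct and worth keeping, as it shows that some non-alignment assumption on the $|\xi_i\rangle$ cannot be dropped.
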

\begin{proof}
    The proof follows the same discussion as by Barenco et al.~\cite{barenco1997stabilization}.
    When taking the first order of $\delta t$, it is enough to stabilise $M$ redundancy to their rotation symmetric subspace instead of their permutation symmetric subspace.
    The detailed noise setting and proof are given in Appendix~\ref{sec:appendix_Suppressing_Coherent_Errors_by_CGG_and_GSG}.
\end{proof}

\begin{claim}
    The sequential application of cyclic group gadget (CGG) or generalised SWAP gadget (GSG) over $M$ copies in each small evolution time $\delta t$ suppresses the stochastic error by a factor of $1/M$:
    \begin{equation} \label{eq:fidelity_rho_0_rho_tilde}
    \begin{split}
        \operatorname{Tr} \left[ \rho_{0} \tilde{\rho} \right] = 1 + \frac{1}{M} \operatorname{Tr} \left[ \rho_{0} \tilde{\sigma} \right],
    \end{split}
    \end{equation}
    for the fidelity (i.e. the success probability of obtaining the target state), and 
    \begin{equation} \label{eq:purity_rho_tilde}
    \begin{split}
        \operatorname{Tr}\left[\tilde{\rho}^{2}\right] = 1 + \frac{2}{M} \operatorname{Tr}\left[\rho_{0} \tilde{\sigma}\right],
    \end{split}
    \end{equation}
    for the purity, which are the same error suppression rates as SGG by Barenco et al.~\cite{barenco1997stabilization}.
\end{claim}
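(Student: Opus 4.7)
The plan is to mirror the first-order analysis of Barenco et al., but to replace their reliance on the full symmetric group $S_M$ by the weaker observation that $C_M$ and $(\mathbb{Z}/2\mathbb{Z})^{\log_2 M}$ already act \emph{regularly} on the $M$ tensor registers (transitively with trivial stabiliser), since both groups have order exactly $M$. First I would adopt the same stochastic noise model used in Appendix~\ref{sec:appendix_proofs_short_time_evolution}: after a short time $\delta t$ each of the $M$ redundant copies deviates from the target $\rho_0 = |\psi_0\rangle\langle\psi_0|$ by a traceless perturbation $\delta$ of order $\delta t$, so that $\rho = \rho_0 + \delta$ with $\operatorname{Tr}[\delta] = 0$. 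Expanding $\rho^{\otimes M}$ to first order yields $\rho_0^{\otimes M} + \sum_{i=1}^{M} A_i + O(\delta t^2)$, where $A_i$ places $\delta$ on register $i$ and $\rho_0$ on every other register.

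Next I would exploit regularity. For every $g \in G$ with associated permutation $\pi_g$, one has $P_g \rho_0^{\otimes M} P_g^{\dagger} = \rho_0^{\otimes M}$ and $P_g A_i P_g^{\dagger} = A_{\pi_g(i)}$. Combining this with $P_g P_h = P_{gh}$ and a change of variable in the double sum defining $P_G \cdot P_G$ gives $P_G A_i P_G = \frac{1}{M} \bigl( \sum_{j=1}^{M} A_j \bigr) P_G$, since regularity implies $\{ \pi_g(i) : g \in G \} = \{ 1, \ldots, M \}$ with each index occurring exactly once. This is the same uniform symmetrisation that the full $S_M$ projector of SGG produces at first order: the $(M-1)!$ permutations that share a given image $\pi_g(i)$ in $S_M$ cancel against the larger normalisation $1/M!$.

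Then I would take the partial trace over registers $2, \ldots, M$ and normalise by $\operatorname{P}_{\vec{0}}$. Because $\operatorname{Tr}[\delta] = 0$, the $A_j$ with $j \ne 1$ drop out of the partial trace, leaving on the first register a $1/M$ fraction of $\delta$ together with cross-terms of the form $\rho_0 \delta + \delta \rho_0$ coming from the non-identity group elements in $P_G$. The post-selection probability develops a first-order shift $\operatorname{P}_{\vec{0}} = 1 + \operatorname{Tr}[\rho_0 \delta] + O(\delta t^2)$, and dividing through cancels precisely the cross-terms against $\operatorname{Tr}[\rho_0 \tilde{\rho}]$, leaving a $1/M$ residual. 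Identifying $\tilde{\sigma}$ with the first-order noise contribution used by Barenco et al.\ then reproduces Eqs.~\eqref{eq:fidelity_rho_0_rho_tilde} and~\eqref{eq:purity_rho_tilde}, the factor $2/M$ in the purity arising because both copies of $\tilde{\rho}$ carry a first-order correction.

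The main obstacle is conceptual rather than computational: convincing oneself that an order-$M$ subgroup acting regularly gives the \emph{same} first-order suppression as the much larger symmetric group. The additional structure of $S_M$ is only felt at second order in $\delta t$, where it suppresses error terms quadratic in $\delta$ more effectively. For CGG the regular action is the obvious cyclic shift and this is transparent. For GSG, however, the action of $(\mathbb{Z}/2\mathbb{Z})^{\log_2 M}$ on the registers is less intuitive, so I would explicitly label the registers by binary strings of length $\log_2 M$, identify the group action with bitwise XOR implemented by the butterfly SWAPs $\{ S_k \}_k$, and verify triviality of every stabiliser before invoking the symmetrisation argument above.
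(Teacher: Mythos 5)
Your proposal is correct and reaches the stated result, but it organises the first-order computation around a different key lemma than the paper. The paper's proof (Appendix~\ref{sec:appendix_Suppressing_Stochastic_Errors_by_CGG_and_GSG}) works with independent perturbations $\rho^{(i)}=\rho_{0}+\sigma_{i}$ and explicitly enumerates the five possible values of $\frac{1}{M^{2}}P_{l}\left(\rho_{0}\otimes\cdots\otimes\sigma_{i}\otimes\cdots\otimes\rho_{0}\right)P_{r}$ over all $M^{2}$ pairs $(l,r)$ of cyclic shifts (one term $\sigma_{i}$, $M-1$ vanishing terms, $M-1$ each of $\rho_{0}\sigma_{i}$ and $\sigma_{i}\rho_{0}$, and $(M-1)(M-2)$ of $\rho_{0}\sigma_{i}\rho_{0}$), and then repeats the count for the parallel-SWAP group. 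You instead prove the single identity $P_{G}A_{i}P_{G}=\frac{1}{M}\bigl(\sum_{j}A_{j}\bigr)P_{G}$ from the regularity (free and transitive action) of the order-$M$ subgroup, which uniformly explains why CGG, GSG and SGG all produce the same symmetrisation of the first-order perturbation; this is cleaner, and your plan to verify regularity of the XOR action for $\left(\mathbb{Z}/2\mathbb{Z}\right)^{\log_{2}(M)}$ is exactly what is needed. What the paper's enumeration buys in exchange is the explicit operator form of the reduced state after the partial trace --- in particular the $\frac{(M-1)(M-2)}{M}\rho_{0}\tilde{\sigma}\rho_{0}$ term present for CGG but absent for GSG --- which your sketch omits; the omission is harmless here only because $\rho_{0}$ is pure, so that term contributes to the fidelity and purity identically to $\rho_{0}\tilde{\sigma}$. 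Two small corrections: the first-order shift of the post-selection probability is $1+(M-1)\operatorname{Tr}\left[\rho_{0}\tilde{\sigma}\right]$, not $1+\operatorname{Tr}\left[\rho_{0}\tilde{\sigma}\right]$ (the cancellation against the cross terms then leaves precisely $\frac{1}{M}\operatorname{Tr}\left[\rho_{0}\tilde{\sigma}\right]$); and you should keep the perturbations $\sigma_{i}$ distinct per copy as the paper does, recovering your common-$\delta$ case by linearity at first order with $\tilde{\sigma}=\frac{1}{M}\sum_{i}\sigma_{i}$.
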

\begin{proof}
    Since we assume a small time evolution, the $i$-th noisy input by the system-environment interaction can be seen as $\rho_{i}=\rho_{0} + \sigma_{i}$ with a trace-free local Hamiltonian $\sigma_{i}$.
    This yields a noisy state spanning over $M$ inputs as
    \begin{equation} \label{eq:rho_first_order}
    \begin{split}
        \rho^{(1\ldots M)} 
        = \rho_{0}^{\otimes M}
        &+ \sum_{i=1}^{M} \rho_{0} \otimes \cdots \otimes \sigma_{i} \otimes \cdots \otimes \rho_{0} \\
        &+ O\left(\sigma_{i} \sigma_{j}\right).
    \end{split}
    \end{equation}
    The output state after applying CGG (and GSG) on Eq.~\eqref{eq:rho_first_order} with following only the first-order of $\sigma_{i}$ becomes $\tilde{\rho}$ satisfying Eq.~\eqref{eq:fidelity_rho_0_rho_tilde} and Eq.~\eqref{eq:purity_rho_tilde}.
    The detailed noise setting and proof are given in Appendix~\ref{sec:appendix_Suppressing_Stochastic_Errors_by_CGG_and_GSG}.
\end{proof}

\begin{figure}[htbp]
    \includegraphics[width=\columnwidth]{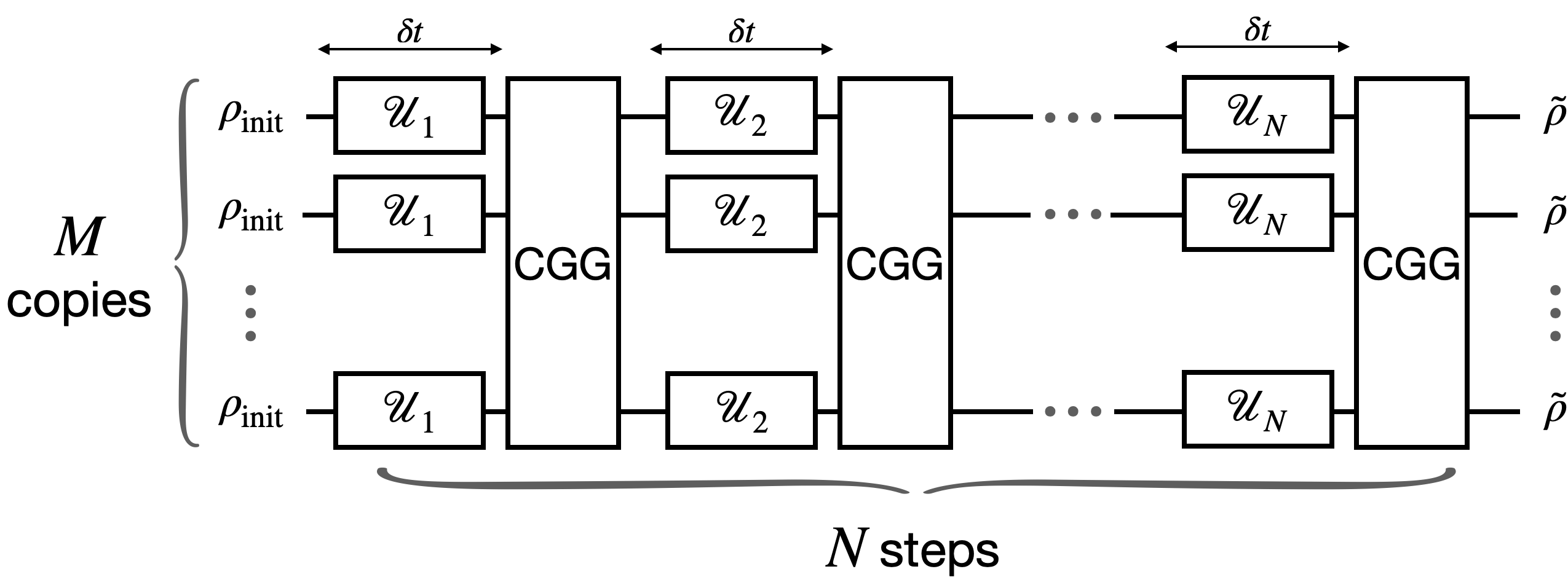}
    \hfill
    \caption{
        The schematic picture of the repetitive application of CGG.
        The quantum operation in each step is assumed to finish in $\delta t$ time, and the CGG process is assumed to be applied perfectly.
    }
    \label{fig:qc_cgg_repetitive}
\end{figure}

\par
As mentioned in~\cite{barenco1997stabilization}, the stabilisation-based state purification gadget, such as CGG, GSG, and SGG, can be used as a tool for quantum watched pot (or quantum Zeno effect)~\cite{peres1997quantum},
i.e. the linear error suppression to the redundancy by repetitively applying these gadgets in every small time $\delta t$ can effectively suppress exponentially decaying errors such as decoherence.
Let us take the dephasing error over a continuous period of time as an example.
Supposing that the probability of a dephasing channel happening per unit time $\delta t$ is $\eta$, the success probability after $N\delta t$ time (i.e. after $N$ steps) is at least $\left(1-\eta\right)^{N} \sim e^{-\eta N}$.
As the probability of obtaining an unwanted state after applying CGG or GSG is suppressed $M$ times smaller, the success probability after $N$ steps is amplified to $e^{-\eta N / M}$.
Therefore, providing $M = - \eta N / \log \left(1-\epsilon\right)$ copies, which is polynomial in $N$, is enough to keep the success probability at the level $1-\epsilon$.

\par
In conclusion, under local interaction with an external environment when the reservoir's coherence length is less than the spatial separation, it is enough to use CGG and GSG to achieve a linear suppression rate to the number of copies $M$ instead of exploiting the symmetry by the full symmetric group.
This reduces the implementation cost of quantum circuits $M$ times lighter regarding the number of ancillary qubits and the number of controlled-SWAP gates from SGG.

\subsection{Purifying Depolarised Inputs \label{sec:inputs_under_depolarisation}}

\par
It is also possible to apply our gadget only once over $M$ redundant copies with a bounded error rate.
Here, we analyse the purification rate of CGG and GSG given depolarised inputs $\rho$ with a depolarising rate $0<p<1$ and dimension $d$ described as
\begin{equation} \label{eq:rho_depolarised}
    \rho = \left(1 - p\right) \rho_{0} + p\frac{I}{d},
\end{equation}
where $\rho_{0}$ is assumed to be pure.
We see that both CGG and GSG suppress the input depolarising rate $p$ to $O\left(p^{2}\right)$ when $p$ is small.


\begin{figure*}[htbp]
    \centering
    \subfloat[\label{fig:CGG_ps-tildes_M-all_d-infty_x-log_y-log}]{
        \includegraphics[width=0.45\textwidth]{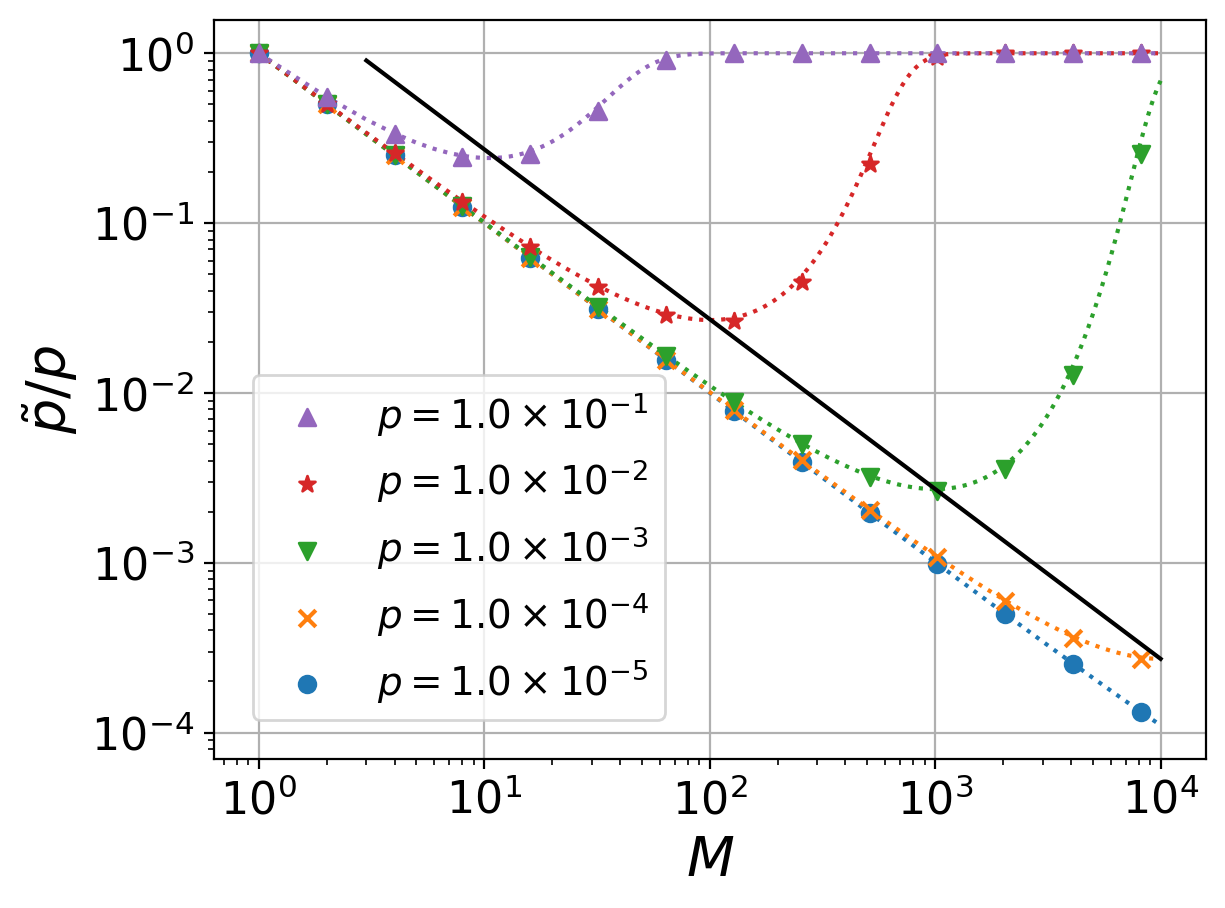}
    }
    \subfloat[\label{fig:CGG_ps-tildes_M-all_p-0.001_ds_x-log_y-log}]{
        \includegraphics[width=0.45\textwidth]{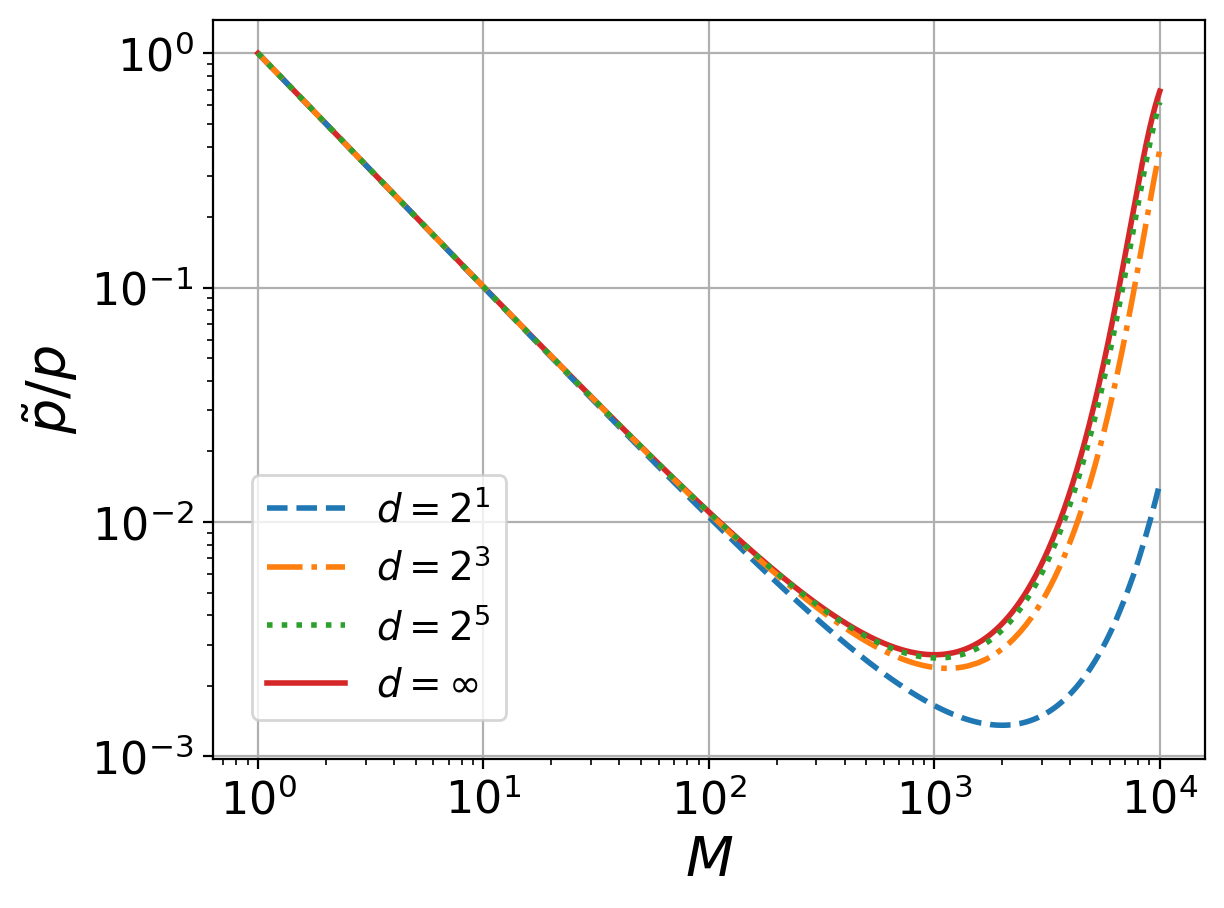}
    }
    \caption{
        The purification ratio $\tilde{p}/p$ of output depolarising rate $\tilde{p}$ to input depolarising rate $p$ using $M$ copies in CGG.
        (a) The ratio with different input depolarising probability $p$ when the state dimension is $d=32$.
            The coloured points, with symbols for different $p$, show the purification ratio for each number of copies when $M=2^{n}$, $n=0,\ldots,13$.
            The dotted curves represent the purification ratio for each input depolarising probability $p$ respectively when the dimension $d$ of input states is infinite.
            The solid black line approximates the valley of the curves according to $\displaystyle \log \left(\tilde{p}/p\right) = - \log \left(M\right) + 1$ in Eq.~\eqref{eq:func_p_tilde_and_M_linear}, deduced in Appendix~\ref{sec:appendix_proof_CGG_maximal_purification_rate_and_condition}.
        (b) The purification ratio with different state dimension $d$ when the input depolarising probability is set to $1.0\times 10^{-3}$.
    }
    \label{fig:CGG_ps-tildes_M-all}
\end{figure*}

\par
We first analyse the purification rate of CGG.
As $\rho_{0}$ is pure and commutative with the maximally mixed state $I/d$, the output state $\tilde{\rho}$ also becomes a depolarised state as
\begin{equation}
\begin{split}
    \tilde{\rho} = \left(1-\tilde{p}\right)\rho_{0} + \tilde{p}\frac{I}{d},
\end{split}
\end{equation}
letting $\tilde{p}$ denote the depolarising probability of $\tilde{\rho}$.
Then, $\tilde{p}$ can be expanded by $p$, $d$, $M$ into
\begin{equation}
\begin{split}
    \tilde{p}
    &= \frac{1}{\operatorname{P}_{\vec{0}}} \frac{1}{M} \sum_{m|M} \varphi(m)\operatorname{Tr}\left[\rho^{m}\right]^{\frac{M}{m}-1} p, \\
    \text{with } \operatorname{P}_{\vec{0}}
    &= \frac{1}{M} \sum_{m|M} \varphi(m)\operatorname{Tr}\left[\rho^{m}\right]^{\frac{M}{m}},
\end{split}
\end{equation}
where
\begin{equation}
\begin{split}
    \operatorname{Tr}\left[\rho^{m}\right] 
    &= \left(1-p+\frac{p}{d}\right)^{m} + p\left(1-\frac{1}{d}\right)\left(\frac{p}{d}\right)^{m-1}.
\end{split}
\end{equation}
As $\tilde{p} < p$ always holds for $d>1$, we see CGG has an error suppression effect.

\par
When $d$ goes to infinity, $\tilde{p}$ converges to
\begin{equation} \label{eq:p_tilde_CGG_d_inf}
\begin{split}
    \tilde{p}_{d=\infty}
    &= \frac{1}{ \displaystyle 1 + \left(M - 1\right) \left(1 - p\right)^{M} }p.
\end{split}
\end{equation}
From now on, we focus on the case when the input dimension is infinity. 



\par
To begin with, we observe that the purification rate $\tilde{p}/p$ is not suppressed monotonically to the increase of $M$.
Figure~\ref{fig:CGG_ps-tildes_M-all}(a) shows $\tilde{p}/p$ decreases at first and increases back to $1$ after a certain number of $M$.
As mentioned in Section.~\ref{sec:cgg}, an intuitive interpretation of this comes from the fact that $\operatorname{Tr}\left[\rho^{M}\right]$ shrinks to $0$ exponentially with $M$.
This appears as the trade-off between $M-1$ and $\left(1-p\right)^{M}$ in Eq.~\eqref{eq:p_tilde_CGG_d_inf}.
Therefore, there seems to be a balancing point with an optimal number of copies $M$ that maximises the purification rate $\tilde{p}/p$.

\par
We analyse this optimal $M$ and the corresponding maximal purification rate, which can be summarised in the following claim.
\begin{claim} \label{thm:claim_p_tilde_CGG}
    For a given input depolarising level $p$, CGG (and GSG) reaches its maximal purification rate
    \begin{equation} \label{eq:p_tilde_optimal_p}
    \begin{split}
        \tilde{p}_{d=\infty}^{*}
        &= \frac{ e\log\left(1-p\right) }
                { e\log\left(1-p\right) + p - 1 } p, \\
    \text{when} \quad
        M_{d=\infty}^{*} &= 1-\frac{1}{\displaystyle \log \left(1-p\right)}.
    \end{split}
    \end{equation}
\end{claim}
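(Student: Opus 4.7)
The plan is to treat $M$ as a continuous variable and reduce the claim to a standard single-variable optimisation of the expression in Eq.~\eqref{eq:p_tilde_CGG_d_inf}. The first step is to observe that the same function $\tilde{p}_{d=\infty}(M)$ governs both gadgets: for CGG this is already Eq.~\eqref{eq:p_tilde_CGG_d_inf}, while for GSG (with $M=2^n$) I would start from Eq.~\eqref{eq:outputs_gSWAP_m}, use $\operatorname{Tr}[\rho^2]\to(1-p)^2$ and note that the $\rho_0$-eigenvalue of $\rho^2$ also tends to $(1-p)^2$ as $d\to\infty$, and read off the $\rho_0$-eigenvalue of $\tilde{\rho}$ after normalising by $\operatorname{P}_{\vec{0}}$. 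A short computation yields the same expression $\tilde{p}_{d=\infty} = p/[1+(M-1)(1-p)^M]$, so the optimisation problem is common to both gadgets.

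Next, I would minimise $\tilde{p}/p$ by equivalently maximising $g(M):=(M-1)(1-p)^M$ over $M>0$. Differentiating gives
\begin{equation*}
    g'(M) = (1-p)^M \bigl[1+(M-1)\log(1-p)\bigr],
\end{equation*}
and since $(1-p)^M>0$ for $p\in(0,1)$, the unique stationary point is $M^* = 1 - 1/\log(1-p)$, which lies in $(1,\infty)$ because $\log(1-p)<0$. A quick second-derivative check at $M^*$ yields $g''(M^*) = (1-p)^{M^*}\log(1-p)<0$, confirming that $M^*$ is a global maximum of $g$ and hence a global minimum of $\tilde{p}/p$.

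To recover the closed form for $\tilde{p}^*_{d=\infty}$, I would exploit the identity $(1-p)^{M^*-1} = e^{(M^*-1)\log(1-p)} = e^{-1}$, which gives $(M^*-1)(1-p)^{M^*} = (p-1)/(e\log(1-p))$. Substituting into Eq.~\eqref{eq:p_tilde_CGG_d_inf} and simplifying directly yields the stated $\tilde{p}^*_{d=\infty}=[e\log(1-p)/(e\log(1-p)+p-1)]\,p$. The only subtlety I foresee is that the admissible set of $M$ is in fact discrete (positive integers for CGG, powers of $2$ for GSG), so the claim must be read as the optimum of the continuous relaxation; this is the main point where care is needed, but the rounding loss is easy to bound and does not disturb the asymptotic $O(p^2)$ purification rate at small $p$ on which the subsequent use of the claim relies. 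The calculus itself is routine.
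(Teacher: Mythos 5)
Your proposal is correct and follows essentially the same route as the paper's proof in Appendix~\ref{sec:appendix_proof_CGG_maximal_purification_rate_and_condition}: maximising $g(M)=(M-1)(1-p)^{M}$ is equivalent to the paper's computation of $\frac{d}{dM}\tilde{p}_{d=\infty}$, and both reduce to the same stationarity condition $1+(M-1)\log(1-p)=0$ followed by the substitution $(1-p)^{M^{*}-1}=e^{-1}$. Your added second-derivative check and the explicit remark about the continuous relaxation of $M$ are minor refinements the paper leaves implicit.
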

\begin{proof}
    We obtain $\tilde{p}_{d=\infty}^{*}$ and $M_{d=\infty}^{*}$ by analysing the monotonicity of $\tilde{p}$ to $M$, particularly, by checking the sign of $\displaystyle \frac{d}{d M} \tilde{p}$.
    We give a detailed proof in Appendix~\ref{sec:appendix_proof_CGG_maximal_purification_rate_and_condition}.
\end{proof}

\par
Since we assume our gadgets are used when the noise rate is not that high (while not negligible), we further analyse the case where we feed inputs with a small depolarising rate $p$.
Comparing CGG with RSG~\cite{childs2023streaming}, we obtain the following claim on the optimality of CGG regarding the sampling cost.
\begin{claim}
    Under the first-order approximation of $p$, the maximal purification rate $\tilde{p}_{d=\infty}^{*}$ of CGG (and GSG) scales in $O\left(p^{2}\right)$.
    The sampling cost to achieve this maximal purification rate in CGG (and GSG) scales in $O\left(p^{-1}\right)$ asymptotically, which is optimal.
\end{claim}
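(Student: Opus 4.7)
The plan is to derive both asymptotic statements from the closed-form expressions of Claim~\ref{thm:claim_p_tilde_CGG}, and then invoke the lower bound of Childs et al.~\cite{childs2023streaming} to establish optimality.

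For the purification rate, I would Taylor-expand
\begin{equation}
\tilde{p}^*_{d=\infty} = \frac{e\log(1-p)}{e\log(1-p)+p-1}\,p
\end{equation}
using $\log(1-p) = -p + O(p^2)$. The denominator tends to $-1$ with $O(p)$ correction, while the numerator is $-ep + O(p^2)$, so the prefactor multiplying $p$ is $ep + O(p^2)$, yielding $\tilde{p}^{*} = e\,p^{2} + O(p^{3}) = O(p^{2})$.

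For the sampling cost, the optimal copy count expands as $M^*_{d=\infty} = 1 - 1/\log(1-p) = 1/p + O(1) = \Theta(p^{-1})$. The expected number of noisy inputs consumed per purified output is $M^{*}/\operatorname{P}_{\vec{0}}(M^{*})$. Using the identity $\sum_{m|M}\varphi(m) = M$, Eq.~\eqref{eq:outputs_cgg} collapses in the $d \to \infty$ limit to $\operatorname{P}_{\vec{0}} = (1+(M-1)(1-p)^{M})/M$, and the same expression arises from Eq.~\eqref{eq:outputs_gSWAP_m} for GSG, so both gadgets are treated simultaneously. The key simplification is $(1-p)^{M^{*}} = e^{M^{*}\log(1-p)} = (1-p)/e$, from which $\operatorname{P}_{\vec{0}}(M^{*}) \to e^{-1}$ as $p \to 0$. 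Hence the sampling cost is $M^{*}/\operatorname{P}_{\vec{0}}(M^{*}) = e/p + O(1) = \Theta(p^{-1})$.

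For the optimality claim, I would invoke the lower bound of Childs et al.~\cite{childs2023streaming}: any procedure reducing the depolarising rate from $p$ to $\epsilon$ requires $\Omega(p/\epsilon)$ noisy inputs. Substituting $\epsilon = \Theta(p^{2})$ yields the matching lower bound $\Omega(p^{-1})$, so the $O(p^{-1})$ sampling cost achieved by CGG (and by GSG through the same $d=\infty$ reduction) is asymptotically optimal.

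The main obstacle is verifying that the Childs et al. lower bound applies in the precise regime needed here — namely the $d \to \infty$ limit and a single-round purifier whose output at finite $d$ need not be exactly depolarised — rather than only to recursive SWAP-type constructions. A secondary subtlety is to confirm that $\operatorname{P}_{\vec{0}}(M^{*})$ stays bounded away from zero uniformly for small $p$, which the explicit asymptotic $\operatorname{P}_{\vec{0}}(M^{*}) \to e^{-1}$ resolves.
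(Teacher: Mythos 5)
Your proposal is correct and follows essentially the same route as the paper: Taylor-expanding the closed form of $\tilde{p}^{*}_{d=\infty}$ to get $e p^{2} + O(p^{3})$, and evaluating the expected cost $M^{*}_{d=\infty}/\operatorname{P}_{\vec{0}}(M^{*}_{d=\infty})$, whose leading term $e/p$ matches the paper's Laurent expansion in Appendix~\ref{sec:appendix_proof_CGG_sampling_cost}; your observation that $(1-p)^{M^{*}} = (1-p)/e$ is exactly the simplification that makes $\operatorname{P}_{\vec{0}} \to e^{-1}$ transparent. The one place you genuinely diverge is the optimality step: the paper substitutes $M^{*}_{d=\infty}$ into the RSG bound, Laurent-expands the RSG sampling cost to $\tfrac{2}{ep} + O(1)$, and concludes by matching orders against a protocol already known to be optimal, whereas you invoke the Childs et al.\ lower bound $\Omega(p/\epsilon)$ directly with $\epsilon = \Theta(p^{2})$. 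Your version is cleaner and makes the logical dependence explicit, but it inherits the caveat you yourself flag --- that the lower bound must apply to an arbitrary single-round purifier in the $d\to\infty$ regime, not just to SWAP-type recursions --- a point the paper also leaves implicit by phrasing its conclusion as ``same order as RSG'' rather than as an unconditional matching lower bound. Neither treatment fully discharges that caveat, so your proposal is no weaker than the paper's on this point.
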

\begin{proof}
    We show that under the first-order approximation of $p$, the purification rate and the sampling cost scales in the same order as RSG~\cite{childs2023streaming}.
    The detailed proof is given in Appendix.~\ref{sec:appendix_proof_CGG_sampling_cost}.
\end{proof}

\par
In addition to this optimal purification rate depending on the input depolarising rate $p$, we also note that the error suppression rate still scales linearly, i.e. by a factor of $1/M$, to the number of copies $M$ until $M$ gets closer to the optimal value $M^{*}$.
Figure~\ref{fig:CGG_ps-tildes_M-all}(a) demonstrates that this linear suppression asymptotically holds until $M$ increases to around $p^{-1}/10$.
Thus, if the desired error rate in the purified output is more than ten times larger than the maximal purification rate, one can extract the linear error suppression ability from CGG.
We also observe from Fig.~\ref{fig:CGG_ps-tildes_M-all}(b) that increasing the dimension of input states makes the purification rate worse.


\par
Next, we analyse the purification rate of GSG when taking depolarised inputs.
We show that GSG has the same purification rate as CGG, particularly when the dimension of input states is infinity.
This owes to the fact that the pair-wise SWAP operations in GSG are equivalent to the rotation operations in CGG up to permutation.

\par
The equivalence in the purification rate can be checked by first assuming the depolarised inputs with dimension $d$ and then taking the infinite limit in $d$.
The output state and post-selection probability of GSG taking inputs as Eq.~\eqref{eq:rho_depolarised} becomes
\begin{equation} \label{eq:outputs_gSWAP_m_depolarising_pure}
\begin{split}
    \operatorname{P}_{\vec{0}}\tilde{\rho} 
    &= \Biggl( \frac{1}{M}\left(1-p\right) + \frac{M-1}{M} \\
    &\quad\quad \times \operatorname{Tr}\left[\rho^{2}\right]^{\frac{M}{2}-1} \left(\operatorname{Tr}\left[\rho^{2}\right] - \frac{p^{2}}{d}\right) \Biggr) \rho_{0} \\
    &\quad + \Biggl( \frac{1}{M}p + \frac{M-1}{M} \operatorname{Tr}\left[\rho^{2}\right]^{\frac{M}{2}-1} \frac{p^{2}}{d} \Biggr) \frac{I}{d}, \\
    \operatorname{P}_{\vec{0}}
    &= \frac {1} {M} + \frac {M - 1} {M} \operatorname{Tr}\left[\rho^{2}\right]^{\frac{M}{2}},
\end{split}
\end{equation}
where $\displaystyle \operatorname{Tr}\left[\rho^{2}\right] = 1-2\left(1-\frac{1}{d}\right)p+\left(1-\frac{1}{d}\right)p^{2}$.

\par
When the dimension $d$ goes to infinity in Eq.~\eqref{eq:outputs_gSWAP_m_depolarising_pure}, 
the purified depolarising probability $\tilde{p}$ becomes
\begin{equation} \label{eq:outputs_gSWAP_m_depolarising_pure_d_inf}
\begin{split}
    \tilde{p} 
    &= \frac {\displaystyle 1 + \left(M-1\right) \operatorname{Tr}\left[\rho^{2}\right]^{\frac{M}{2}-1} \frac{p}{d} } 
             {\displaystyle 1 + \left(M-1\right) \operatorname{Tr}\left[\rho^{2}\right]^{\frac{M}{2}} } p \\
    &\overset{d\rightarrow\infty}{\longrightarrow} \frac{1}{ \displaystyle 1 + \left(M - 1\right) \left(1 - p\right)^{M} }p,
\end{split}
\end{equation}
which corresponds to the depolarising probability of the output state from CGG shown in Eq.~\eqref{eq:p_tilde_CGG_d_inf}.
A more detailed discussion can be found in Appendix~\ref{sec:appendix_proof_GSG}.

\section{Discussion \label{sec:discussion}}

\begin{table*}[htbp]
    \centering
    {\scriptsize
    \begin{tabularx}{1.0\linewidth}{*{5}{X}}
        \hline
                                                      & \#(ancillary qubits)                & \#(controlled-SWAP)                     & \#(output states) & group of projectors \\
        \hline\hline 
         SGG~\cite{barenco1997stabilization}          & $O\left(M\log\left(M\right)\right)$ & $O\left(\left(M\log\left(M\right)\right)^{2}\right)$ & $M$               & $S_{M}$ \\
         RSG~\cite{childs2023streaming}               & $O\left(M\right)$                   & $O\left(M\right)$                                    & $2$               & - \\
         RSG~\cite{childs2023streaming} ($M$ outputs) & $O\left(M\right)$                   & $O\left(M\log\left(M\right)\right)$                  & $M$               & - \\
         ESD~\cite{koczor2021exponential}             & $1$                                 & $M-1$                                                & $M$               & $\{I, D_{M}\}$ \\
         CGG (proposed)                               & $O\left(\log\left(M\right)\right)$  & $O\left(M\log\left(M\right)\right)$                  & $M$               & $C_{M}$ \\
         GSG (proposed)                               & $O\left(\log\left(M\right)\right)$  & $O\left(M\log\left(M\right)\right)$                  & $M$               & $\{\mathbb{Z}/2\mathbb{Z}\}^{\log\left(M\right)}$ \\
        \hline
    \end{tabularx}
    }
    \caption{
        The comparison of different state purification protocols in their implementation cost when using $M$ copies.
        SGG stands for symmetric group gadget by Barenco et al.~\cite{barenco1997stabilization}, RSG for recursive SWAP gadget by Childs et al.~\cite{childs2023streaming}, and ESDG for error suppression by derangement by Koczor~\cite{koczor2021exponential}.
        Two cases are compared for RSG; one is for RSG to distil a single output state, which is their original circuit, and the other is to purify all $M$ state copies, which is further analysed in our work.
    }
    \label{tab:comparision}
\end{table*}

\subsection{The Factors Deciding the Purification Rate}

\par
Until the previous section, we have figured out that CGG and GSG are equivalent in their error suppression rate against several types of noise.
The suppression rate of SGG, CGG, and GSG seems to be decided by the proportion of non-identity operations in their projector.
When the permutation operations form a finite group, there is an identity projector in it.
Therefore, the proportion of non-identity projectors scales in the order of the group.
For example, since the projectors of CGG and GSG form the cyclic group $C_{M}$ and the group $\left(\mathbb{Z}/2\mathbb{Z}\right)^{\log\left(M\right)}$ respectively,
the number of non-identity projectors out of $M$ different projectors is $M-1$.
By post-selection, CGG and GSG are extracting one symmetry subspace from $M$ basis.

\par
The idea of projecting multiple state copies to their symmetry subspace based on finite groups is also used in QEM methods, i.e. exponential suppression by derangement (ESD)~\cite{koczor2021exponential} and virtual distillation (VD)~\cite{huggins2021virtual}, which are referred to as purification-based QEM.
Though these QEM methods suppress the error factors in expectation values of observables exponentially smaller with the number of copies $M$, we remark that they do not suppress errors in quantum states scaling with $M$.
Since they use a set of projectors forming group $\{I, D_{M}\}$ with order $2$, where $D_{M}$ is any derangement operator over $M$ copies, the order of this group does not depend on $M$.
The detailed discussion can be found in Appendix~\ref{sec:appendix_purification_based_qem_gadget}.

\subsection{Comparison of Implementation Cost}

\par
To further suppress the factor of the identity operator, we may have to pomp the entropy out of these $M$ copies.
When more implementation cost is allowed, one can use SGG~\cite{barenco1997stabilization} to stabilise $M$ copies into their full symmetric subspace.
Suppressing the global depolarising noise can be optimally achieved by RSG~\cite{childs2023streaming} using exponentially more ancillary qubits than our method while more moderate than SGG.
We compare the circuit implementation cost of each method in Table.~\ref{tab:comparision}.

\par
From Table.~\ref{tab:comparision}, we observe that CGG and GSG have an intermediate implementation cost between SGG and ESD.
The quantum circuits of CGG and GSG reduce the number of ancillary qubits and post-selection from $O\left(M\log\left(M\right)\right)$ to $O\left(\log\left(M\right)\right)$, and reduce the number of controlled-SWAP operations to implement permutation operations from $O\left(M\log\left(M\right)\right)$ to $O\left(M\log\left(M\right)\right)$.
The reduction of the measurement operations also implies the reduction of the cost for preparing a uniform superposition with dimension $M$, from $O\left(\left(M\log\left(M\right)\right)^{2}\right)$ to $O\left(\left(\log\left(M\right)\right)^{2}\right)$, which is smaller than $O\left(M\log\left(M\right)\right)$.
Compared to the purification-based QEM methods~\cite{koczor2021exponential, huggins2021virtual}, the implementation cost of our gadgets is larger only $O\left(\log\left(M\right)\right)$ times with respect to the number of measurements and controlled-SWAP gates.
Therefore, the implementation cost of our proposed method is still in line with that of QEM schemes, which would be feasible in the near- and middle-term quantum devices.

\par
Compared to RSG, CGG and GSG exponentially reduce the number of ancillary qubits from $O\left(M\right)$ to $O\left(\log\left(M\right)\right)$.
In terms of the number of controlled-SWAP operations, RSG consumes $O\left(M\right)$ of them to obtain a single (or at most two) purified output.
However, when it comes to obtaining $M$ purified quantum states using the same strategy as RSG, it will consume $O\left(M\log\left(M\right)\right)$ SWAP operations controlled by $O\left(\log\left(M\right)\right)$ measurements, which means post-selecting all-zero states out of $O\left(\log\left(M\right)\right)$ different states in the ancillary space.
In contrast, our proposed methods still output $M$ equivalent quantum outputs with $O\left(\log\left(M\right)\right)$ SWAP operations controlled by $O\left(M\right)$ ancillary qubits.

\subsection{Potentially Applicable Hardware}

\par
As we have seen in Section~\ref{sec:inputs_with_small_evolution_time}, CGG and GSG suppress the first-order approximation of coherent errors and stochastic errors by the factor of $1/M$ to the number of copies $M$.
This assumption can be justified when we apply the gadget over $M$ copies for every small evolution time $\delta t$, which is the same strategy as SGG~\cite{barenco1997stabilization}.
Therefore, this type of repetitive stabilisation will be particularly effective when the stabilisation process can be applied quickly with negligibly smaller errors compared to the main noisy computation task.
We still expect our gadget to be achieved before FTQC, as the repetitive application of stabilisation does not require logical qubit encoding.

\par
In a single application of our gadgets over $M$ copies, our analysis of the depolarised inputs demonstrates that the purified depolarising rate $\tilde{p}$ is also suppressed by a factor of $1/M$ up to the optimal number of $M$ associated with the input depolarising rate $p$.
We have also seen that our gadgets achieve the purification rate $O\left(p^{2}\right)$ and sampling cost $O\left(p^{-1}\right)$, which is the same as RSG~\cite{childs2023streaming} when $p$ is small. 
Thus, the sampling cost is optimal for purifying depolarised inputs.
Considering the progress of quantum devices and that both RSG and our gadgets are supposed to be performed perfectly in our analysis, the assumption of a small input depolarising rate is likely to fit the actual situation for applying these purification gadgets.

\par
A specific candidate of hardware where our method can be effectively implemented is photonic devices using Hadamard interferometers as discussed in~\cite{chabaud2018optimal}.
Recent literature also demonstrates the potential of photonic devices to perform permutation operations and nonlocal entanglement~\cite{pont2022quantifying, pont2024high}.
As the controlled operation requires nonlocal operation among multiple state copies and ancillary qubits, architectures combining trapped ions with photons~\cite{stephenson2020high, nichol2022elementary, drmota2023robust, drmota2024verifiable, zhang2024experimental} can also be a potential candidate for incorporating our method.

\subsection{Future Directions}

\par
There are many open questions to be explored in the future.
Lighter implementations and the recursive application of our methods would be an interesting direction.
In our gadgets, we are paying $O\left(M\log\left(M\right)\right)$ controlled-SWAP operations so that all of the $M$ outputs are stabilised equally, i.e. all $M$ outputs are purified.
This allows us to stabilise the errors over $M$ copies of the same computation, while in the case where we want to apply it once and retrieve only one of the purified states, it might be possible to reduce the number of controlled-SWAP gates or recursively apply the gadget by reusing all of the outputs.

\par
We can further consider combining our gadget with other quantum error mitigation or error correction methods.
We point out that the virtual channel purification (VCP) recently proposed by Liu et al.~\cite{liu2024virtual} shares a similar quantum circuit to ours, which aims to restore noise-less expectation values by stabilising noise in quantum channels instead of quantum states.
Thus, a real channel distillation consuming multiple noisy channels can be achieved straightforwardly by combining VCP and our method.
Furthermore, the error suppression methods in early FTQC have also been intensely explored~\cite{suzuki2022quantum, gonzales2023fault, piveteau2021error, lostaglio2021error, xiong2020sampling}, our approach may also find its use in designing partial quantum error correction protocols.

\par
The purification rate of our gadgets under more general errors also remains to be examined analytically and numerically.
With the first-order approximation with respect to the evolution time, we show that our gadgets suppress the coherent and stochastic errors linearly smaller to the number of copies.
The performance of CGG and GSG with inputs affected by more general noise without approximation, such as general stochastic errors or coherent errors, also remains to be examined analytically and numerically.
In Appendix~\ref{sec:appendix_general_stochastic_noise}, we give a preliminary analytical observation on the fidelity of output state taking the inputs under the general stochastic noise $\rho = \left(1-p\right) \rho_{0} + p\sigma$ with any density matrix $\sigma$.
This may come into question when we consider combining our purification gadgets with verification protocols, as discussed below.

\par
Analysing the purification rate against general errors is also of interest from the motivation to apply these purification gadgets in verification protocols.
Quantum verification protocols~\cite{gheorghiu2019verification, hayashi2015verifiable, mahadev2018classical, fitzsimons2017unconditionally, leichtle2021verifying} are designed to guarantee the correctness of results obtained from computations offloaded to untrusted devices.
They are typically described in a client-and-server model, in which a client with very limited quantum abilities delegates a computation to a powerful server, a setting likely to be encountered in a future quantum ecosystem.
While very efficient verification protocols exist for computations in BQP, any protocol that avoids prohibitive hardware overheads has only achieved weak security guarantees for computations with nondeterministic classical or quantum output.
One could hope that purification gadgets could be used to improve the quality of outputs produced by such protocols.
This requires a rigorous analysis of the structure of the errors that may affect the outputs of these verification protocols in the presence of a malicious server and to match these errors with classes of noise for which purification gadgets guaranteedly perform well.

\begin{acknowledgments}
BY acknowledges insightful and fruitful discussions with Suguru Endo from NTT.
This work received funding from the ANR research grant ANR-21-CE47-0014 (SecNISQ),
and from the European Union’s Horizon 2020 research and innovation program through the FET project PHOQUSING (“PHOtonic Quantum SamplING machine” – Grant Agreement No. 899544).
This work was supported by the Quantum Advantage Pathfinder (QAP) research programme within the UK's National Quantum Computing Centre (NQCC), and by the Quantum Computing and Simulation (QCS) Hub.
\end{acknowledgments}

\bibliographystyle{quantum}
\bibliography{main}

\onecolumn

\appendix

\section{Outputs of the SWAP Test \label{sec:appendix_swap_test}}

\begin{figure}
    \centering
    \subfloat[\label{fig:qc_sg}]{
        \raisebox{10pt}{
            \includegraphics[width=0.3\textwidth]{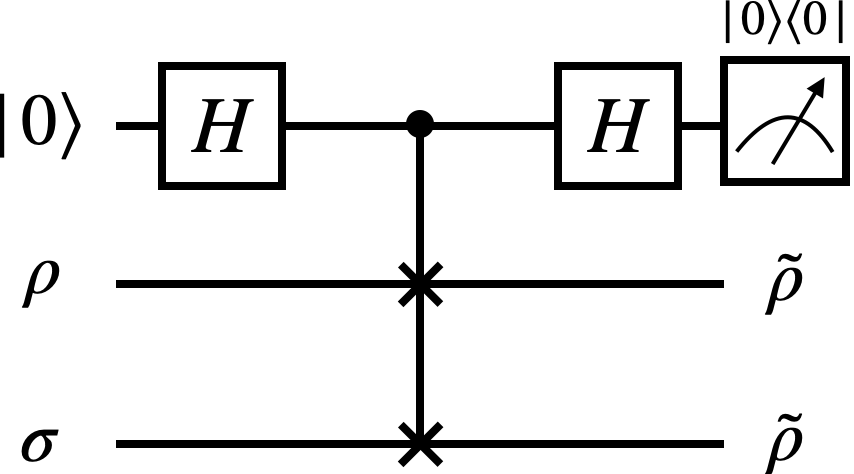}
        }
    }
    \hfill
    \subfloat[\label{fig:tn_sg_loop}]{
        \includegraphics[width=0.25\textwidth]{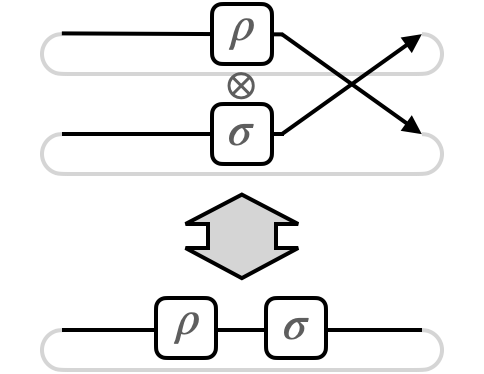}
    }
    \hfill
    \subfloat[\label{fig:tn_sg_line}]{
        \includegraphics[width=0.25\textwidth]{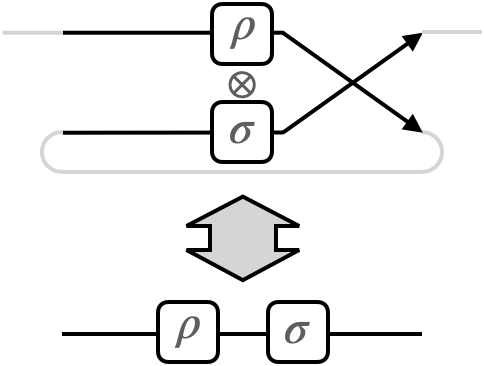}
    }
    \caption{
        (a) The quantum circuit for the SWAP gadget.
            The output state in each wire is denoted by $\tilde{\rho}$.
        (b, c) The tensor network representation of 
        (b): $\operatorname{Tr}\left[\left(\rho\otimes\sigma\right) P_{21}\right] = \operatorname{Tr}\left[\rho\sigma\right]$ and 
        (c): $\operatorname{Tr}_{2}\left[\left(\rho\otimes\sigma\right) P_{21}\right] = \rho\sigma$.
        Taking the partial trace of a quantum register corresponds to drawing a loop around its tensor.
    }
    \label{fig:qcs_tns_sg}
\end{figure}

\par
The quantum circuit of the SWAP gadget is the same as that of the SWAP test, which is described as Fig.~\ref{fig:qcs_tns_sg}(a).
Since this circuit projects the two inputs $\rho$ and $\sigma$ to their swapping invariant subspace, the reduced output in each wire (i.e. register) is the same quantum state.
We let $\tilde{\rho}$ denote this output.
Then, the post-selection probability $\operatorname{P}_{0}$ and the output quantum state $\tilde{\rho}$ can be described as
\begin{equation} \label{eq:outputs_SWAP_rho_sigma}
\begin{split}
    \tilde{\rho} 
    &= \frac {1} {4\operatorname{P}_{0}} \operatorname{Tr}_{2}\left[(P_{12} + P_{21}) \left(\rho\otimes\sigma\right) (P_{12} + P_{21})\right]
    = \frac {1} {4\operatorname{P}_{0}} \operatorname{Tr}_{2}\left[\rho\otimes\sigma + \left(\rho\otimes\sigma\right) P_{21} + P_{21} \left(\rho\otimes\sigma\right) + \sigma\otimes\rho\right] \\
    &= \frac {1} {4\operatorname{P}_{0}} \left(\rho + \operatorname{Tr}_{2}\left[\left(\rho\otimes\sigma\right) P_{21}\right] + \operatorname{Tr}_{2}\left[\left(\rho\otimes\sigma\right) P_{21}\right] + \sigma \right)
    = \frac {1} {4\operatorname{P}_{0}} \left(\rho + \rho\sigma + \sigma\rho + \sigma \right), \\
    \operatorname{P}_{0} 
    &= \operatorname{Tr}\left[\frac {1} {4} \left(\rho + \rho\sigma + \sigma\rho + \sigma \right)\right] = \frac {1} {2} \left(1 + \operatorname{Tr}\left[\rho\sigma\right]\right).
\end{split}
\end{equation}
To obtain the last transform for $\tilde{\rho}$, we use the identity $\operatorname{Tr}_{2}\left[\left(\rho\otimes\sigma\right) P_{21}\right] = \rho\sigma$, which can be checked as below.
Assuming $\displaystyle\rho = \sum_{i}\lambda_{i}|\lambda_{i}\rangle\langle\lambda_{i}|$ and $\displaystyle\sigma = \sum_{j}\lambda_{j}^{'}|\lambda_{j}^{'}\rangle\langle\lambda_{j}^{'}|$,
\begin{equation}
\begin{split}
    \operatorname{Tr}_{2}\left[\left(\rho\otimes\sigma\right) P_{21}\right]
    &= \operatorname{Tr}_{2}\left[\left(\sum_{i,j}\lambda_{i}\lambda_{j}^{'}|\lambda_{i}\rangle\langle\lambda_{i}|\otimes|\lambda_{j}^{'}\rangle\langle\lambda_{j}^{'}|\right) P_{21}\right] 
    = \operatorname{Tr}_{2}\left[\sum_{i,j}\lambda_{i}\lambda_{j}^{'}|\lambda_{i}\rangle\langle\lambda_{j}^{'}|\otimes|\lambda_{j}^{'}\rangle\langle\lambda_{i}|\right] \\
    &= \sum_{i,j}\lambda_{i}\lambda_{j}^{'}|\lambda_{i}\rangle\langle\lambda_{j}^{'}|\operatorname{Tr}_{2}\left[|\lambda_{j}^{'}\rangle\langle\lambda_{i}|\right] 
    = \sum_{i,j}\lambda_{i}\lambda_{j}^{'}|\lambda_{i}\rangle\langle\lambda_{j}^{'}|\langle\lambda_{i}|\lambda_{j}^{'}\rangle 
    = \sum_{i,j}\lambda_{i}\lambda_{j}^{'}|\lambda_{i}\rangle\langle\lambda_{i}|\lambda_{j}^{'}\rangle\langle\lambda_{j}^{'}| \\
    &= \left(\sum_{i}\lambda_{i}|\lambda_{i}\rangle\langle\lambda_{i}|\right)\left(\sum_{j}\lambda_{j}^{'}|\lambda_{j}^{'}\rangle\langle\lambda_{j}^{'}|\right)
    = \rho\sigma.
\end{split}
\end{equation}
This identity can also be easily checked in a tensor network representation, as shown in Fig.~\ref{fig:qcs_tns_sg}(b) and Fig.~\ref{fig:qcs_tns_sg}(c).
Following the same deduction as Eq.~\eqref{eq:outputs_SWAP_rho_sigma}, we also obtain the following rule:
\begin{equation} \label{eq:outputs_SWAP_powers_k_l}
\begin{split}
    \mathcal{E}_{\mathrm{SG}}\left(\rho^{k}\otimes\rho^{l}\right)
    &\sim \operatorname{Tr}_{2}\left[ P_{\mathrm{SG}} \left(\rho^{k}\otimes\rho^{l}\right) P_{\mathrm{SG}}^{\dagger} \right] 
    = \frac{1}{4} \left( \operatorname{Tr}_{2}\left[ \rho^{l} \right] \rho^{k} + \operatorname{Tr}_{2}\left[ \rho^{k} \right] \rho^{l} + 2\rho^{k+l}\right).
\end{split}
\end{equation}

\section{\texorpdfstring{Example of CGG when $M=8$}{} \label{sec:appendix_example_CGG}}

\par
In Section~\ref{sec:cgg_gsg}, we have seen that the general form of each output state from CGG becomes
\begin{equation} \label{eq:rho_cgg_appendix}
\begin{split}
    \operatorname{P}_{\vec{0}} \tilde{\rho}
    &= \operatorname{Tr}_{2\ldots M}\left[\left(\frac{1}{M}\sum_{P_{i}\in{C_{M}}}P_{i}\right) \left(\rho^{\otimes M}\right) \left(\frac{1}{M}\sum_{P_{i}\in{C_{M}}}P_{i}\right)\right]
     = \operatorname{Tr}_{2\ldots M}\left[\left(\rho^{\otimes M}\right) \left(\frac{1}{M}\sum_{P_{i}\in{C_{M}}}P_{i}\right)\right] \\
    &= \frac{1}{M} \sum_{m|M} \varphi(m)\operatorname{Tr}\left[\rho^{m}\right]^{\frac{M}{m}-1}\rho^{m}.
\end{split}
\end{equation}
This operation is applying the projector $\displaystyle P_{\mathrm{CGG}}=\frac{1}{M}\sum_{P_{i}\in{C_{M}}}P_{i}$ to the input states $\rho^{\otimes M}$.

\begin{figure}[htbp]
    \centering
    \subfloat[\label{fig:qc_cgg_m8}]{
        \includegraphics[width=0.30\textwidth]{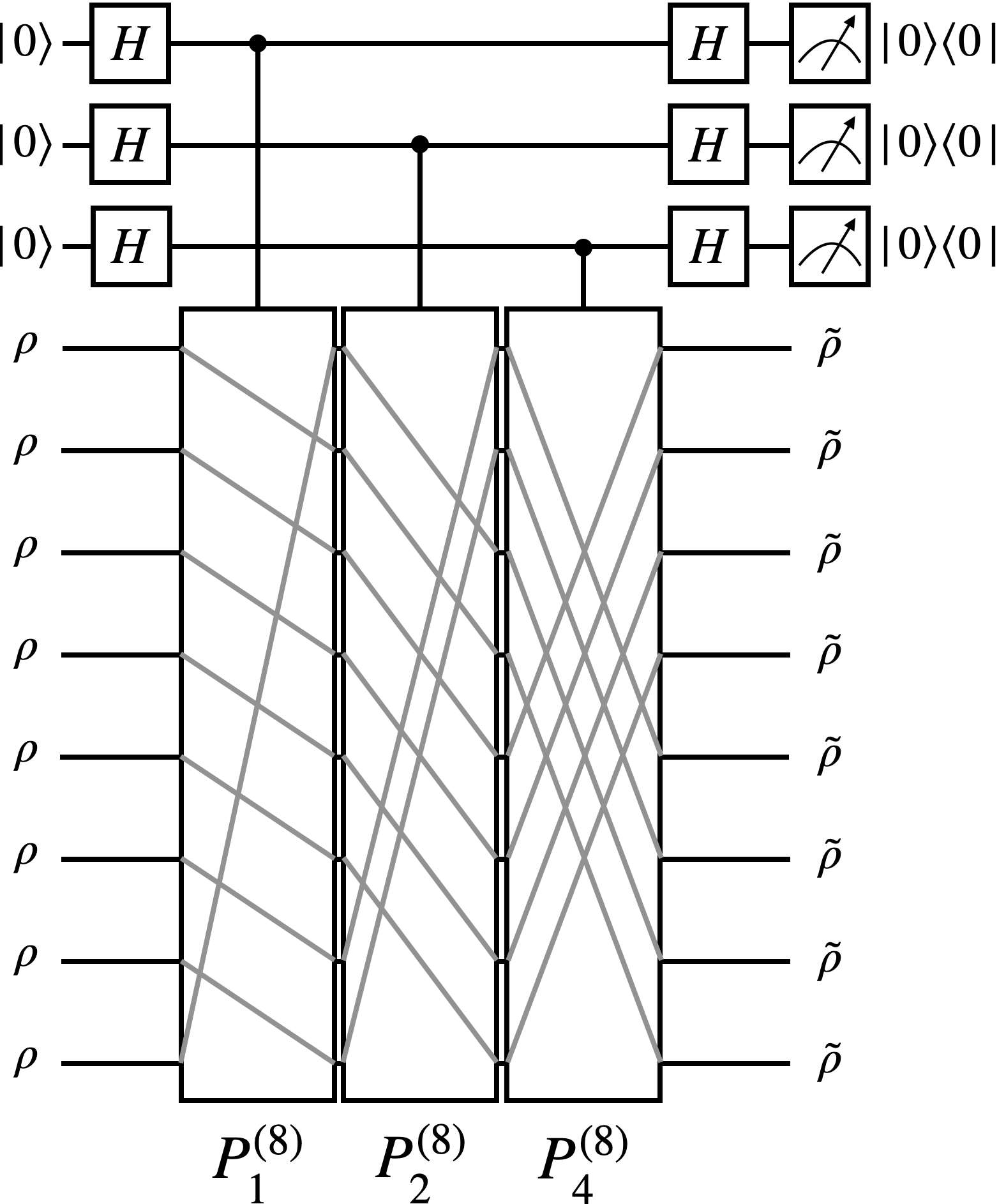}
    }
    \subfloat[\label{fig:qcs_cP_1^4}]{
        \raisebox{20pt}{
            \includegraphics[width=0.25\textwidth]{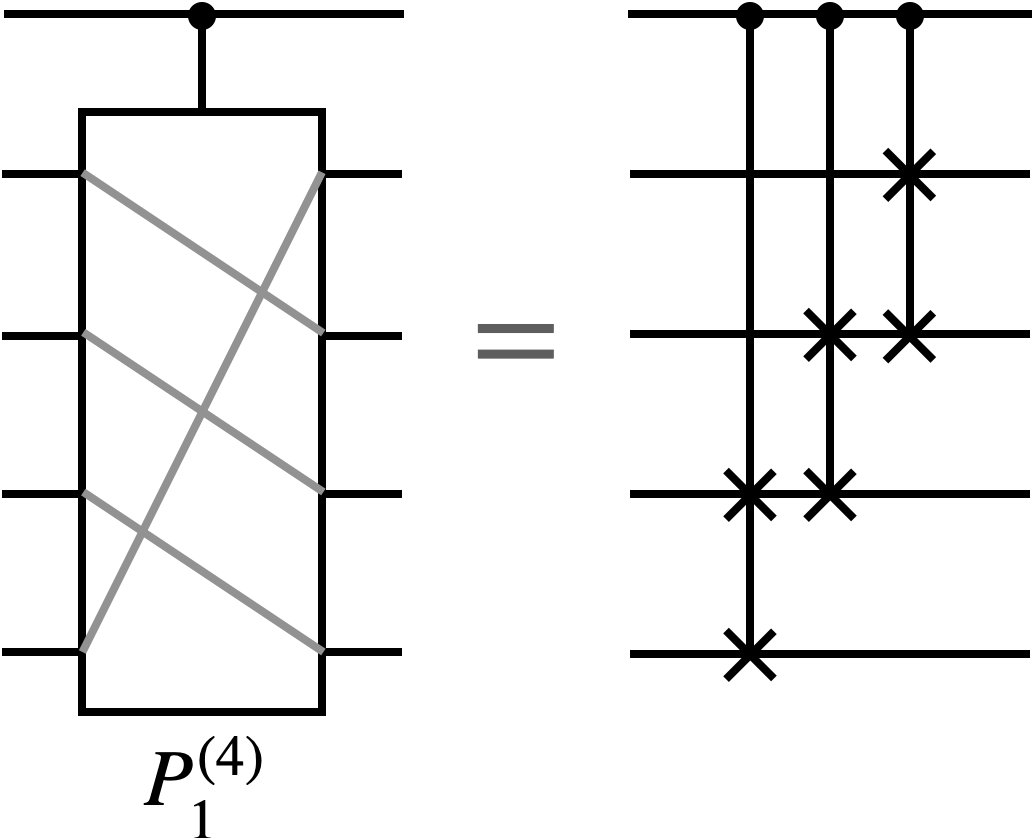}
        }
    }
    \caption{
        (a) The whole quantum circuit of CGG to stabilise $M=8$ noisy inputs.
        (b) The controlled-$P_{1}^{(4)}$ for $M=4$.
            The operator $P_{1}^{(4)}$ represents the cyclic shift to the next element, which can be decomposed into a sequential application of controlled-SWAP gates.
            The grey lines connect the original place and the destination of the quantum state moved by the operation.
    }
    \label{fig:qcs_cgg_m8}
\end{figure}

\par
When $M$ takes the power of a natural number, i.e. $M = {\alpha}^{n}$, the output reduced density matrix becomes
\begin{equation} \label{eq:rho_cyclic_alpha^n}
\begin{split}
    \operatorname{P}_{\vec{0}} \tilde{\rho}
    &= \frac{1}{{\alpha}^{n}}\rho + \frac{1}{{\alpha}^{n}} \sum_{i=1}^{n} \varphi({\alpha}^{i})\operatorname{Tr}\left[\rho^{{\alpha}^{i}}\right]^{{\alpha}^{n-i}-1}\rho^{{\alpha}^{i}}
    = \frac{1}{{\alpha}^{n}}\rho + \frac{1}{{\alpha}^{n}} \sum_{i=1}^{n} {\alpha}^{i-1}\operatorname{Tr}\left[\rho^{{\alpha}^{i}}\right]^{{\alpha}^{n-i}-1}\rho^{{\alpha}^{i}},
\end{split}
\end{equation}
where we use $\varphi({\alpha}^{n}) ={\alpha}^{n-1}$ for two integers $\alpha$ and $n$.
In particular, when $\alpha = 2$, the purified state $\tilde{\rho}$ becomes
\begin{equation} \label{eq:rho_cyclic_2^n}
\begin{split}
    \operatorname{P}_{\vec{0}} \tilde{\rho}
    &= \frac{1}{2^{n}}\rho + \frac{1}{2^{n}} \sum_{i=1}^{n} 2^{i-1}\operatorname{Tr}\left[\rho^{2^{i}}\right]^{2^{n-i}-1}\rho^{2^{i}}.
\end{split}
\end{equation}
For $n=3$, i.e. $M=8$, the purified state in Eq.~\eqref{eq:rho_cyclic_2^n} becomes
\begin{equation} \label{eq:rho_cyclic_2^3}
\begin{split}
    \operatorname{P}_{\vec{0}} \tilde{\rho}
    &= \frac{1}{8} \rho + \frac{1}{8} \sum_{i=1}^{3} 2^{i-1}\operatorname{Tr}\left[\rho^{2^{i}}\right]^{2^{3-i}-1}\rho^{2^{i}}
    = \frac{1}{8} \left( \rho + \operatorname{Tr}\left[\rho^{2}\right]^{3} \rho^{2} + 2 \operatorname{Tr}\left[\rho^{4}\right] \rho^{4} + 4 \rho^{8} \right).
\end{split}
\end{equation}

\par
We take this case as an example.
First, to apply the CGG projector 
\begin{equation} \label{eq:P_CGG^8}
\begin{split}
    P_{\mathrm{CGG}}^{(8)}=\frac{1}{M}\left(P_{0}^{(8)}+P_{1}^{(8)}+P_{2}^{(8)}+P_{3}^{(8)}+P_{4}^{(8)}+P_{5}^{(8)}+P_{6}^{(8)}+P_{7}^{(8)}\right),
\end{split}
\end{equation}
we use the quantum circuit in Fig.~\ref{fig:qcs_cgg_m8}(a).
When using three ancillary qubits, this can be achieved by sequentially applying the cyclic shift operations with $1$, $2$, and $4$ shifts, controlled by each ancillary qubit.
Each shift operation consists of $M-1=7$ SWAP gates controlled by a single ancillary qubit.
Figure~\ref{fig:qcs_cgg_m8}(b) shows the controlled-$P_{1}^{(4)}$ gate, which can be implemented by three (i.e. $4-1$) controlled-SWAP gates.

\begin{table}[htbp]
    \centering
    \begin{tabularx}{0.9\linewidth}{*{2}{X}|*{8}{X}}
        \hline
        &  & \multicolumn{8}{c}{power degree of $\rho$} \\
        &  & $1$ & $2$ & $3$ & $4$ & $5$ & $6$ & $7$ & $8$ \\
        \hline\hline 
        \multirow{8}{10pt}{$M$} & $1$ & $1$ &     &     &   &   &   &   &   \\
                                & $2$ & $1$ & $1$ &     &   &   &   &   &   \\
                                & $3$ & $1$ &     & $2$ &   &   &   &   &   \\
                                & $4$ & $1$ & $\operatorname{Tr}\left[\rho^{2}\right]$ &   & $2$ &   &   &   &   \\
                                & $5$ & $1$ &     &     &   & $4$ &   &   &   \\
                                & $6$ & $1$ & $\operatorname{Tr}\left[\rho^{2}\right]^{2}$ & $2\operatorname{Tr}\left[\rho^{3}\right]$ &   &   & $2$ &   &   \\
                                & $7$ & $1$ &     &     &   &   &   & $6$ &   \\
                                & $8$ & $1$ & $\operatorname{Tr}\left[\rho^{2}\right]^{3}$ &     & $2\operatorname{Tr}\left[\rho^{4}\right]$ &   &   &   & $4$ \\
        \hline
    \end{tabularx}
    \caption{
        The table of coefficients in each term of Eq.~\eqref{eq:outputs_cgg}.
        Ignoring the factor of trace values of powers of $\rho$, the integer coefficients correspond to the values of Euler's totient function $\varphi(m)$ until they reach the maximum degree $M$, i.e. $\varphi\left(M\right)$.
    }
    \label{tab:polynomials_rho}
\end{table}

\begin{figure}[htbp]
    \includegraphics[width=0.98\textwidth]{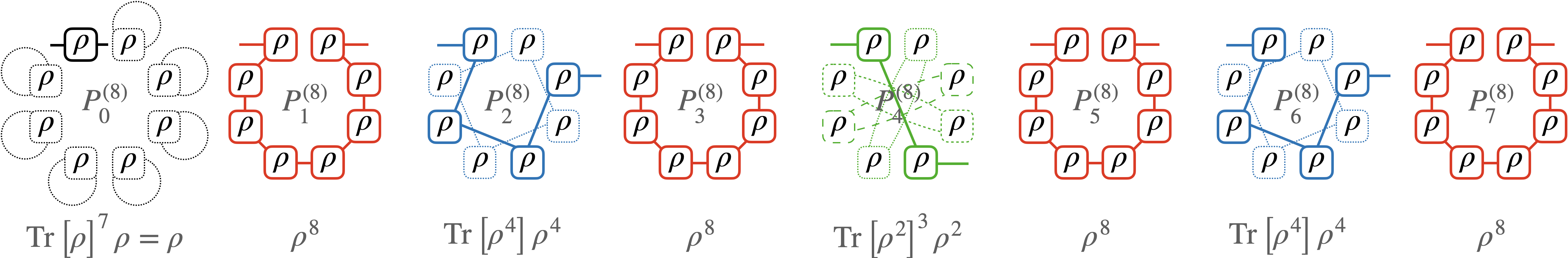}
    \caption{
        The illustration of rotation operations in the cyclic group $C_{8}$ with order $8$.
        $P_{i}^{(8)}$ denotes the rotation which maps teach element $k\in C_{8}$ to $k+i$, and $P_{0}^{(8)}$ represents the identity operation.
    }
    \label{fig:c_8}
\end{figure}

\par
Table~\ref{tab:polynomials_rho} aligns the coefficients of the polynomials of $\rho$ with different $M$, and the coefficient of each term in Eq.~\eqref{eq:rho_cyclic_2^3} can be visually explained by Fig.~\ref{fig:c_8}.
The multiplication of each rotation projector $P_{i}^{(8)}$ and the input states $\rho^{\otimes 8}$ followed by the partial trace to keep the first register can be understood as connecting tensors of $\rho$ and taking loops of $\rho$ accordingly in a tensor network representation.
In Fig.~\ref{fig:c_8}, we observe that $P_{1}^{(8)}$, $P_{3}^{(8)}$, $P_{5}^{(8)}$, and $P_{7}^{(8)}$ make irreducible transition loops with length $8$ since $1,3,5,7$ are coprime to $8=2^{3}$.
These operators contribute to the coefficient of term $\rho^{8}$ in Eq.~\eqref{eq:rho_cyclic_2^3}.
The operators $P_{2}^{(8)}$ and $P_{6}^{(8)}$ contribute to the coefficient of $\rho^{4}$ since the transition of elements by them results in two loops with length $4$.
Finally, $P_{4}^{(8)}$ makes four loops with length $2$ as $P_{4}^{(8)}$ can be decomposed to four independent pair-wise SWAP operations.

\section{Applying SGG, CGG and GSG in A Short-time Evolution \label{sec:appendix_proofs_short_time_evolution}}

\par
We check that CGG and GSG suppress coherent errors and stochastic errors to the same level as SGG when applying them repetitively in every short-time evolution.
Using the same noise assumptions in Barenco et al.~\cite{barenco1997stabilization}, we analyse the error suppression rate of CGG and GSG for coherent errors in Appendix~\ref{sec:appendix_Suppressing_Coherent_Errors_by_CGG_and_GSG} and stochastic errors in Appendix~\ref{sec:appendix_Suppressing_Stochastic_Errors_by_CGG_and_GSG}.

\subsection{Suppressing Coherent Errors \label{sec:appendix_Suppressing_Coherent_Errors_by_CGG_and_GSG}}

\par
We introduce several assumptions on the coherent errors in input states as assumed in~\cite{barenco1997stabilization}.
The coherent error can be seen as the random independent unitary rotation.
Suppose that the $M$ copies of the input state are subject to independent drift of eigenvalues as $e^{i H_{k} t}$ on the $k$-th copy, where the Hamiltonian $H_{k}$ is random and independent.
Hence, the whole quantum state of $M$ noisy inputs is assumed to be
\begin{equation}
\begin{split}
    |\psi\rangle = \bigotimes_{k=1}^{M} e^{iH_{k}\delta t}|\lambda_{0}\rangle
\end{split}
\end{equation}
We can introduce the specific form of Hamiltonian $H_{k}$ as $d\times d$ matrix
\begin{equation}
\begin{split}
    H_{k} =
    \left(\begin{array}{cc}
        a_{k}       & \vec{c}_{k}^{\dagger} \\
        \vec{c}_{k} & B_{k}
    \end{array}\right),
\end{split}
\end{equation}
assuming the dimension of each quantum state in each quantum register is $d$.
Finally, the purification gadget is applied in a short interval $\delta t$ and performed instantaneously.
This introduces the following quantum state for the approximation of $|\psi\rangle$.
\begin{equation}
\begin{split}
    |\psi\left(\delta t\right)\rangle 
    &\sim \bigotimes_{k=1}^{M} \left(I+iH_{k}\delta t\right) |\lambda_{0}\rangle
    = \bigotimes_{k=1}^{M} \left( \left(1+ia_{k}\delta t\right) |\lambda_{0}\rangle + i\delta t |\lambda_{\perp, k}\rangle \right),
\end{split}
\end{equation}
where $\displaystyle |\lambda_{\perp, k}\rangle = \sum_{j>0}c_{kj} |\lambda_{j}\rangle$.
By keeping only the contribution up to $O\left(\delta t\right)$ terms,
\begin{equation} \label{eq:psi_coherent_error_first_order_cgg}
\begin{split}
    |\psi\left(\delta t\right)\rangle 
    &= \left(1+i\delta t\sum_{k=1}^{M}a_{k}\right) |\lambda_{0}\rangle^{\otimes M}
    + i\delta t \sum_{k=1}^{M} \underset{k\text{-th}}{|\lambda_{0}\rangle\cdots|\lambda_{0}\rangle|\lambda_{\perp, k}\rangle|\lambda_{0}\rangle\cdots|\lambda_{0}\rangle}
    + O\left((\delta t)^{2}\right).
\end{split}
\end{equation}

\par
Here we remark that applying to Eq.~\eqref{eq:psi_coherent_error_first_order_cgg} the projector $\displaystyle P_{\mathrm{SGG}}$ in SGG, the projector $\displaystyle P_{\mathrm{CGG}}$ in CGG, and the projector $\displaystyle P_{\mathrm{GSG}}$ in GSG yield the same output, because any permutations of $|\lambda_{0}\rangle$ and $|\lambda_{\perp, k}\rangle$ in each term $\displaystyle |\lambda_{0}\rangle\cdots|\lambda_{0}\rangle|\lambda_{\perp, k}\rangle|\lambda_{0}\rangle\cdots|\lambda_{0}\rangle$ in Eq.~\eqref{eq:psi_coherent_error_first_order_cgg} can be seen as rotation operations.
As a result, $\displaystyle |\lambda_{0}\rangle\cdots|\lambda_{0}\rangle|\lambda_{\perp, k}\rangle|\lambda_{0}\rangle\cdots|\lambda_{0}\rangle$ becomes
\begin{equation}
\begin{split}
    P_{\mathrm{CGG}} \underset{k\text{-th}}{|\lambda_{0}\rangle\cdots|\lambda_{0}\rangle|\lambda_{\perp, k}\rangle|\lambda_{0}\rangle\cdots|\lambda_{0}\rangle} = \left(\frac{1}{M}\sum_{l=1}^{M}P_{l}\right) \underset{k\text{-th}}{|\lambda_{0}\rangle\cdots|\lambda_{0}\rangle|\lambda_{\perp, k}\rangle|\lambda_{0}\rangle\cdots|\lambda_{0}\rangle} = \frac{1}{\sqrt{M}}|E_{1,k}\rangle,
\end{split}
\end{equation}
where we define $\displaystyle |E_{1,k}\rangle$ as
\begin{equation}
\begin{split}
    |E_{1,k}\rangle 
    &= \frac{1}{\sqrt{M}} \left( |\lambda_{\perp, k}\rangle|\lambda_{0}\rangle\cdots|\lambda_{0}\rangle
    + |\lambda_{0}\rangle|\lambda_{\perp, k}\rangle\cdots|\lambda_{0}\rangle
    + \cdots
    + |\lambda_{0}\rangle\cdots|\lambda_{0}\rangle|\lambda_{\perp, k}\rangle \right).
\end{split}
\end{equation}
Using $|E_{1,k}\rangle$, the whole projected states $P_{\mathrm{SGG}}|\psi\left(\delta t\right)\rangle$ by SGG and $P_{\mathrm{CGG}}|\psi\left(\delta t\right)\rangle$ by CGG become
\begin{equation} \label{eq:psi_coherent_error_first_order_projected_cgg}
\begin{split}
    P_{\mathrm{SGG}}|\psi\left(\delta t\right)\rangle 
     = P_{\mathrm{CGG}}|\psi\left(\delta t\right)\rangle 
    &= \left(1+i\delta t\sum_{k=1}^{M}a_{k}\right) |\lambda_{0}\rangle^{\otimes M}
    + \frac{i\delta t}{\sqrt{M}}\sum_{k=1}^{M} |E_{1,k}\rangle + O\left((\delta t)^{2}\right).
\end{split}
\end{equation}

\par
As the projected states are the same between SGG and CGG, we compute only the unnormalised probability of quantum states other than $|\lambda_{0}\rangle$ from $P_{\mathrm{CGG}}|\psi\rangle$ being in the first quantum register after CGG. 
\begin{equation} \label{eq:prob_eigenvector_nondominant_coherent_error_first_order_projected_cgg}
\begin{split}
    & \left\| \left(\left(I-|\lambda_{0}\rangle\langle\lambda_{0}|\right)\otimes I^{\otimes(M-1)}\right) P_{\mathrm{CGG}}|\psi\left(\delta t\right)\rangle \right\|_{2}^{2} \\
    =& \Biggl\| \left(\left(I-|\lambda_{0}\rangle\langle\lambda_{0}|\right)\otimes I^{\otimes(M-1)}\right)
    \frac{i\delta t}{\sqrt{M}}\sum_{k=1}^{M} \sum_{l=1}^{M} \underset{l\text{-th}}{|\lambda_{0}\rangle\cdots|\lambda_{0}\rangle|\lambda_{\perp,k}\rangle|\lambda_{0}\rangle\cdots|\lambda_{0}\rangle} \Biggr\|_{2}^{2} \\
    =& \left(\frac{\delta t}{M}\right)^{2} \left\| \sum_{k=1}^{M} \left( \sum_{l=1}^{M} \underset{l\text{-th}}{|\lambda_{0}\rangle\cdots|\lambda_{0}\rangle|\lambda_{\perp,k}\rangle|\lambda_{0}\rangle\cdots|\lambda_{0}\rangle}
    - \sum_{l=2}^{M} \underset{l\text{-th}}{|\lambda_{0}\rangle\cdots|\lambda_{0}\rangle|\lambda_{\perp,k}\rangle|\lambda_{0}\rangle\cdots|\lambda_{0}\rangle} \right) \right\|_{2}^{2} \\
    =& \left(\frac{\delta t}{M}\right)^{2} \left\| \sum_{k=1}^{M} |\lambda_{\perp,k}\rangle \right\|^{2}
    = \frac{\left(\delta t\right)^{2}}{M} \left\| \frac{1}{\sqrt{M}}\sum_{k=1}^{M} |\lambda_{\perp,k}\rangle \right\|_{2}^{2} \\
    =& \frac{\left(\delta t\right)^{2}}{M} \langle\tilde{\lambda}_{\perp}|\tilde{\lambda}_{\perp}\rangle.
\end{split}
\end{equation}
where $|\tilde{\lambda}_{\perp}\rangle$ is defined as the average deviated state $\displaystyle |\tilde{\lambda}_{\perp}\rangle = \frac{1}{\sqrt{M}}\sum_{k=1}^{M} |\lambda_{\perp,k}\rangle$.

\par
On the other hand, the unnormalised probability of quantum states other than $|\lambda_{0}\rangle$ from $P_{\mathrm{CGG}}|\psi\left(\delta t\right)\rangle$ being in the first quantum register without the subspace projection by CGG is
\begin{equation} \label{eq:prob_eigenvector_nondominant_coherent_error_noisy}
\begin{split}
    & \left\| \left(\left(I-|\lambda_{0}\rangle\langle\lambda_{0}|\right)\otimes I^{\otimes(M-1)}\right) |\psi\left(\delta t\right)\rangle \right\|_{2}^{2} \\
    =& \left(\delta t\right)^{2} \Biggl\| \left(\left(I-|\lambda_{0}\rangle\langle\lambda_{0}|\right)\otimes I^{\otimes(M-1)}\right)
    \sum_{k=1}^{M} \underset{k\text{-th}}{|\lambda_{0}\rangle\cdots|\lambda_{0}\rangle|\lambda_{\perp,k}\rangle|\lambda_{0}\rangle\cdots|\lambda_{0}\rangle} \Biggr\|_{2}^{2} \\
    =& \left(\delta t\right)^{2} \left\| \sum_{k=1}^{M} \underset{k\text{-th}}{|\lambda_{0}\rangle\cdots|\lambda_{0}\rangle|\lambda_{\perp,k}\rangle|\lambda_{0}\rangle\cdots|\lambda_{0}\rangle}
    - \sum_{k=2}^{M} \underset{k\text{-th}}{|\lambda_{0}\rangle\cdots|\lambda_{0}\rangle|\lambda_{\perp,k}\rangle|\lambda_{0}\rangle\cdots|\lambda_{0}\rangle} \right\|_{2}^{2} \\
    =& \left(\delta t\right)^{2} \langle\lambda_{\perp,1}|\lambda_{\perp,1}\rangle.
\end{split}
\end{equation}

\par
When the devices are intended for state storage, the rate of drift is suitably bounded as $\displaystyle \left|\text{eigenvalues of }H_{k}\right| \leq \epsilon$ for $k=1, \ldots, M$ with a small constant $\epsilon$.
With this assumption, both $\langle\lambda_{\perp,1}|\lambda_{\perp,1}\rangle$ and $\langle\tilde{\lambda}_{\perp}|\tilde{\lambda}_{\perp}\rangle$ can be bounded by the Frobenius norm $\left\|H_{k}\right\|_{F} \leq d\epsilon^{2}$ and $\displaystyle \|\tilde{H}\|_{F} = \left\|\frac{1}{M}\sum_{k=1}^{M}H_{k}\right\|_{F} \leq d\epsilon^{2}$ as
\begin{equation}
\begin{split}
    \langle\lambda_{\perp,1}|\lambda_{\perp,1}\rangle 
    &= \|\vec{c}_{1}\|_{2}^{2} \leq \left\|H_{k}\right\|_{F} \leq d\epsilon^{2}, \\
    \langle\tilde{\lambda}_{\perp}|\tilde{\lambda}_{\perp}\rangle 
    &= \|\tilde{c}\|_{2}^{2} \leq \|\tilde{H}\|_{F} \leq d\epsilon^{2}, \\
\end{split}
\end{equation}
where $\tilde{c}$ is the average vector defined by $\displaystyle \tilde{c} = \frac{1}{M}\sum_{k=1}^{M}\vec{c}_{k}$.
Therefore, we observe that the probability of obtaining deviated states after CGG from $|\lambda_{0}\rangle$ is suppressed into $1/M$ of that of the noisy state without CGG.

\par
The success probability of GSG becomes the same as that of CGG because $P_{\mathrm{GSG}}$ also shuffles $|\lambda_{\perp, k}\rangle$ in $|\psi\left(\delta t\right)\rangle$ equally to any positions, resulting in Eq.~\eqref{eq:prob_eigenvector_nondominant_coherent_error_first_order_projected_cgg}.

\subsection{Suppressing Stochastic Errors \label{sec:appendix_Suppressing_Stochastic_Errors_by_CGG_and_GSG}}

\par
The model of stochastic errors can be introduced as the independent interaction between each state and its environment such that after a short period of time $\delta t$, the state of the $i$-th copy will have undergone an evolution $\rho^{(i)} = \rho_{0} + \sigma_{i}$ for some Hermitian $\sigma_{i}$ satisfying $\operatorname{Tr}\left[\sigma_{i}\right] = 0$.
Retaining only the terms of the first-order in the perturbations $\{\sigma_{i}\}_{i=1}^{M}$, the overall $M$ redundant noisy inputs at time $\delta t$ becomes
\begin{equation} \label{eq:rho_stochastic_perturbed}
\begin{split}
    \rho^{(1\ldots M)} = \rho_{0} \otimes \cdots \otimes \rho_{0}
    &+ \sigma_{1} \otimes \rho_{0} \otimes \cdots \otimes \rho_{0} \\
    &+ \rho_{0} \otimes \sigma_{2} \otimes \cdots \otimes \rho_{0} \\
    &+ \cdots \\
    &+ \rho_{0} \otimes \rho_{0} \otimes \cdots \otimes \sigma_{M} \\
    &+ O\left(\sigma_{i} \sigma_{j}\right).
\end{split}
\end{equation}
We are going to apply each CGG and GSG to this state, respectively, keeping only the first-order perturbation of $\{\sigma_{i}\}_{i=1}^{M}$ during the calculation.

\par
According to~\cite{barenco1997stabilization}, applying the full symmetric group projector $P_{\mathrm{SGG}}$ to $\rho^{(1\ldots M)}$ and tracing out the quantum registers except for the first one, the output quantum state $\tilde{\rho}$ is described as
\begin{equation}
\begin{split}
    \tilde{\rho} 
    &= {\left(1-(M-1) \operatorname{Tr}\left[\rho_{0} \tilde{\sigma}\right]\right) \rho_{0}+\frac{1}{M} \tilde{\sigma} }
    +(M-1)\left(A \rho_{0} \tilde{\sigma} \rho_{0}+B\left(\rho_{0} \tilde{\sigma}+\tilde{\sigma} \rho_{0}\right)+C \rho_{0} \operatorname{Tr}\left[\rho_{0}\tilde{\sigma}\right]\right)
    +O\left(\sigma_{i} \sigma_{j}\right),
\end{split}
\end{equation}
where $A + 2B + C = 1$, and $\displaystyle \tilde{\sigma} = \frac{1}{M}\sum_{i=1}^{M}\sigma_{i}$.

\par
Next, we calculate the output state from CGG.
Following the discussion in~\cite{barenco1997stabilization}, we consider the application of the projection operator $\displaystyle \frac{1}{M}\sum_{i=0}^{M-1}P_{i}^{(M)}$ to each of the $M$ terms of $\rho_{0}\otimes\cdots\otimes\sigma_{i}\otimes\cdots\otimes\rho_{0}$ in Eq.~\eqref{eq:rho_stochastic_perturbed} of the form
\begin{equation} \label{eq:term_cgg_stochastic_perturbed}
\begin{split}
    \frac{1}{M^{2}} P_{l}^{(M)} \left( \rho_{0} \otimes \cdots \otimes \sigma_{i} \otimes \cdots \otimes \rho_{0} \right) P_{r}^{(M)},
\end{split}
\end{equation}
where $l, r\in\{0,\ldots,M-1\}$.
For each $\sigma_{i}$, we observe that the $M^{2}$ terms taking the form of~\eqref{eq:term_cgg_stochastic_perturbed} reduce to the following cases.
\begin{enumerate}
    \item One term equal to $\displaystyle \frac{1}{M^{2}}\sigma_{i}$, when $P_{l} = P_{M-i-1}^{(M)}$ and $P_{r} = P_{i}^{(M)}$.
    \item $M-1$ terms equal to $0$, when $P_{l} = P_{M-j-1}^{(M)}$ and $P_{r} = P_{j}^{(M)}$ for $j\neq i$.
    \item $M-1$ terms equal to $\displaystyle \frac{1}{M^{2}}\rho_{0}\sigma_{i}$, when $P_{l} = P_{M-i-1}^{(M)}$ and $P_{r} = P_{j}^{(M)}$ for $j\neq i$.
    \item $M-1$ terms equal to $\displaystyle \frac{1}{M^{2}}\sigma_{i}\rho_{0}$, when $P_{l} = P_{M-j-1}^{(M)}$ for $j\neq i$ and $P_{r} = P_{i}^{(M)}$.
    \item $(M-1)(M-2)$ terms equal to $\displaystyle \frac{1}{M^{2}}\rho_{0}\sigma_{i}\rho_{0}$, when $P_{l} = P_{M-j-1}^{(M)}$ for $j\neq i$ and $P_{r} = P_{k}^{(M)}$ for $k\neq i$ and $P_{l}^{\dagger}\neq P_{r}$.
\end{enumerate}
These five cases are visualised in Fig.~\ref{fig:patterns_cgg}.

\begin{figure*}[htbp]
    \centering
    \includegraphics[width=\textwidth]{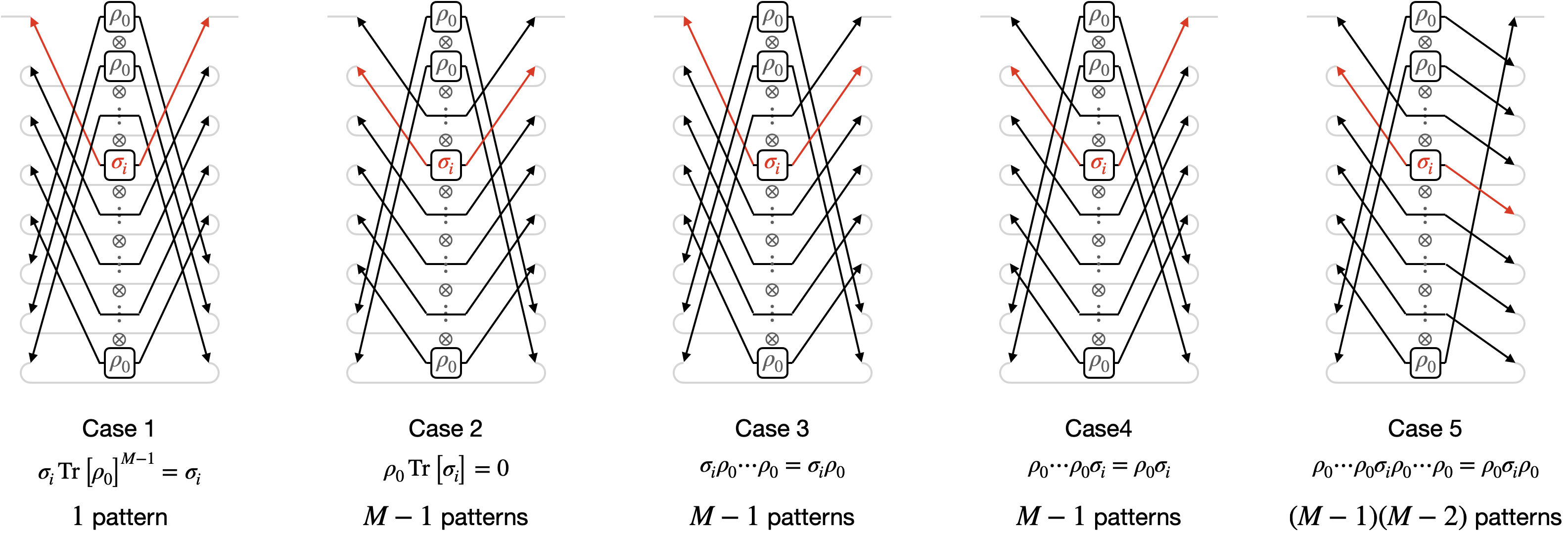}
    \caption{
        The five cases of combinations of cyclic projectors resulting in different quantum states in tensor network representation.
        The effect of the cyclic projectors is shown by the arrows.
        The partial trace operations taken from the second quantum register to the last quantum register are shown by grey loops.
    }
    \label{fig:patterns_cgg}
\end{figure*}

\par
Thus, taking the average $\displaystyle \tilde{\sigma} = \frac{1}{M}\sum_{i=1}^{M}\sigma_{i}$ of the local noise terms $\{\sigma_{i}\}_{i=1}^{M}$, the resulting density matrix $\tilde{\rho}'_{\mathrm{CGG}}$ before normalisation becomes
\begin{equation}
\begin{split}
    \tilde{\rho}'_{\mathrm{CGG}}
    &= \rho_{0}+\sum_{i=1}^{M}\biggl(\frac{1}{M^{2}}\left(\sigma_{i} + (M-1)\left(\rho_{0} \sigma_{i} + \sigma_{i} \rho_{0}\right) + (M-1)(M-2) \rho_{0} \sigma_{i} \rho_{0}\right)\biggr) \\
    &= \rho_{0}+\frac{1}{M} \tilde{\sigma}+\frac{M-1}{M}\left(\rho_{0} \tilde{\sigma}+\tilde{\sigma} \rho_{0}+(M-2) \rho_{0} \tilde{\sigma} \rho_{0}\right).
\end{split}
\end{equation}
This density matrix has a trace given by
\begin{equation}
\begin{split}
    \operatorname{Tr}\left[\tilde{\rho}'_{\mathrm{CGG}}\right]
    &=1+\frac{M-1}{M} \left(1+1+M-2\right) \operatorname{Tr}\left[\rho_{0} \tilde{\sigma}\right]
    =1+(M-1) \operatorname{Tr}\left[\rho_{0} \tilde{\sigma}\right].
\end{split}
\end{equation}
Eventually, we obtain the purified quantum state $\tilde{\rho}$ as
\begin{equation}
\begin{split}
    \tilde{\rho}_{\mathrm{CGG}}
    = \frac { \tilde{\rho}'_{\mathrm{CGG}} } { \operatorname{Tr}\left[\tilde{\rho}'_{\mathrm{CGG}}\right] }
    &=\left(1 - (M-1)\operatorname{Tr}\left[\rho_{0} \tilde{\sigma}\right]\right)\rho_{0} + \frac{1}{M} \tilde{\sigma} + \frac{M-1}{M} \left( \rho_{0}\tilde{\sigma} + \tilde{\sigma}\rho_{0} + (M-2) \rho_{0}\tilde{\sigma}\rho_{0} \right).
\end{split}
\end{equation}

\par
GSG follows a similar discussion to CGG and gives the same output as CGG since the difference between CGG and GSG only lies in the way to shuffle the input quantum registers, which results in slightly different proportion of $\rho_{0}$, $\tilde{\sigma}$, $\rho_{0}\tilde{\sigma}$, $\tilde{\sigma}\rho_{0}$, $\operatorname{Tr}\left[\rho_{0}\tilde{\sigma}\right]\rho_{0}$ and $\rho_{0}\tilde{\sigma}\rho_{0}$, as shown in Fig.~\ref{fig:patterns_gsg}.
The form of the purified state from GSG is described as
\begin{equation}
\begin{split}
    \tilde{\rho}_{\mathrm{GSG}}
    = \frac { \tilde{\rho}'_{\mathrm{GSG}} } { \operatorname{Tr}\left[\tilde{\rho}'_{\mathrm{GSG}}\right] }
    &=\left(1 - 2\frac{M-1}{M}\operatorname{Tr}\left[\rho_{0} \tilde{\sigma}\right]\right)\rho_{0} + \frac{1}{M} \tilde{\sigma} + \frac{M-1}{M} \left( \rho_{0}\tilde{\sigma} + \tilde{\sigma}\rho_{0} \right).
\end{split}
\end{equation}

\begin{figure*}[htbp]
    \centering
    \includegraphics[width=\textwidth]{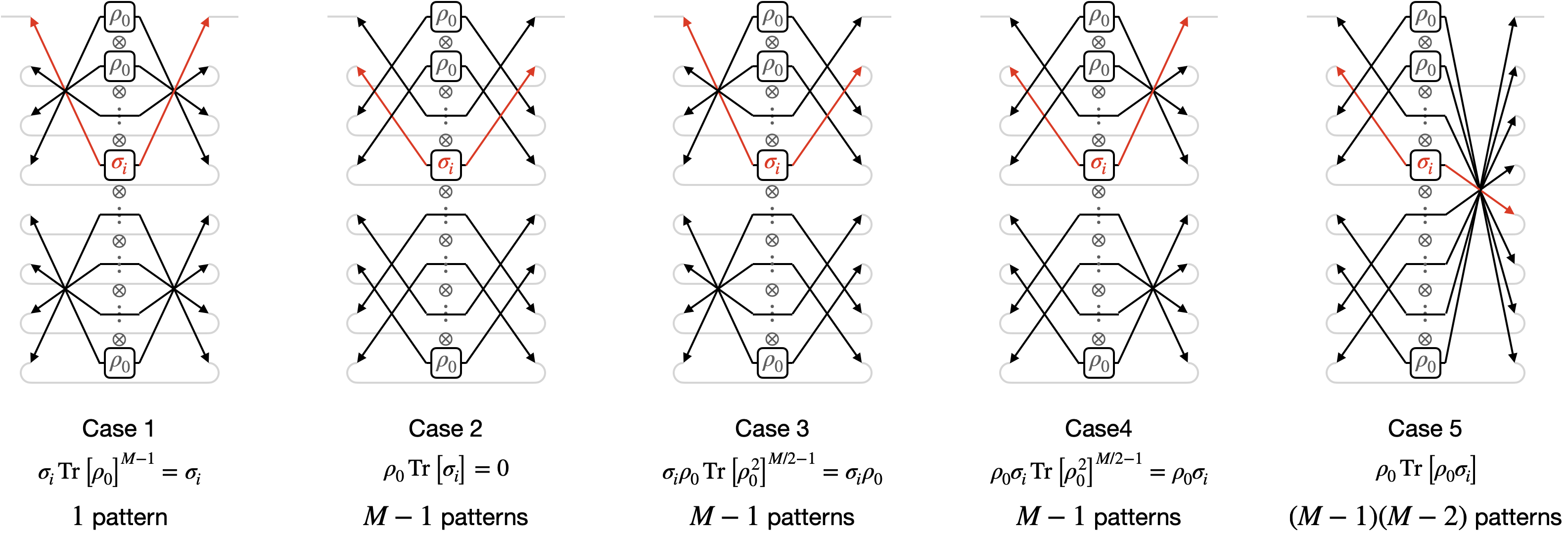}
    \caption{
        The five cases of combinations of parallel SWAP projectors resulting in different quantum states in tensor network representation.
        The effect of the parallel SWAP projectors is shown by the arrows.
        Grey loops show the partial trace operations taken from the second quantum register to the last quantum register.
    }
    \label{fig:patterns_gsg}
\end{figure*}

\par
Here we find that the success probability of obtaining $\rho_{0}$ in $\tilde{\rho}$ after either CGG or GSG is the same as that of SGG under the first-order perturbation of $\{\sigma_{i}\}_{i=0}^{M-1}$.
In fact, the average success probability after SGG, CGG, and GSG all becomes
\begin{equation} \label{eq:fidelity_cgg_gsg_stochastic_perturbated}
\begin{split}
    \operatorname{Tr}\left[\rho_{0} \tilde{\rho}\right] = 1+\frac{1}{M} \operatorname{Tr}\left[\rho_{0} \tilde{\sigma}\right].
\end{split}
\end{equation}
Recall the average success probability among the noisy inputs is represented as 
\begin{equation} 
\begin{split}
    \frac{1}{M} \sum_{i} \operatorname{Tr}\left[\rho_{0}\left(\rho_{0}+\sigma_{i}\right)\right] = 1 + \operatorname{Tr} \left[\rho_{0} \tilde{\sigma} \right],
\end{split}
\end{equation}
these gadgets linearly suppress the probability of accepting unwanted states, which can also be seen as increasing the state fidelity to the desired state by a factor of $1/M$ using $M$ redundant inputs.

\par
In addition, the average impurity is also suppressed by a factor of $1/M$.
The purity of $\tilde{\rho}$ is
\begin{equation} \label{eq:purity_cgg_gsg_stochastic_perturbated}
\begin{split}
    \operatorname{Tr}\left[\tilde{\rho}^2\right]
    = 1+\frac{2}{M} \operatorname{Tr}\left[\rho_{0}\tilde{\sigma}\right]
\end{split}
\end{equation}
which is also enhanced by a factor of $M$ from the purity of the noisy state
\begin{equation}
\begin{split}
    \frac{1}{M} \sum_{i=1}^{M} \operatorname{Tr}\left[\left(\rho_{0}+\sigma_{i}\right)^{2}\right] = 1 + 2 \operatorname{Tr}\left[ \rho_{0} \tilde{\sigma} \right].
\end{split}
\end{equation}
Note that we keep ignoring the contribution $O\left(\sigma_{i}\sigma_{j}\right)$ during all calculations and $\operatorname{Tr}\left[\rho_{0} \tilde{\sigma}\right] < 0$.

\section{Proofs of Purification Rates of CGG Taking Depolarised Inputs \label{sec:appendix_proofs_CGG}}

In this section, we give detailed proofs of the properties of CGG when it takes depolarised inputs.

\subsection{The Purification Rate for Inputs with Infinite Dimension \label{sec:appendix_proof_CGG_p_tilde_inf}}

\par
We compute the output probability $\tilde{p}_{d=\infty}$ for depolarised inputs with infinite dimensions.
To begin with, we remark that the $m$-th power of the depolarised density matrix with dimension $d$ and its trace are formalised in the following way.
\begin{equation}
\begin{split}
    \rho^{m} 
    &= \left(\left(1-p\right)\rho_{0} + p\frac{I}{d}\right)^{m}
    = \left( \left(1-p+\frac{p}{d}\right)^{m} - \left(\frac{p}{d}\right)^{m} \right)\rho_{0} + \left(\frac{p}{d}\right)^{m-1}p\frac{I}{d}, \\
    \operatorname{Tr}\left[\rho^{m}\right]
    &= \left(1-p+\frac{p}{d}\right)^{m} + \left( d-1 \right) \left(\frac{p}{d}\right)^{m}. \\
\end{split}
\end{equation}
Using these properties, the depolarising probability $\tilde{p}$ of the output state $\displaystyle \tilde{\rho}=\left(1-\tilde{p}\right)\rho_{0}+\tilde{p}\frac{I}{d}$ and the post-selection probability $\operatorname{P}_{\vec{0}}$ become
\begin{equation} \label{eq:outputs_CGG}
\begin{split}
    \tilde{p}
    &= \frac{1}{\operatorname{P}_{\vec{0}}} \frac{1}{M} p \sum_{m|M}\varphi\left(m\right) \left(\left(1-p+\frac{p}{d}\right)^{m} + \left( d-1 \right) \left(\frac{p}{d}\right)^{m}\right)^{\frac{M}{m}-1}\left(\frac{p}{d}\right)^{m-1}, \\
    \operatorname{P}_{\vec{0}}
    &= \frac{1}{M}\sum_{m|M}\varphi\left(m\right) \left(\left(1-p+\frac{p}{d}\right)^{m} + \left( d-1 \right) \left(\frac{p}{d}\right)^{m}\right)^{\frac{M}{m}}. \\
\end{split}
\end{equation}
When the dimension $d$ goes to infinity, $\operatorname{P}_{\vec{0}}$ converges to
\begin{equation} \label{eq:P0_CGG_d_inf_proof}
\begin{split}
    \operatorname{P}_{\vec{0}}
    &= \frac{1}{M} + \frac{1}{M}\sum_{m|M, m>1}\varphi\left(m\right) \left(\left(1-p+\frac{p}{d}\right)^{m} + \left( d-1 \right) \left(\frac{p}{d}\right)^{m}\right)^{\frac{M}{m}} \\
    &\overset{d\rightarrow\infty}{\longrightarrow} \frac{1}{M} \left( 1 + \sum_{m|M, m>1}\varphi\left(m\right) \left(\left(1-p\right)^{m}\right)^{\frac{M}{m}} \right) p
    =\frac{1}{M} \left( 1 + \left(1-p\right)^{M} \sum_{m|M, m>1}\varphi\left(m\right) \right) \\
    &=\frac{1}{M} \left( 1 + \left(M-1\right) \left(1-p\right)^{M} \right).
\end{split}
\end{equation}
Likewise, $\tilde{p}$ converges to
\begin{equation} \label{eq:p_tilde_CGG_d_inf_proof}
\begin{split}
    \tilde{p}
    &= \frac{1}{\operatorname{P}_{\vec{0}}} \frac{1}{M} p \left( 1 + \sum_{m|M, m>1}\varphi\left(m\right) \left(\left(1-p+\frac{p}{d}\right)^{m} + \left( d-1 \right) \left(\frac{p}{d}\right)^{m}\right)^{\frac{M}{m}-1} \left(\frac{p}{d}\right)^{m-1} \right) \\
    &\overset{d\rightarrow\infty}{\longrightarrow} \frac{1}{ 1 + \left(M-1\right) \left(1-p\right)^{M} } p,
\end{split}
\end{equation}
which yields Eq.~\eqref{eq:p_tilde_CGG_d_inf}.

\subsection{Maximal Purification Rate and Condition of CGG \label{sec:appendix_proof_CGG_maximal_purification_rate_and_condition}}

\par
We show the maximal purification rate and the condition in the number of copies $M$ required in CGG.
To compute the balance point of the purification rate $\tilde{p}_{d=\infty}/p$, we see the derivative $\displaystyle \frac{d}{d M} \tilde{p}_{d=\infty}$, which can be expanded as
\begin{equation}
\begin{split}
    \frac{d}{d M} \tilde{p}_{d=\infty}
    &= -\frac { \left(1 - p\right)^{M} \left(1 + \left(M - 1\right) \log\left(1 - p\right)\right) } 
              { \left(1 + \left(M - 1\right) \left(1 - p\right)^{M}\right)^{2} } p.
\end{split}
\end{equation}
The condition $\tilde{p}_{d=\infty}$ decreases monotonically to $M$ is $\displaystyle \frac{d}{d M} \tilde{p}_{d=\infty} < 0$, corresponding to $1 + \left(M - 1\right) \log\left(1 - p\right) > 0$.
Solving this for $M$, the condition that $\tilde{p}/p$ is suppressed monotonically to $M$ is
\begin{equation} \label{eq:condition_M_d_inf}
\begin{split}
    M < 1 - \frac{1}{\log\left(1 - p\right)} =: M_{d=\infty}^{*}.
\end{split}
\end{equation}

\par
Next, we compute the purification rate when taking the optimal number of copies $M_{d=\infty}^{*}$.
Assigning $\displaystyle M_{d=\infty}^{*}$ to Eq.~\eqref{eq:p_tilde_CGG_d_inf}, we obtain the optimal purified depolarising rate $\tilde{p}_{d=\infty}^{*}$ as
\begin{equation} \label{eq:p_tilde_optimal_M}
\begin{split}
    \tilde{p}_{d=\infty}^{*}
    &= \frac { e\log\left(1-p\right) }
             { e\log\left(1-p\right) + p - 1 } p.
\end{split}
\end{equation}
We observe that the Taylor expansion of $\tilde{p}_{d=\infty}^{*}$ around $p=0$ is $e p^{2}+O\left(p^{3}\right)$.
Thus, we conclude that for small $p$, CGG has error suppression rate from $p$ to $O\left(p^{2}\right)$ maximally.

\par
We further bound this ratio by observing $e^{-2}<e^{\frac{M}{1-M}} < e^{-1}$ for $M>1$ and $e^{\frac{M}{1-M}}\overset{M\rightarrow\infty}{\longrightarrow}e^{-1}$, which results in the following bound
\begin{equation}
\begin{split}
    \frac {e} {M_{d=\infty}^{*} + \left(e - 1\right)} < \frac {\tilde{p}_{d=\infty}^{*}} {p} < \frac {e^{2}} {M_{d=\infty}^{*} + \left(e^{2}-1\right)}.
\end{split}
\end{equation}
Hence, the error suppression rate scales in $\displaystyle O\left(\frac{1}{M}\right)$ to the number of copies $M$.
Taking the logarithm of Eq.~\eqref{eq:p_tilde_optimal_M} for $d=\infty$, we can deduce the fitting function of the valley point in Fig.~\ref{fig:CGG_ps-tildes_M-all}(a).
\begin{equation} \label{eq:p_tilde_to_p_optimal_M_d_inf}
\begin{split}
    \log\left(\frac{\tilde{p}_{d=\infty}^{*}}{p}\right) 
    &= - \log\left(1+\left(M_{d=\infty}^{*}-1\right)e^{\frac{M_{d=\infty}^{*}}{1-M_{d=\infty}^{*}}}\right) \\
    &< - \log\left(M_{d=\infty}^{*}e^{\frac{M_{d=\infty}^{*}}{1-M_{d=\infty}^{*}}}\right) \\
    &= - \log\left(M_{d=\infty}^{*}\right) + \frac{1}{1-\frac{1}{M_{d=\infty}^{*}}} \\
    &< - \log\left(M_{d=\infty}^{*}\right) + 1.
\end{split}
\end{equation}
Therefore there exists $\displaystyle M < 1 - \frac{1}{\log\left(1-p\right)}$ that achieves the purification rate in logarithmic scale
\begin{equation} \label{eq:func_p_tilde_and_M_linear}
\begin{split}
    \log \left(\frac{\tilde{p}}{p}\right) = - \log \left(M\right) + 1,
\end{split}
\end{equation}
which implies the asymptotical linear relation between the maximal purification rate $\frac{\tilde{p}}{p}$ and the number of copies $M$.
This relation is displayed as the solid black line in Fig.~\ref{fig:CGG_ps-tildes_M-all}(a).

\subsection{\texorpdfstring{Sampling Cost of CGG for Small $p$}{} \label{sec:appendix_proof_CGG_sampling_cost}}

\par
When $p$ is small, our gadget outputs the same purification rate as RSG~\cite{childs2023streaming}, which is shown as optimal in terms of sampling cost.
To see this, we check the purification rate of RSG taking $\displaystyle M_{d=\infty}^{*}=1-\frac{1}{\log\left(1-p\right)}$ copies of depolarised inputs.
Recalling that the purified depolarising rate $\tilde{p}_{RSG}$ is described as Eq.~\eqref{eq:outputs_recursive_SWAP_depolarising_pure} for $\displaystyle p<\frac{1}{2}$, we obtain
\begin{equation} \label{eq:p_tilde_optimal_M_RSG}
\begin{split}
    \tilde{p}_{RSG}
    &< \frac{ \displaystyle 1 }
            { \displaystyle \left(1-\frac{1}{\log\left(1-p\right)}\right)\left(1-2p\right) +2p } p
    = \frac{ p\log\left(1-p\right) }
           { \log\left(1-p\right) + 2p - 1 },
\end{split}
\end{equation}
by replacing $2^{n}$ with $M_{d=\infty}^{*}$.
Taking up to the second order of $p$ around $p=0$ in the Taylor expansion, Eq.~\eqref{eq:p_tilde_optimal_M_RSG} becomes $p^{2}+O\left(p^{3}\right)$.
This implies the purification rate of CGG differs from that of RSG only on a constant scale, with the same number of copies.

\par
We analytically compare the sampling cost of CGG and that of RSG as well.
The acceptance probability of CGG $\operatorname{P}_{\vec{0}}$ with $M$ copies when $d=\infty$ is
\begin{equation} \label{eq:prob_post_select_CGG_d_inf}
\begin{split}
    \operatorname{P}_{\vec{0}} 
    &= \frac{1}{M}\left( 1 + \left(M-1\right) \left(1-p\right)^{M} \right).
\end{split}
\end{equation}
Thus the expected number of copies required to post-select the all-zero state in the ancillary qubits for the optimal $M$ is
\begin{equation} \label{eq:num_samples_M_optimal_d_inf}
\begin{split}
    \frac { M_{d=\infty}^{*} } 
          { \operatorname{P}_{\vec{0}} }
    &= \frac { \displaystyle \left(\log\left(1-p\right)-1\right)^{2} } 
             { \displaystyle \left(\log\left(1-p\right) - \frac{1}{e}\left(1-p\right) \right)\log\left(1-p\right) }.
\end{split}
\end{equation}
The Laurent series of Eq.~\eqref{eq:num_samples_M_optimal_d_inf} up to the first order of $p$ around $0$ is
\begin{equation}
\begin{split}
    \frac{e}{p}+\left(\frac{5}{2}-e\right) e+\left(\frac{41 e}{12}-4 e^{2}+e^{3}\right) p + O\left(p^{2}\right).
\end{split}
\end{equation}
To achieve the same purification quality with RSG, the sampling cost becomes
\begin{equation} \label{eq:num_samples_M_optimal_RSG}
\begin{split}
    C\left(n, d\right) \leq \frac{2p}{\tilde{p}^{*}\left(1-2p\right)^{2}} = \frac { 2\left(e \log \left(1-p\right)+p-1\right) } { e\left(1-2 p\right)^{2} \log \left(1-p\right) }.
\end{split}
\end{equation}
The Laurent series of Eq.~\eqref{eq:num_samples_M_optimal_RSG} up to the first order of $p$ around $0$ is
\begin{equation}
\begin{split}
    \frac{2}{e p}+\left(2+\frac{5}{e}\right)+\left(8+\frac{77}{6 e}\right) p + O\left(p^{2}\right).
\end{split}
\end{equation}
This implies our method achieves the same order in $p$ up to the constant difference as the RSG with fewer measurement operations when $p$ is small enough.

\subsection{\texorpdfstring{Purification Rate of CGG when $M$ is prime}{} \label{sec:appendix_proof_CGG_M_prime}}

\par
For later use, let $\beta$ and $\gamma$ respectively be $\displaystyle \beta  = 1 - p + \frac{p}{d} = 1 - \left(1 - \frac{1}{d}\right)p, \quad \gamma = \frac{p}{d}$.
Note that $0<\gamma<\beta<1$.
Using $\beta$ and $\gamma$, $\rho$ can be rewritten as
\begin{equation} \label{eq:rho_depolarised_beta_gamma}
\begin{split}
    \rho 
    &= (\beta - \gamma) \rho_{0} + \gamma I, \\
    \operatorname{Tr}\left[\rho^{k}\right] 
    &= \operatorname{Tr}\left[\sum_{i=0}^{k}{}_{k}C_{i}(\beta-\gamma)^{k-i}\gamma^{i}\rho_{0}^{k-i}I^{i}\right]
    = \sum_{i=0}^{k}{}_{k}C_{i}(\beta-\gamma)^{k-i}\gamma^{i} + \left(d^{k}-1\right)\gamma^{k} 
    = \beta^{k} + \left(1-\beta\right)\gamma^{k-1}.
\end{split}
\end{equation}
We expand Eq.~\eqref{eq:outputs_cgg_prime} using $\beta$ and $\gamma$ into
\begin{equation}
\begin{split}
    \operatorname{P}_{\vec{0}}\tilde{\rho} 
    &= \frac{1}{M} \left( 1 - p + \left(M-1\right) \left( \left(1-p+\frac{p}{d}\right)^{M} - \left(\frac{p}{d}\right)^{M} \right) \right) \rho_{0}
    + \frac{1}{M} \left( 1 + \left(M-1\right) \left(\frac{p}{d}\right)^{M-1}\right) p\frac{I}{d} \\
    &= \frac{1}{M} \left( \beta - \gamma + \left(M-1\right) \left( \beta^{M} - \gamma^{M} \right) \right) \rho_{0}
    + \frac{1}{M} \left( 1 + \left(M-1\right) \gamma^{M-1}\right) p\frac{I}{d}, \\
    \operatorname{P}_{\vec{0}} 
    &= \frac{1}{M}\left( 1 + \left(M-1\right) \left(\left(1-p+\frac{p}{d}\right)^{M} + \left(\frac{p}{d}\right)^{M}\left(d-1\right)\right) \right) \\
    &= \frac{1}{M}\left( 1 + \left(M-1\right)\left(\beta^{M} + \gamma^{M-1}\left(1 - \beta\right) \right) \right).
\end{split}
\end{equation}
This implies that the output state $\tilde{\rho}$ also takes the depolarised state $\displaystyle \tilde{\rho} = \left(1-\tilde{p}\right)\rho_{0} + \tilde{p}\frac{I}{d}$ with depolarising rate $\tilde{p}$ as 
\begin{equation} \label{eq:p_tilde_m_prime_depolarising}
\begin{split}
    \tilde{p} 
    &= \frac { \displaystyle 1 + \left(M-1\right)\left(\frac{p}{d}\right)^{M-1} } 
             { \displaystyle 1 + \left(M-1\right)\left(\left(1-p+\frac{p}{d}\right)^{M} + \left(\frac{p}{d}\right)^{M}\left(d-1\right) \right) } p
    = \frac { \displaystyle 1 + \left(M-1\right)\gamma^{M-1} } 
             { \displaystyle 1 + \left(M-1\right)\left(\beta^{M} + \gamma^{M-1}\left(1 - \beta\right) \right) } p.
\end{split}
\end{equation}
The derivative $\displaystyle \frac{d}{d M} \tilde{p}$ for prime $M$ then becomes
\begin{equation}
\begin{split}
    \frac{d}{d M} \tilde{p}
    &= \frac 
        { \displaystyle \beta\gamma^{M} - \beta^{M}\gamma + (M-1)\left(\log\left(\gamma\right)\beta\gamma^{M} - \log\left(\beta\right)\beta^{M}\gamma\right) + (M-1)^{2} \left(\log\left(\beta\right) - \log\left(\gamma\right)\right)\beta^{M}\gamma^{M} } 
        { \displaystyle \gamma\left( 1 + \left(M-1\right)\left(\beta^{M} + \gamma^{M-1}\left(1 - \beta\right) \right) \right)^{2} }.
\end{split}
\end{equation}
The condition $\tilde{p}$ decreases monotonically to $M$ is $\displaystyle \frac{d}{d M} \tilde{p} < 0$, which can be described by $\beta$ and $\gamma$ as $f(\beta, \gamma) - f(\gamma, \beta) > 0$ by introducing a function $\displaystyle f(x, y) = x^{M-1} \left(1 + (M - 1) \left(1 + (M - 1) y^{M-1}\right) \log(x) \right)$.
The areas representing $f(\beta, \gamma) - f(\gamma, \beta) > 0$ under different $M$ are visualised in Fig.~\ref{fig:condition_bc}.

\begin{figure}[htbp]
    \centering
    \subfloat[\label{fig:condition_bc_M-2} $M=2$]{
        \includegraphics[width=0.16\textwidth]{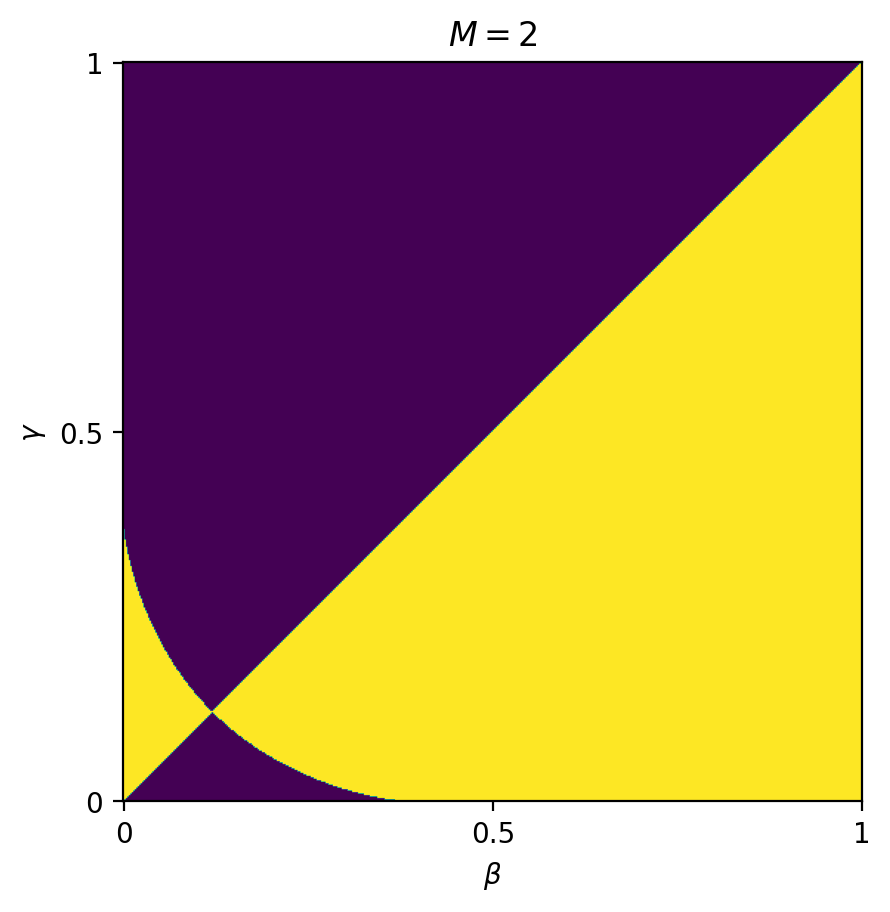}
    }
    \subfloat[\label{fig:condition_bc_M-3} $M=3$]{
        \includegraphics[width=0.16\textwidth]{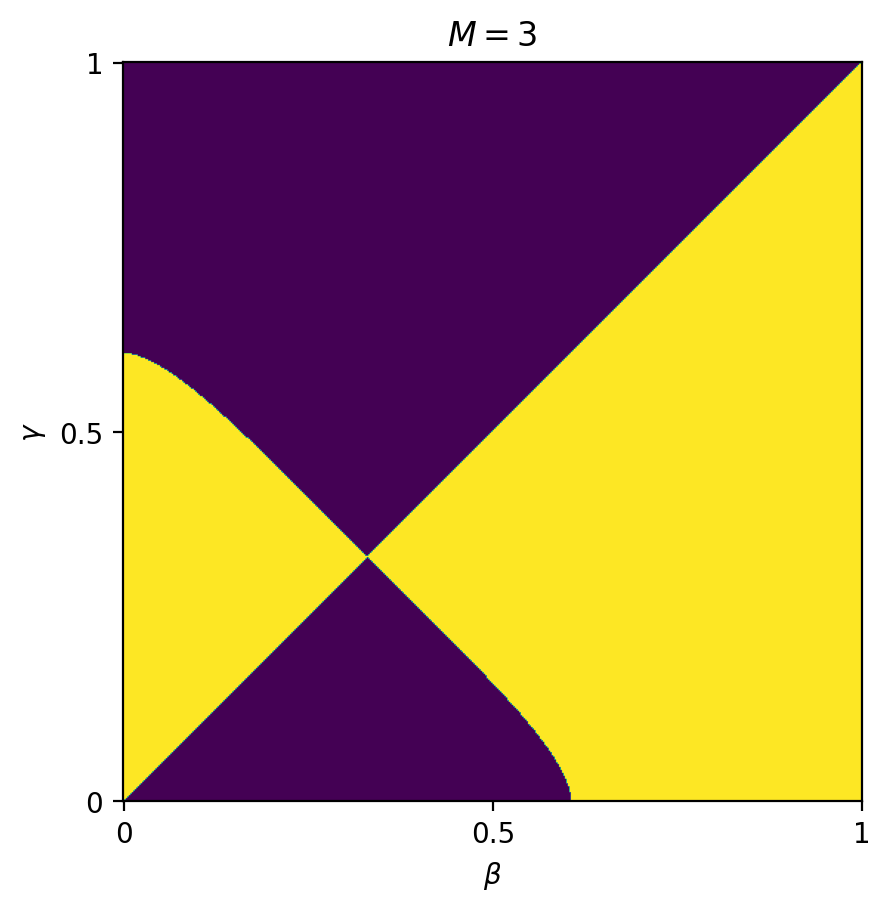}
    }
    \subfloat[\label{fig:condition_bc_M-5} $M=5$]{
        \includegraphics[width=0.16\textwidth]{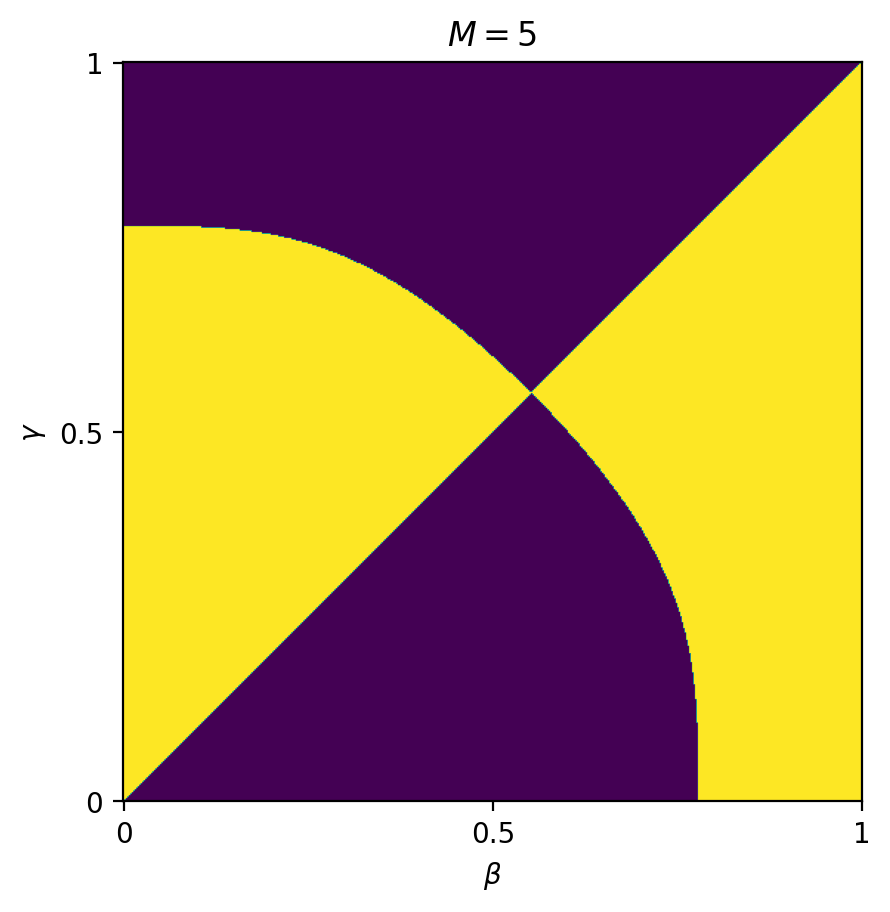}
    }
    \subfloat[\label{fig:condition_bc_M-10} $M=10$]{
        \includegraphics[width=0.16\textwidth]{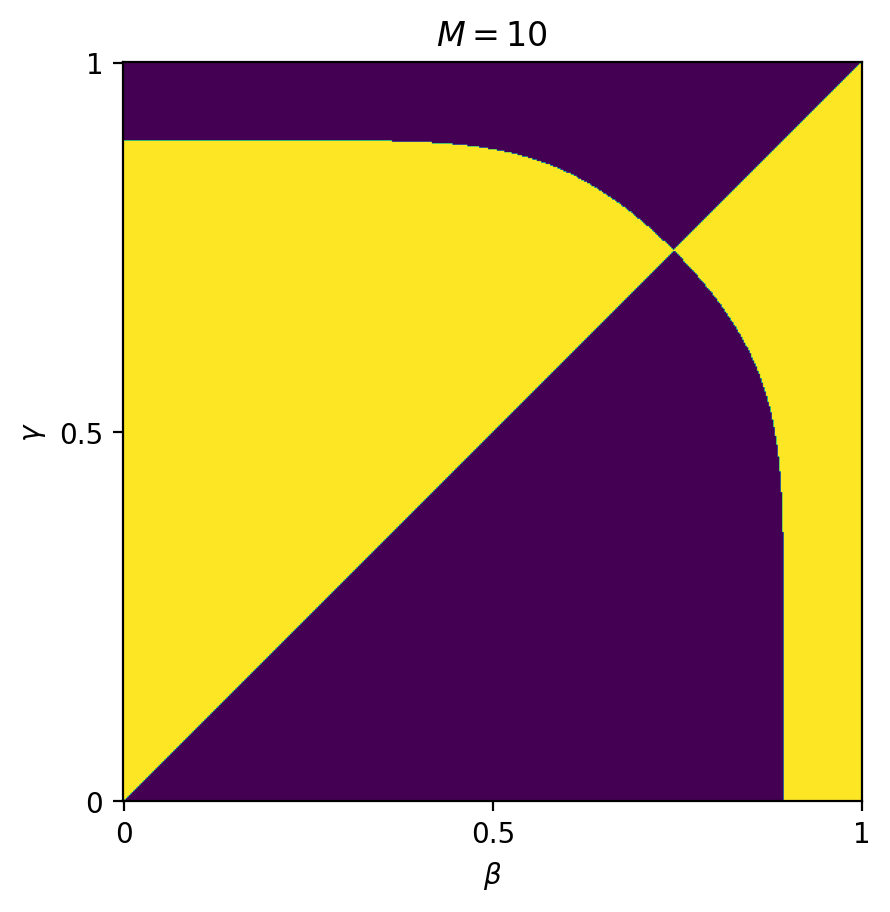}
    }
    \subfloat[\label{fig:condition_bc_M-50} $M=50$]{
        \includegraphics[width=0.16\textwidth]{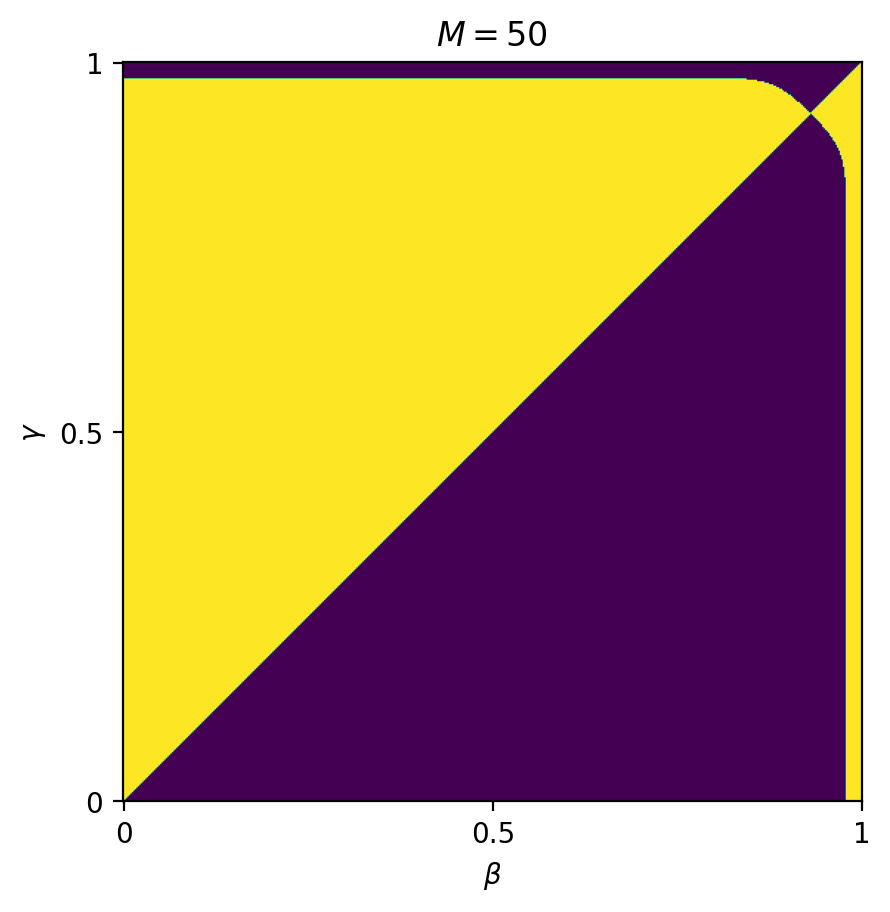}
    }
    \subfloat[\label{fig:condition_bc_M-100} $M=100$]{
        \includegraphics[width=0.16\textwidth]{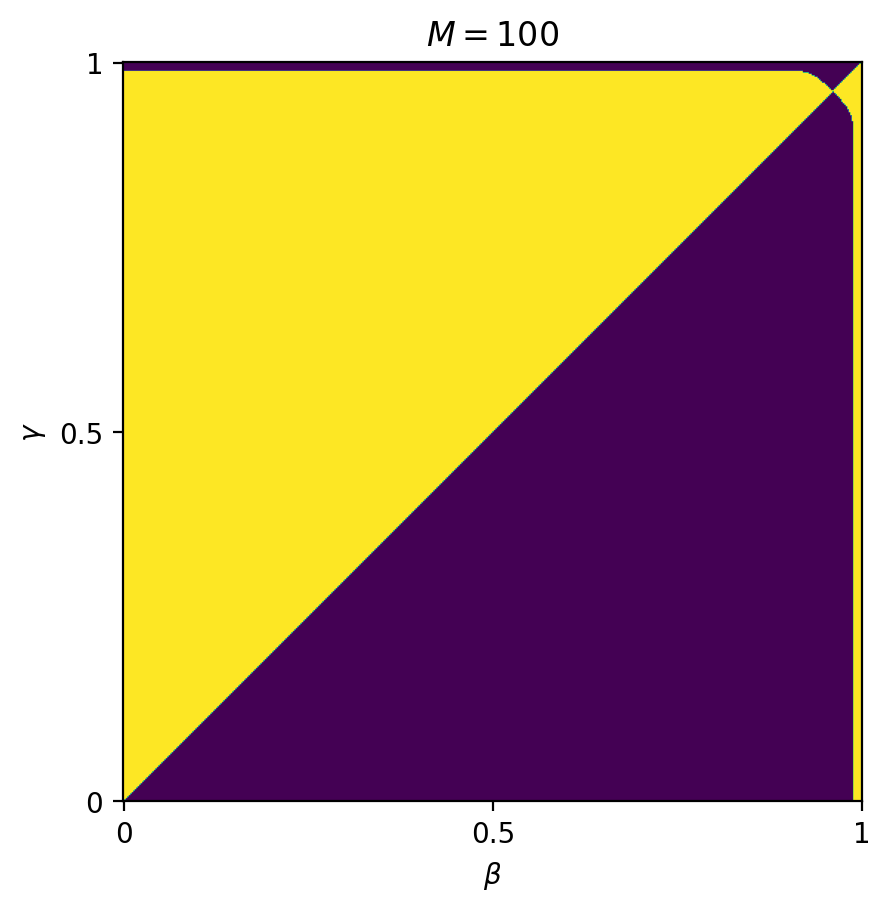}
    }
    \caption{
        The heatmaps that show the sign of $f(\beta, \gamma) - f(\gamma, \beta)$ for $0<\gamma<\beta<1$. 
        In each figure, the areas representing $f(\beta, \gamma) - f(\gamma, \beta) > 0$ are coloured yellow.
        The variable $\beta$ takes from $0$ to $1$ on the $x$-axis, and the variable $\gamma$ takes from $0$ to $1$ on the $y$-axis.
    }
    \label{fig:condition_bc}
\end{figure}

\par
We show the condition for $M$ which satisfies $\displaystyle \frac{d}{d M} \tilde{p} < 0$ for $\gamma=0$ also holds for $0 < \gamma < 1$.
First when $\gamma=0$, the inequality $f(\beta, 0) - f(0, \beta) > 0$ can be simplified into $\displaystyle 1 + (M-1) \log\left(1 - \left(1 - \frac{1}{d}\right)p\right) > 0$.
Solving this inequality for $M$ yields
\begin{equation} \label{eq:condition_M_prime}
\begin{split}
    M &< 1-\frac{1}{\displaystyle \log \left(1-\left(1-\frac{1}{d}\right) p\right)} = 1 - \frac{1}{\log\left(\beta\right)},
\end{split}
\end{equation}
where the factor of dimension $d$ changes the logarithm part only from $1-p$ in Eq.~\eqref{eq:condition_M_d_inf} to $\displaystyle 1 - \left(1-\frac{1}{d}\right)p$.

\par
Next, we check $\displaystyle \frac{d}{d M} \tilde{p} < 0$ also holds for $0 < \gamma < 1$ under the condition Eq.~\eqref{eq:condition_M_prime}.
On the monotonicity of $\tilde{p}$, we prove $f(\beta, \gamma) - f(\gamma, \beta) > 0$.
We first bound $f(\beta, \gamma) - f(\gamma, \beta)$ by using Eq.~\eqref{eq:condition_M_prime} and $0<\gamma<\beta<1$ into
\begin{equation}
\begin{split}
    f(\beta, \gamma) - f(\gamma, \beta)
    &= \beta^{M-1}\left(1 + (M-1) \log \left(\beta\right)+(M-1)^{2} \gamma^{M-1} \left( \log\left(\beta\right) - \log\left(\gamma\right)\right)\right) - \gamma^{M-1} \left(1 + (M-1) \log \left(\gamma\right)\right) \\
    &> \frac{1}{e}\left(1 + (M-1) \frac{1}{1-M}+(M-1)^{2} \gamma^{M-1} \cdot 0\right) -\gamma^{M-1} \left(1 + (M-1) \log \left(\gamma\right)\right) \\
    &= \left(\frac{p}{d}\right)^{M-1} \left( (M-1) \left( \log\left(d\right) - \log\left(p\right) \right) - 1 \right).
\end{split}
\end{equation}
Thus it is sufficient to check $\left(M - 1\right) \left( \log \left(d\right)-\log \left(p\right) \right) > 1$, which is equivalent to $\displaystyle p < d e^{\frac{1}{M-1}}$.
In fact, this holds for $d > 1$ and $M > 1$ since $\displaystyle d e^{\frac{1}{M-1}}$ is always larger than $\displaystyle \frac{d}{d - 1}\left(1 - e^{\frac{1}{1-M}}\right)$ since $\displaystyle M > 1 + \frac {1} {\log\left(d\right)}$.

\subsection{\texorpdfstring{Purification Rate of CGG with $M=2^{n}$ Inputs}{} \label{sec:appendix_proof_CGG_power_of_two}}

\par
For $M = 2^{n}$, noting that we have defined $\displaystyle \beta = 1 - p + \frac{p}{d}$ and $\displaystyle \gamma = \frac{p}{d}$, the unnormalised output quantum state is
\begin{equation}
\begin{split}
    M P_{\vec{0}} \tilde{\rho}
    &= \rho + \sum_{i=1}^{n} \alpha^{i-1} \operatorname{Tr}\left[\rho^{\alpha^{i}}\right]^{\alpha^{n-i}-1} \rho^{\alpha^{i}} \\
    &= \left((1-p)+\sum_{i=1}^{n} \alpha^{i-1}\left(\beta^{\alpha^{i}}+\gamma^{\alpha^{i}-1}\left(1-\beta\right)\right)^{\alpha^{n-i}-1}\left(\beta^{\alpha^{i}}-\gamma^{\alpha^{i}}\right)\right) \rho_{0} \\
    &\quad\quad\quad\quad + \left(p+d\sum_{i=1}^{n} \alpha^{i-1}\left(\beta^{\alpha^{i}}+\gamma^{\alpha^{i}-1}\left(1-\beta\right)\right)^{\alpha^{n-1}-1} \gamma^{\alpha^{i}}\right) \frac{I}{d},
\end{split}
\end{equation}
with the post-selection probability $P_{\vec{0}}$ as
\begin{equation}
\begin{split}
    M P_{\vec{0}}
    &= 1 + \sum_{i=1}^{n} \alpha^{i-1}\left(\beta^{\alpha^{i}}+\gamma^{\alpha^{i}-1}\left(1-\beta\right)\right)^{\alpha^{n-i}-1}\left(\beta^{\alpha^{i}}+\gamma^{\alpha^{i}-1}\left(1-\beta\right)\right) \\
    &= 1 + \sum_{i=1}^{n} \alpha^{i-1}\left(\beta^{\alpha^{i}}+\gamma^{\alpha^{i}-1}\left(1-\beta\right)\right)^{\alpha^{n-i}}.
\end{split}
\end{equation}
Therefore, the normalised output state of CGG becomes $\displaystyle \tilde{\rho} = \left(1 - \tilde{p}\right) \rho_{0} + \tilde{p} \frac{I}{d}$ with the output depolarising rate
\begin{equation} \label{eq:p_tilde_M-power-2_depolarising}
\begin{split}
    \tilde{p} 
    &= \frac {\displaystyle 1 + \sum_{i=1}^{n} \alpha^{i-1}\left(\beta^{\alpha^{i}}+\gamma^{\alpha^{i}-1}\left(1-\beta\right)\right)^{\alpha^{n-1}-1} \gamma^{\alpha^{i}-1}} 
             {\displaystyle 1 + \sum_{i=1}^{n} \alpha^{i-1}\left(\beta^{\alpha^{i}}+\gamma^{\alpha^{i}-1}\left(1-\beta\right)\right)^{\alpha^{n-i}}} p.
\end{split}
\end{equation}
The numerical plots of purification rate when $M=2^{n}$ for $n=1,\ldots,13$ are demonstrated in Fig.~\ref{fig:CGG_ps-tildes_M-all}(a).
Note that taking $M=2^{n}$ copies, the uniform superposition in $n$ ancillary qubits can be trivially prepared by applying local Hadamard gates to $|0\rangle$ in each ancillary qubit.

\section{The Output State and Post-selection Probability of GSG \label{sec:appendix_proof_GSG}}

\par
We first observe the following expansion about the depolarised input $\rho = \left(1-p\right) \rho_{0} + p\frac{I}{d}$ with dimension $d$.
\begin{equation}
\begin{split}
    \rho^{2} &= (1-p)^{2} \operatorname{Tr}\left[\rho_{0}^{2}\right] \frac {\rho_{0}^{2}} {\operatorname{Tr}\left[\rho_{0}^{2}\right]} + \frac {2} {d} p(1-p) \rho_{0} + \frac{p^{2}}{d} \frac{I}{d}, \\
    \operatorname{Tr}\left[\rho^{2}\right] &= (1-p)^{2}\operatorname{Tr}\left[\rho_{0}^{2}\right] + \frac {1} {d} p\left(2-p\right).
\end{split}
\end{equation}
Using this, the output state of GSG and its post-selection probability can be expanded to
\begin{equation} \label{eq:outputs_gSWAP_m_depolarising}
\begin{split}
    \tilde{\rho} 
    &= \frac {1} {\operatorname{P}_{\vec{0}}} \left( \frac {1} {M} \left((1-p)\rho_{0} + p\frac{I}{d}\right) \right. \\
    &\quad\quad\quad + \left. \frac {M - 1} {M} \left((1-p)^{2}\operatorname{Tr}\left[\rho_{0}^{2}\right] + \frac {1} {d} p(2-p)  \right)^{\frac{M}{2}-1} \left((1-p)^{2} \rho_{0}^{2} + \frac {2} {d} p(1-p) \rho_{0} + \frac{p^{2}}{d} \frac{I}{d}\right)\right), \\
    &= \frac {1} {\operatorname{P}_{\vec{0}}} \left( \frac{M-1}{M}\left((1-p)^{2}\operatorname{Tr}\left[\rho_{0}^{2}\right]+\frac{1}{d}p(2-p)\right)^{\frac{M}{2}-1}(1-p)^{2} \operatorname{Tr}\left[\rho_{0}^{2}\right] \frac {\rho_{0}^{2}} {\operatorname{Tr}\left[\rho_{0}^{2}\right]} \right. \\
    &\quad\quad\quad + \left( \frac{1}{M}(1-p) + \frac{M-1}{M}\left((1-p)^{2}\operatorname{Tr}\left[\rho_{0}^{2}\right]+\frac{1}{d}p(2-p)\right)^{\frac{M}{2}-1} \frac{2}{d}p(1-p) \right) \rho_{0} \\
    &\quad\quad\quad + \left. \left( \frac{1}{M}p + \frac{M-1}{M}\left((1-p)^{2}\operatorname{Tr}\left[\rho_{0}^{2}\right]+\frac{1}{d}p(2-p)\right)^{\frac{M}{2}-1} \frac{p^{2}}{d} \right) \frac{I}{d} \right), \\
    \operatorname{P}_{\vec{0}} 
    &= \frac {1} {M} + \frac {M - 1} {M} \left((1-p)^{2}\operatorname{Tr}\left[\rho_{0}^{2}\right] + \frac {1} {d} p(2-p) \right)^{\frac{M}{2}}.
\end{split}
\end{equation}
From \eqref{eq:outputs_gSWAP_m_depolarising}, when $\rho_{0}$ is a pure state,
\begin{equation} \label{eq:outputs_gSWAP_m_depolarising_pure_complete}
\scalemath{0.9}{
\begin{aligned}
    \tilde{\rho} 
    &= \frac {1} {\operatorname{P}_{\vec{0}}} \left( \left( \frac{1}{M}(1-p) + \frac{M-1}{M}\left((1-p)^{2} + \frac{1}{d}p(2-p)\right)^{\frac{M}{2}-1} \left((1-p)^{2} + \frac{2}{d}p(1-p)\right) \right)\right. \rho_{0} \\
    &\quad\quad\quad + \left. \left( \frac{1}{M}p + \frac{M-1}{M}\left((1-p)^{2} + \frac{1}{d}p(2-p)\right)^{\frac{M}{2}-1} \frac{p^{2}}{d} \right) \frac{I}{d} \right) \\
    &= \frac {1} {\operatorname{P}_{\vec{0}}} \left( \left( \frac{1}{M}(1-p) + \frac{M-1}{M}\left( 1-2\left(1-\frac{1}{d}\right)p+\left(1-\frac{1}{d}\right)p^{2} \right)^{\frac{M}{2}-1} \left(1-2\left(1-\frac{1}{d}\right)p+\left(1-\frac{2}{d}\right)p^{2}\right) \right)\right. \rho_{0} \\
    &\quad\quad\quad + \left. \left( \frac{1}{M}p + \frac{M-1}{M}\left( 1-2\left(1-\frac{1}{d}\right)p+\left(1-\frac{1}{d}\right)p^{2} \right)^{\frac{M}{2}-1} \frac{p^{2}}{d} \right) \frac{I}{d} \right), \\
    \operatorname{P}_{\vec{0}}
    &= \frac {1} {M} + \frac {M - 1} {M} \left(1-2\left( 1-\frac{1}{d}\right)p+\left(1-\frac{1}{d}\right)p^{2} \right)^{\frac{M}{2}}.
\end{aligned}
}
\end{equation}
It is clear to see Eq.~\eqref{eq:outputs_gSWAP_m_depolarising_pure_complete} corresponds to Eq.~\eqref{eq:outputs_SWAP_depolarising_pure} when $M=2$.
When the dimension $d$ goes to infinity in Eq.~\eqref{eq:outputs_gSWAP_m_depolarising_pure_complete}, $\tilde{\rho}$ and $\operatorname{P}_{\vec{0}}$ converge to
\begin{equation} \label{eq:outputs_gSWAP_m_depolarising_pure_d_inf_complete}
\begin{split}
    \tilde{\rho} 
    &\overset{d\rightarrow\infty}{\longrightarrow} \frac {1} {\operatorname{P}_{\vec{0}}} \left( \frac{1 - p}{M} \left( 1 + \left(M - 1\right)\left(1 - p\right)^{M-1} \right) \rho_{0} + \frac{p}{M} \frac{I}{d} \right), \\
    \operatorname{P}_{\vec{0}}
    &\overset{d\rightarrow\infty}{\longrightarrow} \frac {1} {M} + \frac {M - 1} {M} \left(1 - p\right)^{M},
\end{split}
\end{equation}
yielding the output depolarising rate in Eq.~\eqref{eq:outputs_gSWAP_m_depolarising_pure_d_inf}.

\section{Purification-based QEM Gadget \label{sec:appendix_purification_based_qem_gadget}}

\par
Recently, with the advent of near-term quantum devices, many quantum error mitigation (QEM) techniques have been proposed to suppress the bias in expectation values of measurement results through classical pre- and postprocessing~\cite{cai2023quantum, endo2021hybrid, huggins2021virtual, koczor2021exponential, yang2023dual}.
The purification-based QEM methods, such as virtual distillation (VD)~\cite{huggins2021virtual} and exponential suppression by derangement (ESD)~\cite{koczor2021exponential}, adopt the same idea as projecting the multiple copies of noisy states into their symmetric subspace.
As those methods aim to improve the expectation value instead of the quantum state itself, they can make their quantum circuits much lighter.

\par
Here, we review the exponential error suppression by derangement (ESD) method~\cite{koczor2021exponential} and design a state purification gadget from it.
The ESD circuit uses $M$ copies of noisy quantum states to suppress the stochastic errors in magnitude to $M$ by applying one derangement operation to $M$ copies.
Then, the mitigated expectation value for an observable $O$ becomes
\begin{equation} \label{eq:expval_esd}
\begin{split}
    \langle O\rangle_{\mathrm{ESD}} = \frac {\operatorname{Tr}\left[\rho^{M} O\right]} {\operatorname{Tr}\left[\rho^{M}\right]}.
\end{split}
\end{equation}
When $\rho$ is given as a classical mixture of the target state $|\lambda_{0}\rangle\langle\lambda_{0}|$ with a dominant probability and unwanted biased states $\{|\lambda_{i}\rangle\langle\lambda_{i}|\}_{i>1}$ with small probabilities, $\rho$ can be expanded as $\displaystyle \rho = \sum_{k} \lambda_{k}|\lambda_{k}\rangle\langle\lambda_{k}|$ whose eigenvalues are aligned in descending order.
The $M$-th power of $\rho$ becomes
\begin{equation}
\begin{split}
    \rho^{M}
    &= \sum_{k} \lambda_{k}^{M}|\lambda_{k}\rangle\langle\lambda_{k}| 
    = \lambda_{0}^{M}\left(|\lambda_{0}\rangle\langle\lambda_{0}| + \sum_{k>0} \left(\frac {\lambda_{k}} {\lambda_{0}}\right)^{M}|\lambda_{k}\rangle\langle\lambda_{k}|\right).
\end{split}
\end{equation}
Then, the mitigated expectation value $\langle O\rangle_{\mathrm{ESD}}$ can be expanded as
\begin{equation} \label{eq:expval_esd_expanded}
\begin{split}
    \langle O\rangle_{\mathrm{ESD}}
    &= \frac 
    {\displaystyle \lambda_{0}^{M}\left(\operatorname{Tr}\left[|\lambda_{0}\rangle\langle\lambda_{0}|O\right] + \sum_{k>0} \left(\frac {\lambda_{k}} {\lambda_{0}}\right)^{M}\operatorname{Tr}\left[|\lambda_{k}\rangle\langle\lambda_{k}|O\right]\right)} 
    {\displaystyle \lambda_{0}^{M}\left(\operatorname{Tr}\left[|\lambda_{0}\rangle\langle\lambda_{0}|\right] + \sum_{k>0} \left(\frac {\lambda_{k}} {\lambda_{0}}\right)^{M}\operatorname{Tr}\left[|\lambda_{k}\rangle\langle\lambda_{k}|\right]\right)} \\
    &=\operatorname{Tr}\left[|\lambda_{0}\rangle\langle\lambda_{0}|O\right] + \sum_{k>0}\left(\frac {\lambda_{k}}{\lambda_{0}}\right)^{M}\left(\operatorname{Tr}\left[|\lambda_{k}\rangle\langle\lambda_{k}|O\right] - \operatorname{Tr}\left[|\lambda_{0}\rangle\langle\lambda_{0}|O\right]\right) + O\left(x^{2}\right)
\end{split}
\end{equation}
where $\displaystyle x = \sum_{k>0}\left(\frac {\lambda_{k}}{\lambda_{0}}\right)^{M}$.
Since $\lambda_{k} < \lambda_{0}$, the factors of unwanted states and the contribution of $O\left(x^{2}\right)$ are suppressed exponentially to $M$ in Eq.~\eqref{eq:expval_esd_expanded}.

\begin{figure}[htbp]
    \centering
    \subfloat[\label{fig:qcs_esd_a}]{
        \includegraphics[width=0.4\textwidth]{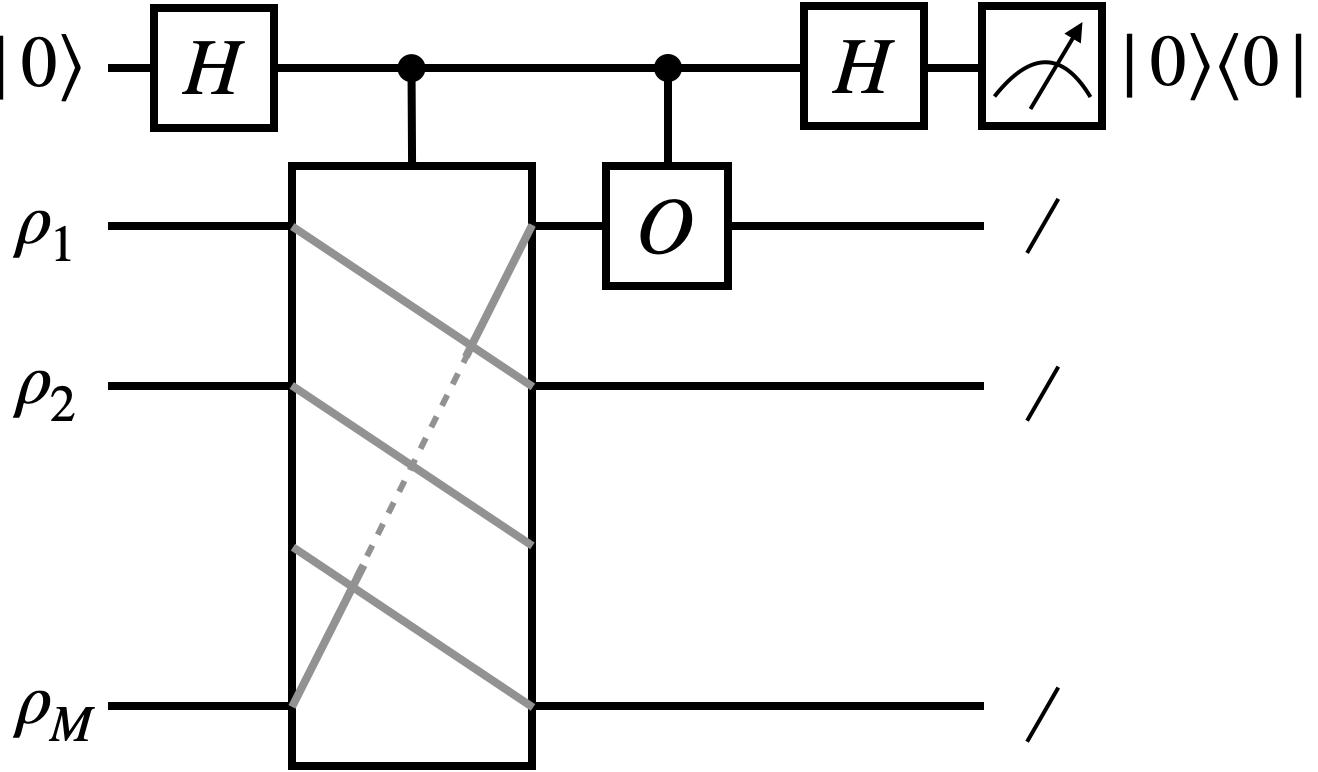}
    }
    \hspace{10pt}
    \subfloat[\label{fig:qcs_esd_b}]{
        \includegraphics[width=0.4\textwidth]{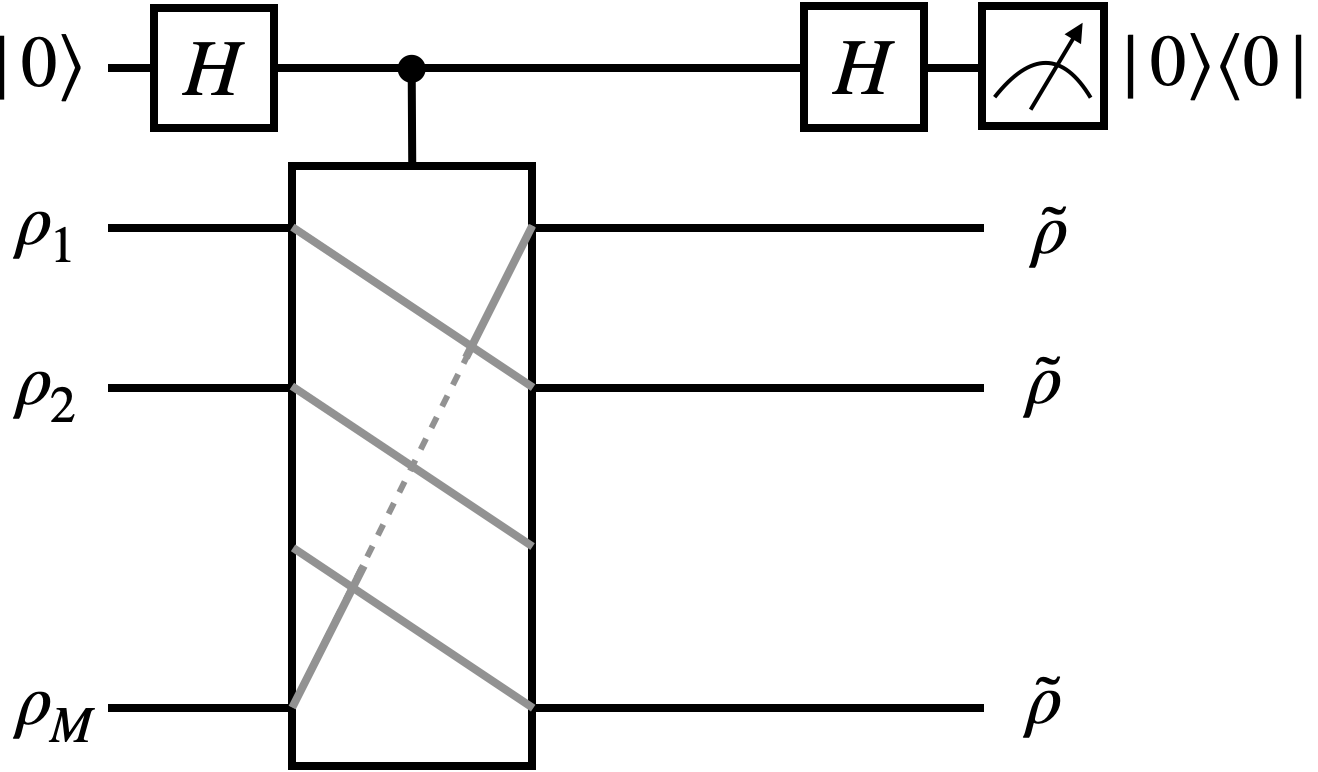}
    }
    \caption{
        (a) The quantum circuit for computing $\operatorname{Tr}\left[\rho_{1}\rho_{2}\cdots\rho_{M}O \right]$ in ESD.
            The cyclic shift $P_{23\cdots M1}$ over $M$ state copies is one of the derangement operators.
            The expectation value $\operatorname{Tr}\left[\rho_{1}\rho_{2}\cdots\rho_{M} \right]$ is obtained by removing the controlled-$O$ operator.
        (b) The quantum circuit which post-selects the first copy after the ESD process.
    }
    \label{fig:qcs_esd}
\end{figure}

\par
In ESD, $\operatorname{Tr}\left[\rho^{M} O\right]$ and $\operatorname{Tr}\left[\rho^{M}\right]$ are obtained from the quantum circuits with a derangement operation that performs a permutation such that no state appears in its original position.
We can pick up cyclic shift $P_{23\cdots M1}$ as an example of the derangement operator.
The quantum circuit for $\operatorname{Tr}\left[\rho^{M} O\right]$ and $\operatorname{Tr}\left[\rho^{M}\right]$ contains only one ancillary qubit and one derangement operation as shown in Fig.~\ref{fig:qcs_esd}(a).
Since the post-selection on $|0\rangle\langle0|$ state in the ancillary qubit gives the post-selection probability $\displaystyle \operatorname{P}_{0} = \frac{1}{2}\left(1 + \operatorname{Tr}\left[\rho^{M}O\right]\right)$ for $\rho_{1}=\cdots=\rho_{M}=\rho$,
$\operatorname{Tr}\left[\rho^{M} O\right]$ can be computed as
\begin{equation}
    \operatorname{Tr}\left[\rho^{M} O\right] = 1 - 2\operatorname{P}_{0}.
\end{equation}

\par
While the ESD circuit is simpler than that of the generalised SWAP test and that of the SGG, the exponential error suppression to the number of state copies $M$ is available only to mitigate the expectation value.
In fact, the state purification performance of the purification gadget inspired by the ESD circuit in Fig.~\ref{fig:qcs_esd}(b) does not amplify the contribution of $\operatorname{Tr}\left[\rho^{M}\right]$ scaling to $M$, which can be immediately checked by computing $\tilde{\rho}$ and $\operatorname{P}_{0}$:
\begin{equation} \label{eq:outputs_esd_m}
\begin{split}
    \tilde{\rho} &= \frac {1} {2\operatorname{P}_{\vec{0}}} \left(\rho + \rho^{M}\right), \\
    \operatorname{P}_{0} &= \frac {1} {2} \left(1 + \operatorname{Tr}\left[\rho^{M}\right]\right).
\end{split}
\end{equation}
Since $\operatorname{Tr}\left[\rho^{M}\right]$ shrinks in magnitude to $M$, the contribution of $\rho^{M}$ vanishes quickly in $\tilde{\rho}$.

\par
We remark that this does not surpass the purification rate of the SWAP gadget with two depolarised inputs.
Taking depolarised inputs $\displaystyle \rho = \left(1 - p\right) \rho_{0} + p\frac{I}{d}$, the output depolarising rate $\tilde{p}$ becomes
\begin{equation} \label{eq:p_tilde_esd_depolarising}
\begin{split}
    \tilde{p} 
    &= 
        \frac { \displaystyle 1 + \left(\frac{p}{d}\right)^{M-1} } 
              { \displaystyle 1 + \left(1 - p + \frac{p}{d}\right)^{M} + \left(\frac{p}{d}\right)^{M}\left(d - 1\right) } p
    \overset{d\rightarrow\infty}{\longrightarrow} 
        \frac { \displaystyle 1 } 
              { \displaystyle 1 + \left(1 - p\right)^{M}} p
    > \frac{1}{2-p}p.
\end{split}
\end{equation}
Therefore, there is no improvement in the purification rate when increasing the number of copies with this gadget.

\section{Fidelity of Output State with Inputs Under Arbitrary Stochastic Noise \label{sec:appendix_general_stochastic_noise}}

\par
In this section, we generalise the noise in input states from the depolarising noise to the general form of stochastic noise described as
\begin{equation}
    \rho\left(p\right) = \left(1 - p\right) \rho_{0} + p\sigma.
\end{equation}
The fidelity $F = \operatorname{Tr}\left[\rho_{0}\rho\right]$ between the target pure state $\rho_{0}=|\psi\rangle\langle\psi|$ and the noisy input state $\rho$ becomes
\begin{equation}
    F = \operatorname{Tr}\left[\rho_{0}\rho\right] = \operatorname{Tr}\left[\rho_{0}\left(\left(1-p\right)\rho_{0}+p\sigma\right)\right] = 1-p+pF_{1} = 1-p\left(1-F_{1}\right) = 1-pI_{1},
\end{equation}
where $F_{1} = \operatorname{Tr}\left[\rho_{0}\sigma\right]$ is the fidelity between $\rho_{0}$ and $\sigma$ and $I_{1} = 1 - F_{1}$ is infidelity.

\par
When we apply CGG over $M$ copies of $\rho$, the fidelity between $\rho_{0}$ and the output state $\tilde{\rho}$ is
\begin{equation} \label{eq:F_tilde}
\begin{split}
    \tilde{F} 
    = F(\rho_{0}, \tilde{\rho})
    &= \frac{1}{\operatorname{P}_{\vec{0}}} \operatorname{Tr}\left[\rho_{0}\left(\frac{1}{M} \sum_{m|M}\varphi\left(m\right)\operatorname{Tr}\left[\rho^{m}\right]^{\frac{M}{m}-1}\rho^{m}\right)\right]
    = \frac{1}{\operatorname{P}_{\vec{0}}} \frac{1}{M} \sum_{m|M}\varphi\left(m\right)\operatorname{Tr}\left[\rho^{m}\right]^{\frac{M}{m}-1} \operatorname{Tr}\left[\rho_{0}\rho^{m}\right].
\end{split}
\end{equation}
Note that $\operatorname{P}_{\vec{0}}$ in the denominator is still the same as Eq.~\eqref{eq:outputs_cgg_prime}, $\displaystyle \operatorname{P}_{\vec{0}} = \frac{1}{M} \sum_{m|M}\varphi\left(m\right)\operatorname{Tr}\left[\rho^{m}\right]^{\frac{M}{m}-1} \operatorname{Tr}\left[\rho^{m}\right]$.
Now, we analyse $\operatorname{Tr}\left[\rho_{0}\rho^{m}\right]$ in Eq.~\eqref{eq:F_tilde} by expanding it to
\begin{equation} \label{eq:trace_rho_0_rho^M}
\begin{split}
    \operatorname{Tr}\left[\rho_{0}\rho^{m}\right]
    &= \operatorname{Tr}\left[\rho_{0}\left(\left(1-p\right)\rho_{0}+p\sigma\right)^{m}\right] \\
    &= (1-p)^{m} + \sum_{i=1}^{m}\left(1-p\right)^{m-i}p^{i} \sum_{l=1}^{\min\{i, m-i+1\}} \sum_{\substack{j\in \operatorname{partition}(i, l) \\ 1^{j_{1}}2^{j_{2}}\cdots i^{j_{i}}}} \prod_{k=1}^{i} {}_{(m-i+1)-(\sum_{k'=1}^{k-1}j_{k'})}C_{j_{k}} F_{k}^{j_{k}} \\
    &= (1-p)^{m} + \sum_{i=1}^{m}\left(1-p\right)^{m-i}p^{i} \sum_{l=1}^{\min\{i, m-i+1\}} {}_{m-i+1}C_{l} \sum_{\substack{j\in \operatorname{partition}(i, l) \\ 1^{j_{1}}2^{j_{2}}\cdots i^{j_{i}}}} \frac{l!}{j_{1}!j_{2}!\cdots j_{i}!} F_{k}^{j_{k}} \\
    &= (1-p)^{m} + \sum_{i=1}^{m}\left(1-p\right)^{m-i}p^{i} \sum_{l=1}^{i} {}_{m-i+1}C_{l} \sum_{\substack{j\in \operatorname{partition}(i, l) \\ 1^{j_{1}}2^{j_{2}}\cdots i^{j_{i}}}} \frac{l!}{j_{1}!j_{2}!\cdots j_{i}!} F_{k}^{j_{k}}, \\
\end{split}
\end{equation}
where $F_{k} = \operatorname{Tr}\left[\rho_{0}\sigma^{k}\right]$.
We obtain the last line of Eq.~\eqref{eq:trace_rho_0_rho^M} by defining ${}_{n}C_{r}=0$ for $n<r$.

\par
Each coefficient of $F_{k}^{j_{k}}$ is associated with each Young diagram.
The case for $M=5$ is shown in Fig.~\ref{fig:table_young_diagrams}.
The coefficient $\displaystyle \prod_{k=1}^{i}{}_{(m-i+1)-(\sum_{k'=1}^{k-1}j_{k'})}C_{j_{k}}$ comes from sequentially choosing $j_{k}$ identical sequences of multiplication of $k$ $\sigma$ from all possible remaining $(m-i+1)-(\sum_{k'=1}^{k-1}j_{k'})$ positions for the sequence of multiplication of $\sigma$, following the stars-and-bars reasoning, so that the sequences of multiplication of $\sigma$ will not be next to each other.
In the third line of Eq.~\eqref{eq:trace_rho_0_rho^M}, we can also interpret the coefficient as first choosing the position of locating $l$ sequences of multiplication of $\sigma$, and then consider the permutation of $l$ sequences by identifying sequences consisting of the same number of $\sigma$.
\begin{figure*}[htbp]
    \centering
    \includegraphics[width=0.7\textwidth]{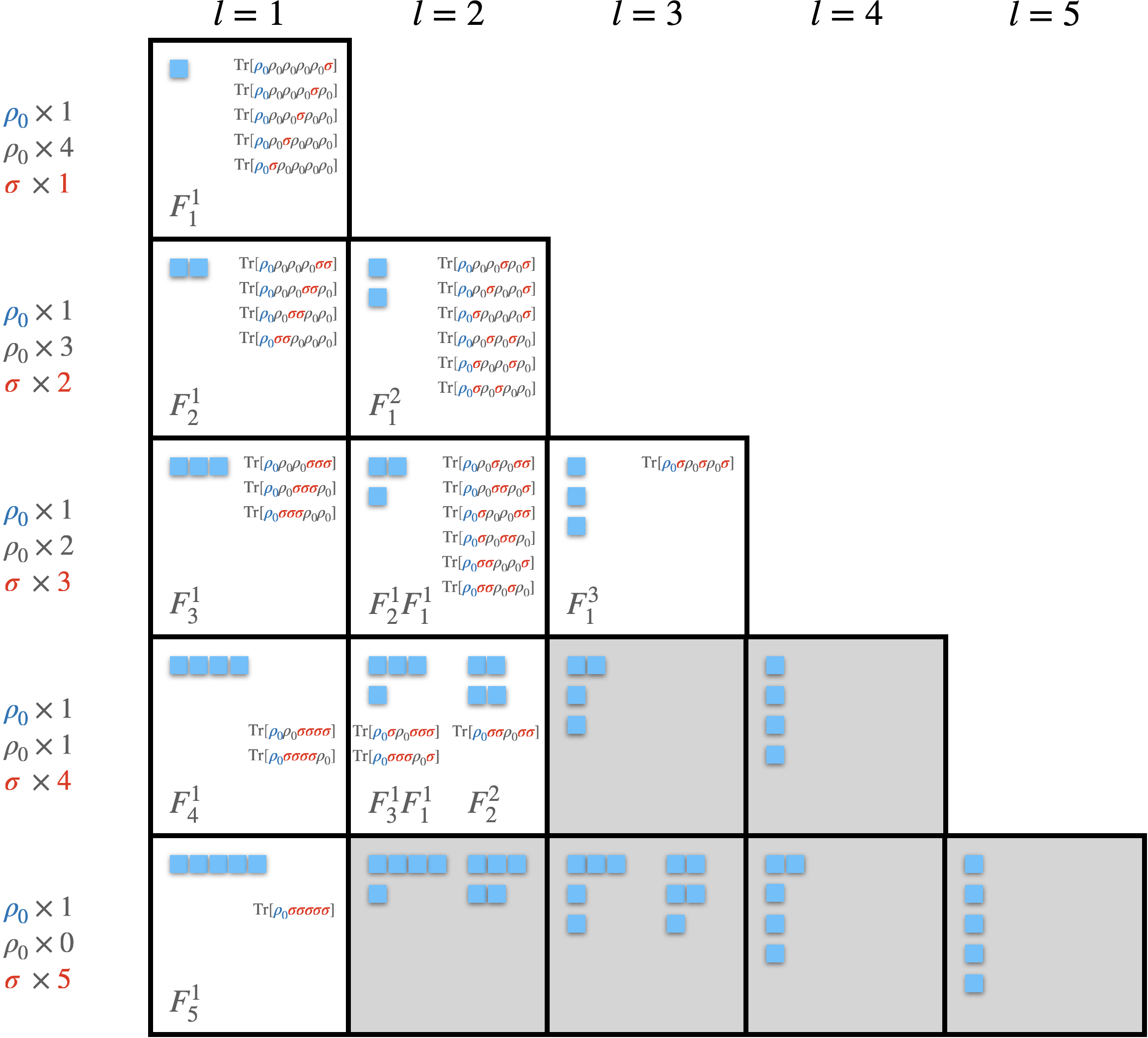}
    \caption{
        The table of Young diagrams for $m=5$.
        The grey cells are not counted (or counted as $0$) in the summation due to the restriction $\#\left(\rho_{0}\right) + \#\left(\sigma\right) = m$ (or to the definition ${}_{n}C_{r} = 0$ for $n<r$).
    }
    \label{fig:table_young_diagrams}
\end{figure*}

\par
We might be able to simplify Eq.~\eqref{eq:trace_rho_0_rho^M} by introducing several different assumptions on the noisy state $\sigma$.
\begin{enumerate}
    \item $\sigma$ is pure: $\sigma^{2}=\sigma$,
    \item $\rho_{0}$ and $\sigma$ are commutative: $\left[\rho_{0}, \sigma\right] = 0$.
\end{enumerate}
These conditions are realistic as condition (1) covers the dephasing error, and condition (2) covers the depolarising error and general orthogonal errors.

\subsection{\texorpdfstring{When $\sigma$ is pure: $\sigma^{2}=\sigma$}{}}

\par
First, when the condition (1) holds, the fidelity can be simplified as
\begin{equation}
    F_{k} = \operatorname{Tr}\left[\rho_{0}\sigma^{k}\right] = \operatorname{Tr}\left[\rho_{0}\sigma\right] = F_{1}.
\end{equation}
Therefore, Eq.~\eqref{eq:trace_rho_0_rho^M} can be simplified into
\begin{equation}
\begin{split}
    \operatorname{Tr}\left[\rho_{0}\rho^{m}\right] 
    &= (1-p)^{m} + \sum_{i=1}^{m}\left(1-p\right)^{m-i}p^{i} \sum_{l=1}^{\min\{i, m-i+1\}} {}_{i-1}C_{l-1} \times {}_{l+1}C_{m-i-(l-1)} F_{1}^{l} \\
    &= (1-p)^{m} + \sum_{i=1}^{m}\left(1-p\right)^{m-i}p^{i} \sum_{l=1}^{i} {}_{i-1}C_{l-1} \times {}_{l+1}C_{m-i-(l-1)} F_{1}^{l},
\end{split}
\end{equation}
if we again define ${}_{n}C_{r}=0$ for $n<r$.
The number of combinations ${}_{i-1}C_{l-1} \times {}_{l+1}C_{m-i-(l-1)}$ for each $i$ and $l$ is obtained by considering the following rule.
Since $l$ represents the number of separated sequences of multiplication of $\sigma$, and each sequence should contain at least one $\sigma$, we have to choose which place to insert $\rho_{0}$ as separators.
There are ${}_{i-1}C_{l-1}$ ways to choose the positions for the separators.
Next, we still have to consider different arrangements of remaining $\rho_{0}$ and separated sequences of multiplication of $\sigma$.
Note that in this step, we can identify the sequences of multiplication with different numbers of $\sigma$.
Therefore, the number of the arrangements can be computed as ${}_{l+1}C_{M-i-(l-1)}$, since we are choosing $m-i-(l-1)$ remaining $\rho_{0}$ from $l+1$ positions following the stars-and-bars reasoning.

\subsection{\texorpdfstring{When $\rho_{0}$ and $\sigma$ are commutative: $\left[\rho_{0}, \sigma\right] = 0$}{}}

\par 
Next, when the condition (2) holds, the fidelity can be simplified as
\begin{equation}
\begin{split}
    \prod_{k=1}^{i} F_{k}^{j_{k}} = \operatorname{Tr}\left[\rho_{0}^{m-i+1}\sigma^{i}\right] 
    = \operatorname{Tr}\left[\rho_{0}\sigma^{i}\right] = F_{i} = \operatorname{Tr}\left[\rho_{0}\sigma\rho_{0}\sigma\cdots\rho_{0}\sigma\right] = F_{1}^{i}, 
\end{split}
\end{equation}
which means $F_{i} = F_{1}^{i}$.
Since $\left[\rho_{0}, \sigma\right] = 0$, $\rho_{0}$ and $\sigma$ can be simultaneously diagonalised.
Thus we can further put $\displaystyle \rho_{0} = |\psi\rangle\langle\psi| = |\lambda_{0}\rangle\langle\lambda_{0}|$ and $\displaystyle \sigma = \sum_{i}\lambda_{i}|\lambda_{i}\rangle\langle\lambda_{i}|$, which gives $F_{1} = \lambda_{0}$.
These settings simplify Eq.~\eqref{eq:trace_rho_0_rho^M} to
\begin{equation} \label{eq:trace_rho_0_rho^M_pure}
\begin{split}
    \operatorname{Tr}\left[\rho_{0}\rho^{m}\right] 
    &= \sum_{i=0}^{m}\left(1-p\right)^{m-i}p^{i} {}_{m}C_{i} F_{i}
    = \sum_{i=0}^{m}\left(1-p\right)^{m-i}\left(pF_{1}\right)^{i} {}_{m}C_{i} \\
    &= \left(1-p+pF_{1}\right)^{m}
    = \left(1-p\left(1-F_{1}\right)\right)^{m} 
    = \left(1-p\left(1-\lambda_{0}\right)\right)^{m}
    = \left(1-pI_{1}\right)^{m},
\end{split}
\end{equation}
where we also defined the infidelity between $\rho_{0}$ and $\sigma$ as $I_{1}=1-F_{1}$.
Note that 
\begin{equation}
\begin{split}
    \operatorname{Tr}\left[\rho^{m}\right]
    &= \operatorname{Tr}\left[\left(\left(1-p\right) \rho_{0} + p\sigma\right)^{m}\right]
    = \operatorname{Tr}\left[\sum_{i=0}^{m} {}_{m}C_{i} \left(1-p\right)^{m-i} p^{i} \rho_{0}^{m-i} \sigma^{i}\right] \\
    &= \sum_{i=0}^{m} {}_{m}C_{i} \left(1-p\right)^{m-i} p^{i} \operatorname{Tr}\left[\rho_{0} \sigma^{i}\right] + p^{m}\left(\operatorname{Tr}\left[\sigma^{m}\right]-\operatorname{Tr}\left[\rho_{0} \sigma^{m}\right]\right) \\
    &= \left(1-p\left(1-F_{1}\right)\right)^{m} + p^{m}\left(\operatorname{Tr}\left[\sigma^{m}\right]-F_{1}^{m}\right) \\
    &= \left(1-p\left(1-\lambda_{0}\right)\right)^{m} + \sum_{i>0}\left(p\lambda_{i}\right)^{m}.
\end{split}
\end{equation}
Hence, replacing $\operatorname{Tr}\left[\rho_{0}\rho^{m}\right]$ in Eq.~\eqref{eq:F_tilde} with Eq.~\eqref{eq:trace_rho_0_rho^M_pure} gives 
\begin{equation} \label{eq:F_tilde_pure}
\begin{split}
    \tilde{F} 
    &= \frac { F_{1} + \left(m-1\right)\operatorname{Tr}\left[\rho_{0}\rho^{m}\right] } 
             { 1 + \left(m-1\right)\operatorname{Tr}\left[\rho^{m}\right] }
    = \frac { \displaystyle \lambda_{0} + \left(m-1\right)\left(1-p\left(1-\lambda_{0}\right)\right)^{m} }
            { \displaystyle 1 + \left(m-1\right)\left(1-p\left(1-\lambda_{0}\right)\right)^{m} + (m-1)\sum_{i>0}\left(p\lambda_{i}\right)^{m} }.
\end{split}
\end{equation}


\end{document}